\documentclass[12pt]{article}
\usepackage[margin=1in]{geometry}
\usepackage{lipsum}
\usepackage{amsmath}
\usepackage{mathrsfs}
\usepackage{multirow}
\usepackage{amsthm}
\usepackage{amssymb}
\usepackage{amsfonts}
\usepackage{graphicx}
\usepackage{longtable}
\usepackage{enumerate}
\usepackage{enumitem}
\usepackage{natbib}
\usepackage{authblk}
\usepackage{threeparttable}
\usepackage{bm}
\usepackage{url}
\usepackage{caption}
\usepackage{adjustbox}
\usepackage{indentfirst}
\usepackage{tikz}
\usepackage{pdfpages}
\usepackage{booktabs}
\usepackage{afterpage}
\usepackage{subfigure}
\captionsetup[table]{skip=0pt,singlelinecheck=off}
\newcommand{\blind}{1}

\newcommand{\R}{\mathbb{R}}
\newcommand{\norm}[1]{\left\lVert#1\right\rVert}

\newcommand{\Pn}{\mathbb{P}_n}
\newcommand{\bP}{\mathbb{P}}

\newcommand{\Gn}{\mathbb{G}_n}
\newcommand{\E}{\mathbb{E}}

\newcommand{\set}[1]{\left\{#1\right\}}
\newcommand{\setf}{\mathcal{F}}

\newcommand{\setn}{\mathcal{N}}
\newcommand{\setg}{\mathcal{G}}
\newcommand{\setp}{\mathcal{P}}

\newcommand{\parenthese}[1]{\left(#1\right)}

\newcommand{\bbrace}[1]{\left\{#1\right\}}

\newcommand{\Mn}{\mathbb{M}_n}
\newcommand{\bM}{\mathbb{M}}
\DeclareMathOperator*{\argmin}{arg\,min}
\DeclareMathOperator*{\argmax}{arg\,max}
\newcommand{\vx}{\bm{X}}
\newcommand{\vz}{\bm{Z}}
\newcommand{\vv}{\bm{V}}
\newcommand{\vb}{\bm{\beta}}
\newcommand{\ve}{\bm{\eta}}
\newcommand{\abs}[1]{\left\vert#1\right\vert}

\newtheorem{theorem}{Theorem}

\newtheorem{lemma}{Lemma}

\author[1]{Junkai Yin}

\newcommand\CoAuthorMark{\footnotemark[\arabic{footnote}]}
\author[2]{Yue Zhang\protect\CoAuthorMark}
\author[2]{Zhangsheng Yu\thanks{ yuzhangsheng@sjtu.edu.cn; Corresponding author.}}
\affil[1]{%
	Department of Statistics, Shanghai Jiao Tong University, Shanghai
	200240, PR China}
\affil[2]{%
	Department of Bioinformatics and Biostatistics, Shanghai Jiao Tong University, Shanghai
	200240, PR China}

\begin{document}

\def\spacingset#1{\renewcommand{\baselinestretch}%
{#1}\small\normalsize} \spacingset{1}

\if1\blind
{
    \title{\bf Deep partially linear transformation model for right-censored survival data}

\date{}
\maketitle
} \fi

\bigskip
\begin{abstract} 
Although the Cox proportional hazards model is well established and extensively used in the analysis of survival data, the proportional hazards (PH) assumption may not always hold in practical scenarios. The class of semiparametric transformation models extends the Cox model and also includes many other survival models as special cases. This paper introduces a deep partially linear transformation model (DPLTM) as a general and flexible regression framework for right-censored data. The proposed method is capable of avoiding the curse of dimensionality while still retaining the interpretability of some covariates of interest. We derive the overall convergence rate of the maximum likelihood estimators, the minimax lower bound of the nonparametric deep neural network (DNN) estimator, and the asymptotic normality and the semiparametric efficiency of the parametric estimator. Comprehensive simulation studies demonstrate the impressive performance of the proposed estimation procedure in terms of both the estimation accuracy and the predictive power, which is further validated by an application to a real-world dataset.
\end{abstract}

\noindent%
{\it Keywords:}  Deep learning; Minimax lower bound; Monotone splines; Partially linear transformation models; Semiparametric efficiency.
\vfill

\newpage
\spacingset{2.0}
\section{Introduction}
\label{s:intro}

The Cox proportional hazards model \citep{Cox1972} is by far one of the most common methods in survival analysis. However, it assumes proportional hazards for individuals, which may be too simplistic and often violated in practice. An example is the acquired immune deficiency syndrome (AIDS) data assembled by the U.S. Center for Disease Control, which includes 295 blood transfusion patients diagnosed with AIDS prior to July 1, 1986. One primary interest is to explore the effect of age at transfusion on the induction time, but \citet{grigoletto1999analysis} revealed that the PH assumption fails on this dataset even with the use of the reverse time PH model. The class of semiparametric transformation models emerges as a more general and flexible alternative that requires no prior assumption and has recently received tremendous attention. Most of the frequently employed survival models can be viewed as specific cases of transformation models, including the Cox proportional hazards model, the proportional odds model \citep{bennett1983analysis}, the accelerated failure time (AFT) model \citep{wei1992accelerated} and the usual Box-Cox model. Multiple estimation procedures have been thoroughly discussed for transformation models with right-censored data \citep{chen2002semiparametric}, current status data \citep{zhang2013efficient}, interval-censored data \citep{zeng2016maximum}, competing risk data \citep{fine1999analysing} and recurrent event data \citep{zeng2007semiparametric}.

Linear transformation models allow the interpretation of all covariate effects, but one limitation is that the linearity assumption is sometimes too unrealistic for complicated relationships in the real world. For instance, in the New York University Women's Health Study (NYUWHS), a question of our interest is whether the time of developing breast carcinoma is influenced by the sex hormone levels, and a strongly nonlinear relationship between them is identified by \citet{zeleniuch2004postmenopausal}. To accommodate linear and nonlinear covariate effects simultaneously, partially linear transformation models were developed \citep{ma2005penalized, lu2010estimation} and later generalized to the case with varying coefficients \citep{li2019estimation, al2022efficient}. Nevertheless, these works either only consider the simple case of univariate nonlinear effects, or assume the nonparametric effects to be additive, both of which are often inconsistent with the reality.

Public health and clinical studies in the age of big data have benefited substantially from large-scale biomedical research resources such as UK Biobank and the Surveillance, Epidemiology, and End Results (SEER) Program. Such databases often contain dozens of or even more covariates of interest to be handled simultaneously. Much important information would be left out if data from these sources are fitted by the simple linear or partially linear additive model. Recently, deep learning has rapidly evolved into a dominant and promising method in a wide range of sectors involving high-dimensional data, such as computer vision \citep{krizhevsky2012imagenet}, natural language processing \citep{collobert2011natural} and finance \citep{heaton2017deep}. Deep neural networks have also brought about significant advancements in survival analysis. They have been combined with a variety of survival models like the Cox proportional hazards model \citep{katzman2018deepsurv, Zhong2022}, the cause-specific model for competing risk data \citep{lee2018deephit}, the cure rate model \citep{xie2021promotion} and the accelerated failure time model \citep{norman2024deepaft}.

Statistical theory of deep learning associates its empirical success with its strong capability to approximate functions from specific spaces \citep{yarotsky2017error, SchmidtHieber2020}. Inspired by this, \citet{Zhong2022} considered DNNs for estimation in a partially linear Cox model, and developed a general theoretical framework to study the asymptotic properties of the partial likelihood estimators. This pioneering work has been extended to the cases of current status data \citep{wu2024deep} and interval-censored data \citep{du2024deep}. Moreover, \citet{sun2024penalized} proposed a penalized deep partially linear Cox model to simultaneously identify important features and model their effects on the survival outcome, with an application to lung cancer imaging. \citet{su2024deep} developed a DNN-based, model-free approach to estimate the conditional hazard function and carried out hypothesis tests to make inference on it. \citet{wu2023neural} and \citet{zeng2025tdcoxsnn} considered frailty and time-dependent covariates in the application of deep learning to survival analysis, respectively.

In this paper, we propose a deep partially linear transformation model for highly complex right-censored survival data. Some covariates of our primary interest are modelled linearly to keep their interpretability, while other covariate effects are approached by a deep ReLU network to alleviate the curse of dimensionality. The overall convergence rate of the estimators given by maximizing the log likelihood function is free of the nonparametric covariate dimension under proper conditions and faster than those derived using traditional smoothing methods like kernels or splines. Additionally, the parametric and nonparametric estimators are proved to be semiparametric efficient and minimax rate-optimal, respectively.

The rest of the paper is organized as follows. In Section~\ref{s:model}, we introduce the framework of our proposed method and the sieve maximum likelihood estimation procedure based on deep neural networks and monotone splines. Section~\ref{s:theory} is devoted to establishing the asymptotic properties of the estimators. In Section~\ref{s:sim}, we conduct extensive simulation studies to examine the finite sample performance of the proposed method and compare it with other models. An application to a real-world dataset is provided in Section~\ref{s:app}. Section~\ref{s:discuss} concludes the paper. Detailed proofs of lemmas and theorems, computational details, additional numerical results and further experiments are given in the Appendix.

\section{Methodology}
\label{s:model}

\subsection{Likelihood function}

We consider a study of $n$ subjects with right-censored survival data, where the survival time and the censoring time are denoted by $U$ and $C$, respectively. $\vz$ is a $p$-dimensional covariate vector impacting on the survival time linearly, and $\vx$ is a $d$-dimensional covariate vector whose effect will be modelled nonparametrically. In the presence of censoring, the observations consist of $n$ i.i.d. copies $\{\vv_i=(T_i,\Delta_i,\vz_i,\vx_i),\ i=1,\cdots,n\}$ from $\vv=(T,\Delta,\vz,\vx)$, where $T=\min\left\{U,C\right\}$ is the observed event time and $\Delta=I(U\le C)$ is the censoring indicator, with $I(\cdot)$ being the indicator function. It is generally assumed in survival analysis that $U$ is independent of $C$ conditional on $(\vz,\vx)$.

To model the effects of the covariates $(\vz,\vx)\in\R^p\times\R^d$ on the survival time $U$, the partially linear transformation models specify that
\begin{equation}
\label{eq:model}
H(U)=-\vb^{\top}\vz-g(\vx)+\epsilon,
\end{equation}
where $H$ is an unknown transformation function assumed to be strictly increasing and continuously differentiable, $\bm{\beta}\in\R^p$ denotes the unspecified parametric coefficients and $g:\R^d\rightarrow \R$ is an unknown nonparametric function. To simplify our notation, we denote the parameters to be estimated by $\ve=(\vb,H,g)$, and assume that the joint distribution of $(\Delta,\vz,\vx)$ is free of $\ve$. $\epsilon$ is an error term with a completely known continuous distribution function that is independent of $(\vz,\vx)$. 

Many useful survival models are included in the class of partially linear transformation models as special cases. For example, (\ref{eq:model}) reduces to the partially linear Cox model or the partially linear proportional odds model when $\epsilon$ follows the extreme value distribution or the standard logistic distribution, respectively. If we choose $H(t)=\log t$, (\ref{eq:model}) serves as the partially linear accelerated failure time model. When $\epsilon$ follows the normal distribution and there is no censoring, (\ref{eq:model}) generalizes the partially linear Box-Cox model.

Let ($f_{\epsilon}$, $S_{\epsilon}$, $\lambda_{\epsilon}$, $\Lambda_{\epsilon}$) and ($f_U$, $S_U$, $\lambda_U$, $\Lambda_U$) be the probability density function, survival function, hazard function and cumulative hazard function of $\epsilon$ and $U$, respectively. Then it is straightforward to verify that
\begin{align*}
&f_U(t|\vz,\vx)=H^{\prime}(t)f_{\epsilon}(H(t)+\vb^{\top}\vz+g(\vx)),\ S_U(t|\vz,\vx)=S_{\epsilon}(H(t)+\vb^{\top}\vz+g(\vx)),\\
&\lambda_U(t|\vz,\vx)=H^{\prime}(t)\lambda_{\epsilon}(H(t)+\vb^{\top}\vz+g(\vx)),\ \Lambda_U(t|\vz,\vx)=\Lambda_{\epsilon}(H(t)+\vb^{\top}\vz+g(\vx)).
\end{align*}
Therefore, the observed information of a single object under model (\ref{eq:model}) can be expressed as
\begin{align*}
\mathcal{L}(\vv)&= \left\{f_U(T|\vz,\vx)\right\}^{\Delta}\left\{S_U(T|\vz,\vx)\right\}^{1-\Delta}q(\Delta,\vz,\vx)\\
&=\left\{\lambda_U(T|\vz,\vx)\right\}^{\Delta}\exp\left\{-\Lambda_U(T|\vz,\vx)\right\}q(\Delta,\vz,\vx)\\
&=\left\{H^{\prime}(T)\lambda_{\epsilon}(H(T)+\vb^{\top}\vz+g(\vx))\right\}^{\Delta}\exp\left\{-\Lambda_{\epsilon}(H(T)+\vb^{\top}\vz+g(\vx))\right\}q(\Delta,\vz,\vx),
\end{align*}
where $q(\Delta,\vx,\vz)$ is the joint density of $(\Delta,\vx,\vz)$. Then the log likelihood function of $\ve=(\vb,H,g)$ given $\{\vv_i=(T_i,\Delta_i,\vz_i,\vx_i),\ i=1,\cdots,n\}$ can be written as
\begin{equation}
\label{eq:likelihood}
\begin{aligned}
L_n(\ve)=\sum_{i=1}^n\Big\{\Delta_i\log H^{\prime}(T_i)+\Delta_i\log\lambda_{\epsilon}(H(T_i)+&\vb^{\top}\vz_i+g(\vx_i))\\
&-\Lambda_{\epsilon}(H(T_i)+\vb^{\top}\vz_i+g(\vx_i))\Big\}.
\end{aligned}
\end{equation}

\subsection{Sieve maximum likelihood estimation}

To achieve a faster convergence rate of the maximum likelihood estimators, two different function spaces of growing capacity with respect to the sample size $n$ for the infinite-dimensional parameters $g$ and $H$ are chosen for the estimation procedure.

For the estimation of the nonparametric function $g$, we use a sparse deep ReLU network space with depth $K$, width vector $\bm{p}=(p_0,\cdots,p_{K+1})$, sparsity constraint $s$ and norm constraint $D$, which has been specified in \citet{SchmidtHieber2020} and \citet{Zhong2022} as
\begin{align*}
\mathcal{G}(K, \bm{p}, s, D)=&\Bigg\{g(\bm{x})=(W_K\sigma(\cdot)+v_K)\circ\cdots\circ (W_1\sigma(\cdot)+v_1)\circ (W_0\bm{x}+v_0):\R^{p_0}\mapsto \R^{p_{K+1}},\Big.\\
&\quad\Big. W_k\in\R^{p_{k+1}\times p_{k}},\ v_k\in\R^{p_{k+1}},\ \max\left\{\norm{W_k}_\infty,\norm{v_k}_\infty\right\}\leq 1\text{ for } k=0,\cdots, K,\Big.\\
&\quad\Big. \sum_{k=0}^{K}\parenthese{\norm{W_k}_0 + \norm{v_k}_0}\leq s,\ \norm{g}_\infty\leq D\Bigg\},
\end{align*} 
where $W_k$ and $v_k$ are the weight and bias of the $(k+1)$-th layer of the network, respectively, $\sigma(x)=\max\left\{x,0\right\}$ is the ReLU activation function operating component-wise on a vector, $\norm{\cdot}_0$ denotes the number of non-zero entries of a vector or matrix, and $\norm{\cdot}_\infty$ denotes the sup-norm of a vector, matrix or function.

To estimate the strictly increasing transformation function $H$, a monotone spline space is adopted. We assume that the support of the observed event time $T$ lies in a closed interval $[L_T,U_T]$ with $0<L_T<U_T<\tau$, where $\tau$ is the end time of the study, and partition the interval $[L_T,U_T]$ into $K_n+1$ sub-intervals with respect to the knot set 
\begin{align*}
\Upsilon=\left\{L_T=t_0<t_1<\cdots<t_{K_n+1}=U_T\right\},
\end{align*}
then we can construct $q_n=K_n+l$ B-spline basis functions $B_j(t),\ j=1,\cdots,q_n$ that are piecewise polynomials and span the space of polynomial splines $\mathcal{S}$ of order $l$ with $\Upsilon$. We set $K_n=O(n^{\nu})$ and $\underset{1\leq k\leq K_n+1}{\max}\vert t_k-t_{k-1}\vert=O(n^{-\nu})$ for some $0<\nu<1/2$ based on theoretical analysis, and $l\ge 3$ so that the spline function is at least continuously differentiable. Besides, by Theorem 5.9 of \citet{schumaker_2007}, it suffices to implement the monotone increasing restriction on the coefficients of B-spline basis functions to ensure the monotonicity of the spline function. Thus, we consider the following function space $\Psi$ which is a subset of $\mathcal{S}$:
\begin{align*}
\Psi=\left\{\sum_{j=1}^{q_n}\gamma_jB_j(t):-\infty<\gamma_1\leq\cdots\leq\gamma_{q_n}<\infty,\ t\in[L_T,U_T]\right\}.
\end{align*}
We denote the true value of $\ve=(\vb,H,g)$ by $\ve_0=(\vb_0,H_0,g_0)$, then $\ve_0$ is estimated by maximizing the log likelihood function (\ref{eq:likelihood}):
\begin{equation}
\widehat{\ve}=(\widehat{\vb},\widehat{H},\widehat{g})=\underset{(\vb,H,g)\in \R^p\times\Psi\times\mathcal{G}}{\argmax}L_n(\vb,H,g),
\label{eq:estimator}
\end{equation}
where $\mathcal{G}=\mathcal{G}(K, \bm{p}, s, \infty)$. However, it may be challenging to perform gradient-based optimization algorithms with the monotonicity constraint. We consider using a reparameterizaion approach with $\widetilde{\gamma}_1=\gamma_1$ and $\widetilde{\gamma}_j=\log(\gamma_j-\gamma_{j-1})$ for $2\le j\le q_n$ to enforce monotonicity, and then conduct optimization with respect to $\left\{\widetilde{\gamma}_j\right\}_{j=1}^{q_n}$ instead.

\section{Asymptotic properties}
\label{s:theory}

In this section, we describe the asymptotic properties of the log likelihood estimators in (\ref{eq:estimator}) under appropriate conditions. First, we impose some restrictions on the true nonparametric function $g_0$. Recall that a H\"older class of smooth functions with parameters $\alpha$, $M$ and domain $\mathbb{D}\subset\R^d$ is defined as  
\begin{align*}
\mathcal{H}_d^{\alpha}(\mathbb{D}, M)=\bbrace{g:\mathbb{D}\mapsto \R: \sum_{\bm{\kappa}:|\bm{\kappa}|<\alpha}\norm{\partial^{\bm{\kappa}} g}_{\infty}+\sum_{\bm{\kappa}:|\bm{\kappa}|=\lfloor\alpha\rfloor}\sup_{x,y\in \mathbb{D},x\ne y}\frac{|\partial^{\bm{\kappa}}g(x)-\partial^{\bm{\kappa}}g(y)|}{\norm{x-y}_{\infty}^{\alpha-\lfloor\alpha\rfloor}}\leq M},
\end{align*}
where $\partial^{\bm{\kappa}}:=\partial^{\kappa_1}\cdots\partial^{\kappa_d}$ with $\bm{\kappa}=(\kappa_1,\cdots,\kappa_d)$, and $\vert\bm{\kappa}\vert=\sum_{j=1}^d\kappa_j$. We further consider a composite smoothness function space that has been introduced in \citet{SchmidtHieber2020}:
\begin{align*}
\mathcal{H}(q,\bm{\alpha},\bm{d},\widetilde{\bm{d}}, M):=\Big\{g=&g_q\circ\cdots\circ g_0: g_i=(g_{i1},\cdots,g_{id_{i+1}})^\top\text{ and }\\
&g_{ij}\in \mathcal{H}^{\alpha_i}_{\widetilde{d}_i}([a_i,b_i]^{\widetilde{d}_i}, M) \text{, for some }|a_i|,|b_i|<M\Big\},
\end{align*}
where $\widetilde{\bm{d}}$ denotes the intrinsic dimension of the function in this space, with $\widetilde{d}_i$ being the maximal number of variables on which each of the $g_{ij}$ depends. The following composite function is an example with a relatively low intrinsic dimension:
\begin{align*}
g(x)=g_{21}\left(g_{11}\left(g_{01}\left(x_1,x_2\right),g_{02}\left(x_3,x_4\right)\right),g_{03}(x_5,x_6,x_7)\right),x\in[0,1]^7,
\end{align*}
where each $g_{ij}$ is three times continuously differentiable, then the smoothness $\bm{\alpha}=(3,3,3)$, the dimension $\bm{d}=(7,3,2,1)$ and the intrinsic dimension $\widetilde{\bm{d}}=(3,2,2)$. Furthermore, we denote $\widetilde{\alpha}_i=\alpha_i\prod_{k=i+1}^q(\alpha_k\wedge1)$ and $\delta_n=\max_{i=0,\cdots,q}n^{-\widetilde{\alpha}_i/(2\widetilde{\alpha}_i+\widetilde{d}_i)}$, and the following regularity assumptions are required to derive asymptotic properties: 
\begin{flushleft}
(C1) $K=O(\log n)$, $s=O(n\delta_n^2\log n)$ and $n\delta_n^2\lesssim \min (p_k)_{k=1,\cdots,K}\leq \max (p_k)_{k=1,\cdots,K}\lesssim n$.

(C2) The covariates $(\vz,\vx)$ take value in a bounded subset of $\R^{p+d}$ with joint probability density function bounded away from zero. Without loss of generality, we assume that the domain of $\vx$ is $[0,1]^d$. Moreover, the parameter $\vb_0$ lies in a compact subset of $\R^p$.

(C3) The nonparametric function $g_0$ lies in $\mathcal{H}_0=\{g\in\mathcal{H}(q,\bm{\alpha},\bm{d},\widetilde{\bm{d}}, M):\mathbb{E}\{g(\vx)\}=0\}$.

(C4) The $k$-th derivative of the transformation function $H_0$ is Lipschitz continuous on $[L_T,U_T]$ for any $k\ge 1$. Particularly, its first derivative is strictly positive on $[L_T,U_T]$.

(C5) The hazard function of the error term $\lambda_{\epsilon}$ is log-concave and twice continuously differentiable on $\R$. Besides, its first derivative is strictly positive on compact sets.

(C6) There is some constant $\xi>0$ such that $\mathbb{P}(\Delta=1\vert\vz,\vx)>\xi$ and $\mathbb{P}(U\ge\tau\vert\vz,\vx)>\xi$ almost surely with respect to the probability measure of $(\vz,\vx)$.

(C7) The sub-density $p(t,\bm{x},\Delta=1)$ of $(T,\vx,\Delta=1)$ is bounded away from zero and infinity on $[0,\tau]\times[0,1]^d$.

(C8) For some $k>1$, the $k$-th partial derivative of the sub-density $p(t,\bm{x},\bm{z},\Delta=1)$ of $(T,\vx,\vz,\Delta=1)$ with respect to $(t,\bm{x})$ exists and is bounded on $[0,\tau]\times[0,1]^d$. 
\end{flushleft}

Condition (C1) configures the structure of the function space $\mathcal{G}(K, \bm{p}, s, D)$ by specifying its hyperparameters which grow with the sample size. Condition (C2) is commonly used for semiparametric estimation in partially linear models. Condition (C3) yields the identifiability of the proposed model. Technical conditions (C4)-(C6) are utilized to establish the consistency and the convergence rate of the sieve maximum likelihood estimators. It is worth noting that the seemingly strong assumptions in Condition (C5) are satisfied by many familiar survival models such as the Cox proportional hazards model, the proportional odds model and the Box-Cox model. Condition (C7) guarantees the existence of the information bound for $\vb_0$. Condition (C8) establishes the asymptotic normality of $\widehat{\vb}$.

For any $\ve_1=(\vb_1,H_1,g_1)$ and $\ve_2=(\vb_2,H_2,g_2)$, define
\begin{align*}
d(\ve_1,\ve_2)=\left\{\Vert\vb_1-\vb_2\Vert^2+\Vert g_1-g_2\Vert^2_{L^2([0,1]^d)}+\Vert H_1-H_2\Vert^2_{\Psi}\right\}^{1/2}, 
\end{align*}
where $\Vert\vb_1-\vb_2\Vert^2=\sum_{i=1}^p(\beta_{i1}-\beta_{i2})^2$, $\Vert g_1-g_2\Vert^2_{L^2([0,1]^d)}=\mathbb{E}\left\{g_1(\vx)-g_2(\vx)\right\}^2$ and $\Vert H_1-H_2\Vert^2_{\Psi}=\E\left\{H_1(T)-H_2(T)\right\}^2+\E\left[\Delta\left\{H^{\prime}_1(T)-H^{\prime}_2(T)\right\}^2\right]$. With $\ve=(\vb,H,g)$ and $\vv=(T,\Delta,\vz,\vx)$, write $\phi_{\ve}(\vv)=H(T)+\vb^\top\vz+g(\vx)$, and then define
$$
\Phi_{\ve}(\vv)=\Delta\frac{\lambda_{\epsilon}^{\prime}(\phi_{\ve}(\vv))}{\lambda_{\epsilon}(\phi_{\ve}(\vv))}-\lambda_{\epsilon}(\phi_{\ve}(\vv)).
$$
Then we have the following theorems whose proofs are provided in the Appendix:

\begin{theorem}[\bf{Consistency and rate of convergence}]
\label{thm1}	
Suppose conditions (C1)–(C6) hold, and it holds that $(2w+1)^{-1}<\nu<(2w)^{-1}$ for some $w\geq 1$, then
\begin{align*}
d(\widehat{\ve},\ve_0)=O_p(\delta_n\log^2 n+n^{-w\nu}).
\end{align*}
\end{theorem}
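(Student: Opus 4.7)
My plan is to apply the standard sieve-MLE rate machinery (e.g., Theorem 3.4.1 of van der Vaart and Wellner) on the product sieve $\mathcal{E}_n = B\times\Psi\times\mathcal{G}$, where $B$ is the compact parameter set for $\vb$ afforded by Condition (C2). The rate will emerge from matching a deterministic approximation error against a stochastic modulus-of-continuity bound for the log-likelihood empirical process, provided the Kullback--Leibler divergence admits a quadratic lower bound in $d$. Consistency $d(\widehat\ve,\ve_0)\to 0$ in probability would first be obtained via a sieve Wald/Kullback--Leibler identifiability argument, so that the rate computation can be localized to a $d$-shrinking neighborhood of $\ve_0$.

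\textbf{Approximation and entropy.} The two pieces of the rate correspond to the two sieves. For the deep ReLU class $\mathcal{G}$, Theorem 1 of \citet{SchmidtHieber2020} combined with Conditions (C1) and (C3) produces $\widetilde g_n\in\mathcal{G}$ with $\|\widetilde g_n-g_0\|_\infty=O(\delta_n\log^2 n)$. For the monotone B-spline class $\Psi$, the smoothness of $H_0$ from (C4) together with knot spacing $O(n^{-\nu})$ and order $l\geq w+1$ produces, via standard B-spline theory (Theorem 6.27 of \citet{schumaker_2007}) and the coefficient-monotonicity reformulation (Theorem 5.9 there), an element $\widetilde H_n\in\Psi$ with $\|\widetilde H_n-H_0\|_\infty+\|\widetilde H_n'-H_0'\|_\infty=O(n^{-w\nu})$. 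On the entropy side, Lemma 5 of \citet{SchmidtHieber2020} gives $\log N(\varepsilon,\mathcal{G},\|\cdot\|_\infty)\lesssim s\log(n/\varepsilon)$ and standard spline theory gives $\log N(\varepsilon,\Psi,\|\cdot\|_\infty)\lesssim q_n\log(1/\varepsilon)$; combined with a $p$-dimensional ball for $\vb$, this yields a bracketing entropy for the log-likelihood class of order $(s+q_n)\log(n/\varepsilon)$ up to constants.

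\textbf{KL lower bound and wrap-up.} A two-step Taylor expansion of $m_\ve - m_{\ve_0}$ (with $m_\ve=\log p_\ve$) in $\phi_\ve-\phi_{\ve_0}$ and in $\log H'-\log H_0'$, using log-concavity and smoothness of $\lambda_\epsilon$ from (C5), boundedness of covariates from (C2), strict positivity of $H_0'$ from (C4), and positive censoring/survival probabilities from (C6), yields a uniformly positive-definite quadratic form and hence $K(\ve_0,\ve)\gtrsim d(\ve,\ve_0)^2$ in a $d$-neighborhood of $\ve_0$. Plugging the bracketing entropy into Lemma 3.4.2 of van der Vaart and Wellner and solving $n^{-1/2}\phi_n(\delta_\ast)\asymp\delta_\ast^2$ gives a stochastic rate of order $\sqrt{(s+q_n)\log n/n}$, which under (C1) and $\nu<(2w)^{-1}$ is dominated by the deterministic approximation rate $\delta_n\log^2 n+n^{-w\nu}$, yielding the claim. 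The most delicate step is securing the $\E[\Delta\{H'(T)-H_0'(T)\}^2]$ piece of the lower bound: this requires pairing the $\Delta\log H'(T)$ term of (\ref{eq:likelihood}) with its second-order expansion and invoking (C6) to keep the $\Delta=1$ mass bounded below. A secondary difficulty is the bookkeeping of logarithmic factors, so that the DNN entropy and approximation contributions combine to at most $\log^2 n$ rather than a larger power.
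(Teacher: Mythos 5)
Your proposal follows essentially the same route as the paper's proof: a quadratic lower bound $\bM(\ve)-\bM(\ve_0)\asymp -d^2(\ve,\ve_0)$ via Taylor expansion and log-concavity of $\lambda_\epsilon$ (the paper's Lemma 2), bracketing-entropy control of the empirical process of order $(s+q_n)\log(L/\varepsilon)$ feeding into Lemma 3.4.2 of van der Vaart and Wellner (the paper's Lemma 3), approximation of $g_0$ by Theorem 1 of \citet{SchmidtHieber2020} and of $H_0$ by monotone splines, and then consistency followed by Theorem 3.4.2 to balance the stochastic and approximation terms. The only differences are cosmetic --- the paper cites Lemma A1 of \citet{lu2007estimation} rather than Schumaker's approximation theorem for the spline step, and works first with a norm-constrained maximizer $\widehat{\ve}_D$ before removing the constraint --- so your argument is correct and matches the paper's.
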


Therefore, the proposed DNN-based method is able to mitigate the curse of dimensionality and enjoys a faster rate of convergence than traditional nonparametric smoothing methods such as kernels or splines when the intrinsic dimension $\widetilde{\bm{d}}$ is relatively low.

Furthermore, the minimax lower bound for the estimation of $g_0$ is presented below:

\begin{theorem}[\bf{Minimax lower bound}]
\label{thm2}	
Suppose conditions (C1)-(C6) hold. Define $\R^p_M=\left\{\vb\in\R^p:\Vert\vb\Vert\leq M\right\}$, then there exists a constant $0<c<\infty$, such that 
\begin{align*}
\underset{\widehat{g}}{\inf}\underset{(\vb_0,H_0,g_0)\in\R^p_M\times\Psi\times\mathcal{H}_0}{\sup}\mathbb{E}\left\{\widehat{g}(\vx)-g_0(\vx)\right\}^2\geq c\delta_n^2,
\end{align*}
where the infimum is taken over all possible estimators $\widehat{g}$ based on the observed data.
\end{theorem}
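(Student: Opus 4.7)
The plan is to adapt the multiple-hypothesis Fano reduction that underlies the nonparametric minimax lower bound in \citet{SchmidtHieber2020} to the transformation-model likelihood. First, I would shrink the supremum over $(\vb_0,H_0,g_0)\in\R^p_M\times\Psi\times\mathcal{H}_0$ by fixing $\vb_0=\bm{0}$ and choosing one convenient strictly increasing element $H_0\in\Psi$; any lower bound over this restricted set is automatically a lower bound over the full supremum. This reduces the task to a pure nonparametric estimation problem for $g_0\in\mathcal{H}_0$ with the remaining nuisance parts held fixed.

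Next, I would construct a $\delta_n$-packing of $\mathcal{H}_0$. Applying the bump-function construction for composite Hölder classes from \citet{SchmidtHieber2020}, one can produce a finite family $\{g_1,\ldots,g_N\}\subset\mathcal{H}(q,\bm{\alpha},\bm{d},\widetilde{\bm{d}},M)$ such that $\norm{g_j-g_k}_{L^2([0,1]^d)}\geq c_1\delta_n$ for all $j\neq k$, while $\log N\geq c_2 n\delta_n^2$. The zero-mean restriction defining $\mathcal{H}_0$ can be accommodated by subtracting a common constant from each bump, which preserves the pairwise $L^2$ separation and only inflates the Hölder norm by a negligible factor.

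Then I would bound the Kullback-Leibler divergence between the joint distributions $\bP_{\ve_j}^{\otimes n}$ and $\bP_{\ve_k}^{\otimes n}$, where $\ve_j=(\bm{0},H_0,g_j)$. Using the observed-data density $\mathcal{L}(\vv)$ derived in Section~\ref{s:model} and a second-order Taylor expansion of $\log\lambda_\epsilon(\phi_\ve)$ and $\Lambda_\epsilon(\phi_\ve)$ around $\phi_{\ve_j}$ in the direction $g_k-g_j$, together with conditions (C4)--(C7) providing uniform bounds for $\lambda_\epsilon$, $\lambda_\epsilon'/\lambda_\epsilon$ and $(\log\lambda_\epsilon)''$ on the bounded range of $\phi_\ve$, I expect to obtain
\begin{align*}
\mathrm{KL}\bracket{\bP_{\ve_j}^{\otimes n}\,\|\,\bP_{\ve_k}^{\otimes n}}\leq C\, n\, \E\set{g_j(\vx)-g_k(\vx)}^2 \lesssim n\delta_n^2.
\end{align*}
Combining this KL bound with the packing via Fano's inequality yields $\inf_{\widehat g}\max_j\bP_{\ve_j}(\norm{\widehat g-g_j}_{L^2}\geq c_1\delta_n/2)\geq 1/2$ for sufficiently large $n$, and Markov's inequality converts this testing lower bound into the $L^2$-risk lower bound $c\delta_n^2$ claimed in the theorem.

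The main obstacle will be the KL-divergence step: unlike the Gaussian white-noise model in \citet{SchmidtHieber2020}, the censored-data likelihood couples $g$ with $H_0$ and $\vb_0$ through $\phi_\ve$ in a nonlinear way, and the censoring indicator $\Delta$ enters multiplicatively with $\log\lambda_\epsilon$. Quadratic control of the log-likelihood increment therefore requires careful use of the boundedness and log-concavity of $\lambda_\epsilon$ from (C5), the covariate density lower bound from (C2), and the bounded sub-density from (C7), so that the KL divergence reduces to a constant times $n\norm{g_j-g_k}_{L^2}^2$ uniformly over the packing. Once this uniform control is secured, the argument is standard.
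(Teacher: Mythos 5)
Your overall skeleton matches the paper's proof: restrict the supremum, take the $\delta_n$-separated family $g^{(0)},\dots,g^{(N)}$ with $\log N\gtrsim n\delta_n^2$ from the proof of Theorem 3 in \citet{SchmidtHieber2020}, bound the $n$-fold Kullback--Leibler divergence by $C\,n\Vert g^{(j)}-g^{(k)}\Vert^2_{L^2([0,1]^d)}$ via the same quadratic expansion used for Lemma~\ref{lemma2}, and finish with Theorem 2.5 of \citet{tsybakov2009nonparametric} (a Fano-type bound) plus Markov's inequality. The KL step you flag as the main obstacle is handled in the paper exactly as you anticipate, ``by analogy to the proof of Lemma 2,'' i.e.\ the second-order Taylor expansion with (C4)--(C5) giving two-sided quadratic control of $\bM(\ve)-\bM(\ve_0)$, so that part of your plan is sound.

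The one place your sketch does not go through as written is the mean-zero constraint defining $\mathcal{H}_0$. Subtracting a \emph{common} constant from every packing element does not center them all, since the Varshamov--Gilbert-type perturbations have different means; and centering each $g^{(j)}$ by its own mean replaces the pairwise separation $\E\{g^{(j)}(\vx)-g^{(k)}(\vx)\}^2$ by $\mathrm{Var}\{g^{(j)}(\vx)-g^{(k)}(\vx)\}$, which is not automatically bounded below unless the bumps are arranged to be mean-zero under the (non-uniform) law of $\vx$. The paper sidesteps this entirely by exploiting the shift invariance of the transformation model: the likelihood depends on $(H,g)$ only through $H(T)+\vb^\top\vz+g(\vx)$, so $P_{(\vb,H_1,g_1)}\overset{d}{=}P_{(\vb,H_1+c^{\prime},g_1-c^{\prime})}$ with $c^{\prime}=\E\{g_1(\vx)\}$, and since $\sum_{j}B_j\equiv 1$ the shifted transformation $H_1+c^{\prime}$ remains in $\Psi$. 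This reduces the minimax risk over $\R^p_M\times\Psi\times\mathcal{H}_0$ to the risk over the \emph{unconstrained} class $\mathcal{H}_1=\mathcal{H}(q,\bm{\alpha},\bm{d},\widetilde{\bm{d}},M/2)$, so no centering of the packing is needed at all. Either route can be made to work, but you should either adopt this reduction or explicitly construct your bumps to have mean zero with respect to the distribution of $\vx$; as stated, your centering step is the gap.
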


The next theorem gives the efficient score and the information bound for $\vb_0$.

\begin{theorem}[\bf{Efficient score and information bound}]
\label{thm3}
Suppose conditions (C2)-(C7) hold, then the efficient score for $\vb_0$ is 
\begin{align*}
\ell^{*}_{\vb}(\vv;\ve_0)=\left\{\vz-\bm{a}_{*}(T)-\bm{b}_*(\vx)\right\}\Phi_{\ve_0}(\vv)-\Delta\frac{\bm{a}_{*}^{\prime}(T)}{H^{\prime}_0(T)},
\end{align*}
where $(\bm{a}_{*}^{\top},\bm{b}_{*}^{\top})^\top\in\overline{\mathbb{T}}_{H_0}^p\times\overline{\mathbb{T}}_{g_0}^p$ is the least favorable direction minimizing
\begin{align*}
\mathbb{E}\left\{\left\Vert\left\{\vz-\bm{a}(T)-\bm{b}(\vx)\right\}\Phi_{\ve_0}(\vv)-\Delta\frac{\bm{a}^{\prime}(T)}{H^{\prime}_0(T)}\right\Vert^2_c\right\},
\end{align*}
with $\Vert\cdot\Vert^2_c$ denoting the component-wise square of a vector. The definitions of $\overline{\mathbb{T}}_{H_0}$ and $\overline{\mathbb{T}}_{g_0}$ are given in the Appendix. Moreover, the information bound for $\vb_0$ is
\begin{align*}
I(\bm{\beta}_0)=\mathbb{E}\left\{\ell^{*}_{\vb}(\vv;\ve_0)\right\}^{\bigotimes 2}.
\end{align*}
\end{theorem}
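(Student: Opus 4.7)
The plan is to derive the efficient score via the standard orthogonal projection characterization for semiparametric models. I first compute the parametric score by differentiating the single-observation log likelihood $\ell(\vv;\ve)=\Delta\log H'(T)+\Delta\log\lambda_{\epsilon}(\phi_{\ve}(\vv))-\Lambda_{\epsilon}(\phi_{\ve}(\vv))$ with respect to $\vb$, which yields $\ell_{\vb}(\vv;\ve_0)=\vz\,\Phi_{\ve_0}(\vv)$.

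Next, I introduce one-parameter submodels for the nuisance components: $H_t=H_0+t\,a$, where $a$ lies in a smooth space of functions on $[L_T,U_T]$ (monotonicity of $H_t$ is preserved for $|t|$ small by the strict positivity of $H_0'$ in (C4)), and $g_s=g_0+s\,b$, where $b\in L^2([0,1]^d)$ satisfies $\mathbb{E}\{b(\vx)\}=0$ as demanded by the identifiability constraint in (C3). Differentiating $\ell(\vv;\ve)$ along these paths at $t=s=0$ yields the score operators
\begin{align*}
\ell_{H}[a](\vv;\ve_0)&=a(T)\,\Phi_{\ve_0}(\vv)+\Delta\frac{a'(T)}{H_0'(T)},\\
\ell_{g}[b](\vv;\ve_0)&=b(\vx)\,\Phi_{\ve_0}(\vv).
\end{align*}
Let $\mathbb{T}_{H_0}$ and $\mathbb{T}_{g_0}$ be the admissible direction sets, and $\overline{\mathbb{T}}_{H_0}$, $\overline{\mathbb{T}}_{g_0}$ their $L^2(\bP_{\ve_0})$-closures in the relevant Sobolev sense (as detailed in the Appendix); the full nuisance tangent space is $\mathcal{T}=\{\ell_H[a]+\ell_g[b]:a\in\overline{\mathbb{T}}_{H_0},\,b\in\overline{\mathbb{T}}_{g_0}\}$.

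By the Hilbert space projection theorem applied coordinate-wise to $\vb$, the efficient score is the residual $\ell^{*}_{\vb}=\ell_{\vb}-\Pi(\ell_{\vb}\mid\mathcal{T})$, and this projection is realized by the pair $(\bm{a}_*,\bm{b}_*)$ of vector-valued functions minimizing
\begin{align*}
\mathbb{E}\left\Vert\{\vz-\bm{a}(T)-\bm{b}(\vx)\}\Phi_{\ve_0}(\vv)-\Delta\frac{\bm{a}'(T)}{H_0'(T)}\right\Vert^2_c
\end{align*}
over $(\bm{a}^\top,\bm{b}^\top)^\top\in\overline{\mathbb{T}}_{H_0}^p\times\overline{\mathbb{T}}_{g_0}^p$. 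Substituting $(\bm{a}_*,\bm{b}_*)$ back into $\ell_{\vb}-\ell_H[\bm{a}_*]-\ell_g[\bm{b}_*]$ reproduces exactly the expression for $\ell^{*}_{\vb}(\vv;\ve_0)$ in the theorem, and the information bound $I(\vb_0)=\mathbb{E}\{\ell^{*}_{\vb}(\vv;\ve_0)\}^{\bigotimes 2}$ then follows from the standard semiparametric information identity.

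The main obstacle will be verifying that $\mathcal{T}$ is a closed subspace of $L^2(\bP_{\ve_0})$ so that the infimum in the variational problem is attained, and that the resulting $I(\vb_0)$ is positive definite. The coupling between $\bm{a}$ and $\bm{a}'$ inside $\ell_H[\bm{a}]$ forces the use of a Sobolev-type norm on $\bm{a}$, consistent with the norm $\Vert H_1-H_2\Vert_{\Psi}$ introduced before Theorem~\ref{thm1}, and the closedness then hinges on a coercivity estimate lower-bounding the quadratic form on $\mathcal{T}$ by a norm that controls $\bm{a}$, $\bm{a}'$, and $\bm{b}$ simultaneously. Condition (C7), together with the strict positivity of $\lambda_{\epsilon}'$ and the log-concavity assumption in (C5), supplies the required non-degeneracy of $\Phi_{\ve_0}$ on the uncensored stratum $\{\Delta=1\}$; combined with (C2) which prevents $\vz$ from being a deterministic function of $(T,\vx)$, this yields both the existence of the minimizer $(\bm{a}_*,\bm{b}_*)$ and the invertibility of $I(\vb_0)$.
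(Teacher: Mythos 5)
Your proposal is correct and follows essentially the same route as the paper: compute the parametric score $\vz\,\Phi_{\ve_0}(\vv)$ and the score operators $\dot{\ell}_H[a]$, $\dot{\ell}_g[b]$ along nuisance submodels, then characterize the efficient score as the residual of the $L^2$-projection onto the nuisance tangent sumspace, equivalently via the minimizer $(\bm{a}_*,\bm{b}_*)$ of the stated quadratic criterion. The only difference is cosmetic (you use linear perturbation paths where the paper uses general differentiable subfamilies, and you flag the existence of the minimizer as an obstacle where the paper disposes of it by citing Lemma 1 of Stone (1985) and Appendix A.4 of Bickel et al. (1993)), so no substantive gap remains.
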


The last theorem states that, though the overall convergence rate is slower than $n^{-1/2}$, we can still derive the asymptotic normality of $\widehat{\vb}$ with $\sqrt{n}$-consistency.

\begin{theorem}[\bf{Asymptotic Normality}]
\label{thm4}

Suppose conditions (C1)-(C8) hold. If $(2w+1)^{-1}<\nu<(2w)^{-1}$ for some $w\geq 1$, $I(\bm{\beta}_0)$ is nonsingular and $n\delta_n^4\rightarrow 0$, then 
\begin{align*}
\sqrt n (\widehat{\bm{\beta}}-\bm{\beta}_0)=n^{-1/2}I(\vb_0)^{-1}\sum_{i=1}^n\ell^{*}_{\vb}(\vv_i;\ve_0)+o_p(1)\overset{d}{\rightarrow} N(0, I(\bm{\beta}_0)^{-1}).
\end{align*}
\end{theorem}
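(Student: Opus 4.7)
The plan is to follow the standard framework for asymptotic normality of sieve maximum likelihood estimators in semiparametric models, adapted from the approach developed in \citet{Zhong2022} for the partially linear Cox setting. The central idea is to exploit the first-order optimality of $\widehat{\ve}$ along a perturbation travelling in the least favorable direction, then combine this with a functional Taylor expansion and empirical process control to isolate the leading sum $n^{-1/2}\sum_{i=1}^n \ell^*_\vb(\vv_i;\ve_0)$.

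First, I would construct a parametric submodel through $\ve_0$. For $t\in\R^p$, define $\ve_t = (\vb + t, H - t^{\top} \bm{a}_*, g - t^{\top} \bm{b}_*)$. A direct calculation mirroring the derivation of Theorem~\ref{thm3} shows that the derivative of the single-observation log likelihood along this submodel at $t=0,\ve=\ve_0$ equals $\ell^*_\vb(\vv;\ve_0)$. Because $\widehat{\ve}$ lies in $\R^p\times\Psi\times\setg$, to form an admissible sieve perturbation I must replace $\bm{a}_*$ by a monotone spline approximation $\bm{a}_n\in\Psi^p$ and $\bm{b}_*$ by a DNN approximation $\bm{b}_n\in\setg^p$. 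Approximation theory for B-splines together with the smoothness of $\bm{a}_*$ (accessible through Condition~(C8) on the sub-density of $(T,\vx,\vz,\Delta=1)$) delivers $\Vert\bm{a}_n-\bm{a}_*\Vert_\infty=O(n^{-k\nu})$, and DNN approximation bounds of the type used in the proof of Theorem~\ref{thm1} give $\Vert\bm{b}_n-\bm{b}_*\Vert_\infty=O(\delta_n\log^2 n)$.

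Second, I would exploit the score equation in the sieve. Since $\widehat{\ve}$ maximizes $L_n$ over $\R^p\times\Psi\times\setg$, the directional derivative of $L_n$ at $\widehat{\ve}$ along the admissible perturbation built from $(\bm{a}_n,\bm{b}_n)$ vanishes, and decomposing the resulting equation yields
\begin{align*}
0=\Pn\ell^*_\vb(\ve_0;\vv)+(\Pn-P)\bracket{\ell^*_\vb(\widehat{\ve};\vv)-\ell^*_\vb(\ve_0;\vv)}+P\bracket{\ell^*_\vb(\widehat{\ve};\vv)-\ell^*_\vb(\ve_0;\vv)}+R_n,
\end{align*}
where $R_n$ absorbs the error from replacing $(\bm{a}_*,\bm{b}_*)$ by $(\bm{a}_n,\bm{b}_n)$ and is shown to be $o_p(n^{-1/2})$ using the approximation rates above together with a Cauchy--Schwarz bound against $d(\widehat{\ve},\ve_0)$ from Theorem~\ref{thm1}. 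A second-order Taylor expansion of the third term around $\ve_0$, together with the orthogonality of $\ell^*_\vb$ to nuisance score perturbations (the defining property of the least favorable direction from Theorem~\ref{thm3}), produces the linear contribution $-I(\vb_0)(\widehat{\vb}-\vb_0)$ plus quadratic remainders of order $O_p(d(\widehat{\ve},\ve_0)^2)=O_p(\delta_n^2\log^4 n+n^{-2w\nu})$, which is $o_p(n^{-1/2})$ under the assumption $n\delta_n^4\to 0$.

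The main obstacle will be controlling the empirical process remainder $(\Pn-P)\bracket{\ell^*_\vb(\widehat{\ve};\vv)-\ell^*_\vb(\ve_0;\vv)}$. I plan to verify stochastic equicontinuity over the class $\{\ell^*_\vb(\ve;\cdot):d(\ve,\ve_0)\le\delta\}$ by controlling its bracketing entropy: the DNN component is handled via the covering number estimates for $\setg(K,\bm{p},s,\infty)$ from \citet{SchmidtHieber2020}, and the spline component is handled via standard B-spline bracketing bounds together with the boundedness imposed by Conditions~(C2) and (C4)--(C5) on the relevant envelopes. Combined with $d(\widehat{\ve},\ve_0)\to 0$ at the rate from Theorem~\ref{thm1}, this yields an empirical process remainder of $o_p(n^{-1/2})$. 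Isolating $\widehat{\vb}-\vb_0$ from the resulting expansion and applying the central limit theorem to $n^{-1/2}\sum_{i=1}^n\ell^*_\vb(\vv_i;\ve_0)$, whose variance equals $I(\vb_0)$ by Theorem~\ref{thm3}, then delivers the stated asymptotic normality.
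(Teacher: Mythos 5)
Your proposal is correct and follows essentially the same route as the paper: perturbing $\widehat{\ve}$ along the least favorable direction projected into the sieve via spline/DNN approximations of $(\bm{a}_*,\bm{b}_*)$, controlling the resulting approximation and empirical-process remainders (the latter via a Donsker/equicontinuity argument), and extracting $-I(\vb_0)(\widehat{\vb}-\vb_0)$ from the population term through a Taylor expansion that uses the orthogonality of the efficient score to the nuisance tangent space. The only cosmetic difference is your exponent $n^{-k\nu}$ for the spline approximation where the paper works with $n^{-w\nu}$; the structure of the argument is otherwise identical.
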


\section{Simulation studies}
\label{s:sim}
We carry out simulation studies in this section to investigate the finite sample performance of the proposed DPLTM method, and compare it with the linear transformation model (LTM) \citep{chen2002semiparametric} and the partially linear additive transformation model (PLATM) \citep{lu2010estimation}. Computational details are presented in the Appendix. 

In all simulations, the linearly modelled covariates $\vz$ have two independent components, where the first is generated from a Bernoulli distribution with a success probability of 0.5, and the second follows a normal distribution with both mean and variance 0.5. The covariate vector with nonlinear effects $\vx$ is 5-dimensional and generated from a Gaussian copula with correlation coefficient 0.5. Each coordinate of $\vx$ is assumed to be uniformly distributed on $[0,2]$. We take the true treatment effect $\vb_0=(1,-1)$ and consider the following three designs for the true nonparametric function $g_0(\bm{x})$ with $\bm{x}\in[0,2]^5$:
\begin{itemize}
\item \textbf{Case 1 (Linear)}: $g_0(\bm{x})=0.25(x_1+2x_2+3x_3+4x_4+5x_5-15)$,
\item \textbf{Case 2 (Additive)}: $g_0(\bm{x})=2.5\big\{\sin(2 x_1)+\cos( x_2/2)/2+\log(x_3^2+1)/3+(x_4-x_4^3)/4+(e^{x_5}-1)/5-1.27\big\}$,
\item \textbf{Case 3 (Deep)}: $g_0(\bm{x})=2.45\big\{\sin(2 x_1x_2)+\cos( x_2x_3/2)/2+\log(x_3x_4+1)/3+(x_4-x_3x_4x_5)/4+(e^{x_5}-1)/5-1.16\big\}$.
\end{itemize}
The three cases correspond to LTM, PLATM and DPLTM respectively. The intercept terms -15, -1.27 and -1.16 impose the mean-zero constraint in Condition (C4) in each case respectively, and we subtract the sample mean from the estimates to force it in practice. The factors 0.25, 2.5 and 2.45 scale the signal ratio $\text{Var}\left\{g_0(\vx)\right\}/\text{Var}\left\{\vb_0^{\top}\vz\right\}$ within $[5,7]$.

The hazard function of the error term $\epsilon$ is set to be of the form $\lambda(t)=e^t/(1+re^t)$
with $r=0,0.5,1$, i.e. the error distribution is chosen from the class of logarithmic transformations \citep{dabrowska1988estimation}. Actually, $r=0$ and $r=1$ correspond to the proportional hazards model and the proportional odds model respectively. Note that all three candidates satisfy the condition (C5) in our theoretical analysis.

The true transformation function $H_0(t)$ is set respectively as $\log t$ for $r=0$, $\log(2e^{0.5t}-2)$ for $r=0.5$ and $\log(e^t-1)$ for $r=1$. Then we can generate the survival time $U$ via its distribution function $F_U(t)=F_{\epsilon}(H_0(t)+\vb_0^{\top}\vz+g_0(\vx))$ based on the inverse transform method. The censoring time $C$ is generated from a uniform distribution on $(0, c_0)$, where the constant $c_0$ is chosen to approximately achieve the prespecified censoring rate of 40\% and 60\% ($c_0=$2.95 or 0.85 for $r=0$, $c_0=$2.75 or 0.9 for $r=0.5$, $c_0=$2.55 or 1 for $r=1$, all kept the same 
across the three different cases of the underlying function $g_0(\bm{x})$).

We conduct 200 simulation runs under each setting with sample sizes $n=1000$ or 2000. Our observations consist of $\{\vv_i=(T_i,\Delta_i,\vz_i,\vx_i),\ i=1,\cdots,n\}$, where $T_i=\min\left\{U_i,C_i\right\}$ and $\Delta_i=I(U_i\le C_i)$. We randomly split the samples into training data (80\%) and validation data (20\%). We utilize the validation data to tune the hyperparameters, and then use the training data to fit models and obtain estimates. In addition, We generated $n_{\text{test}}=200$ or 400 test samples (corresponding to $n=1000$ or 2000 respectively) that are independent of the training samples for evaluation.

To estimate the asymptotic covariance matrix $I(\vb_0)^{-1}$ for inference, where $I(\vb_0)$ is the information bound, we first estimate the least favorable directions $(\bm{a}_{*},\bm{b}_*)$ by minimizing the empirical version of the objective function given in Theorem~\ref{thm3}:
\begin{align*}
(\widehat{\bm{a}}_{*},\widehat{\bm{b}}_*)=\underset{(\bm{a},\bm{b})}{\arg\min}\ \frac{1}{n}\sum_{i=1}^n\left\Vert\left\{\vz_i-\bm{a}(T_i)-\bm{b}(\vx_i)\right\}\Phi_{\widehat{\ve}}(\vv_i)-\Delta_i\frac{\bm{a}^{\prime}(T_i)}{\widehat{H}^{\prime}(T_i)}\right\Vert^2_c.
\end{align*}
Due to the absence of closed-form expressions, we use a spline function $\sum_{j=1}^{q_n}\upsilon_jB_j(t)$ to approach $\bm{a}_*$ to achieve smoothness, and approximate $\bm{b}_{*}$ with a DNN whose input and output are $\vx$ and $\bm{b}_{*}(\vx)$, respectively. The information bound can then be estimated by
\begin{align*}
\widehat{I}(\vb_0)=\frac{1}{n}\sum_{i=1}^n\Big[\left\{\vz_i-\widehat{\bm{a}}_{*}(T_i)-\widehat{\bm{b}}_*(\vx_i)\right\}\Phi_{\widehat{\ve}}(\vv_i)-\Delta_i\frac{\widehat{\bm{a}}^{\prime}_{*}(T_i)}{\widehat{H}^{\prime}(T_i)}\Big]^{\bigotimes 2}.
\end{align*}

For evaluation of the performance of $\widehat{g}$, we compute the relative error (RE) based on the test data, which is given by
\begin{align*}
\text{RE}(\widehat{g})=\left\{\frac{\frac{1}{n_{\text{test}}}\sum_{i=1}^{n_{\text{test}}}\left[\left\{\widehat{g}(\vx_i)-\overline{\widehat{g}}\right\}-g_0(\vx_i)\right]^2}{\frac{1}{n_{\text{test}}}\sum_{i=1}^{n_{\text{test}}}\{g_0(\vx_i)\}^2}\right\}^{1/2},
\end{align*}
where $\overline{\widehat{g}}=\sum_{i=1}^{n_{\text{test}}}\widehat{g}(\vx_i)/n_{\text{test}}$.

The bias and standard deviation of the parametric estimates $\widehat{\vb}$ derived from 200 simulation runs are presented in Table~\ref{t:table1}. It is easy to see that the proposed DPLTM method provides asymptotically unbiased estimates in all situations considered. The biases for DPLTM are sometimes slightly higher than those for LTM and PLATM under Case 1, and PLATM under Case 2 respectively, which is expected because these two cases are specifically designed for the linear and additive models, respectively. However, DPLTM greatly outperforms LTM and PLATM under Case 3 with a highly nonlinear true nonparametric function $g_0$, where the other two models are remarkably more biased than DPLTM and their performance does not improve with increasing sample size. Moreover, the empirical standard deviation decreases steadily as $n$ increases for all three models under each simulation setup.

Table~\ref{t:table2} lists the empirical coverage probability of 95\% confidence intervals built with the asymptotic variance of $\widehat{\vb}$ derived from the estimated information bound $\widehat{I}(\vb_0)$. It is clear that the coverage proportion of DPLTM is generally close to the nominal level of 95\%, while PLATM gives inferior results under Case 3 and LTM shows poor coverage under both Case 2 and Case 3 because of the large bias.

Table~\ref{t:table3} reports the relative error of the norparametric estimates $\widehat{g}$ averaged over 200 simulation runs and its standard deviation on the test data. Likewise, the DPLTM estimator shows consistently strong performance in all three cases, and the metric gets smaller as the sample size increases. In contrast, LTM and PLATM behave poorly when the underlying function does not coincide with their respective model assumptions, which implies that they are unable to provide accurate estimates of complex nonparametric functions.

In the Appendix, we evaluate the accuracy in estimating the transformation function $H$ and the predictive ability of the three methods using both discrimination and calibration metrics, and compare our method with the DPLCM method proposed by \citet{Zhong2022}. We also carry out two additional simulation studies to further validate the effectiveness and robustness of the DPLTM method across various configurations.

{\linespread{1.5}
\begin{table*}
\centering
\captionsetup{justification=justified}
\caption{The bias and standard deviation of $\widehat{\vb}$ for the DPLTM, LTM and PLATM methods.}
\label{t:table1}
\scalebox{0.65}{
\begin{adjustbox}{center}
\begin{threeparttable}
\begin{tabular}{ccccccccccccccc}
     \toprule
     & & & \multicolumn{6}{c}{$\beta_1$} & \multicolumn{6}{c}{$\beta_2$} \\
     \cmidrule(r){4-9}\cmidrule(r){10-15}
     & & & \multicolumn{3}{c}{40\% censoring rate} & \multicolumn{3}{c}{60\% censoring rate} & \multicolumn{3}{c}{40\% censoring rate} & \multicolumn{3}{c}{60\% censoring rate} \\
     \cmidrule(r){4-6}\cmidrule(r){7-9}\cmidrule(r){10-12}\cmidrule(r){13-15}
     & $r$ & $n$ & DPLTM & LTM & PLATM & DPLTM & LTM & PLATM & DPLTM & LTM & PLATM & DPLTM & LTM & PLATM \\    
     \midrule
     Case 1 & 0 & 1000 & -0.0112 & 0.0212 & 0.0354 & -0.0377 & 0.0017 & 0.0209 & -0.0222 & -0.0312 & -0.0463 & -0.0107 & -0.0251 & -0.0454 \\
     (Linear) & & & (0.1023) & (0.0948) & (0.0972) & (0.1260) & (0.1109) & (0.1160) & (0.0895) & (0.0960) & (0.0982) & (0.1073) & (0.1151) & (0.1171) \\
     & & 2000 & 0.0027 & 0.0208 & 0.0263 & -0.0061 & 0.0121 & 0.0206 & -0.0167 & -0.0228 & -0.0301 & -0.0049 & -0.0131 & -0.0233 \\
     & & & (0.0680) & (0.0538) & (0.0543) & (0.0745) & (0.0691) & (0.0703) & (0.0710) & (0.0608) & (0.0617) & (0.0856) & (0.0673) & (0.0688) \\
     & 0.5 & 1000 & -0.0067 & 0.0138 & 0.0226 & -0.0210 & 0.0003 & 0.0166 & -0.0251 & -0.0333 & -0.0450 & -0.0140 & -0.0293 & -0.0470 \\
     & & & (0.1355) & (0.1168) & (0.1200) & (0.1593) & (0.1327) & (0.1362) & (0.1143) & (0.1195) & (0.1208) & (0.1337) & (0.1383) & (0.1387) \\
     & & 2000 & -0.0041 & 0.0159 & 0.0201 & -0.0011 & 0.0085 & 0.0144 & -0.0215 & -0.0216 & -0.0270 & -0.0127 & -0.0162 & -0.0243 \\
     & & & (0.0871) & (0.0681) & (0.0682) & (0.0945) & (0.0814) & (0.0829) & (0.0875) & (0.0776) & (0.0788) & (0.1008) & (0.0841) & (0.0857) \\
     & 1 & 1000 & 0.0011 & 0.0088 & 0.0185 & -0.0266 & 0.0014 & 0.0139 & -0.0208 & -0.0341 & -0.0452 & -0.0171 & -0.0334 & -0.0493 \\
     & & & (0.1576) & (0.1335) & (0.1371) & (0.1818) & (0.1527) & (0.1567) & (0.1342) & (0.1330) & (0.1342) & (0.1511) & (0.1501) & (0.1489) \\
     & & 2000 & 0.0004 & 0.0109 & 0.0169 & -0.0052 & 0.0087 & 0.0155 & -0.0195 & -0.0198 & -0.0234 & -0.0137 & -0.0200 & -0.0264 \\
     & & & (0.1007) & (0.0816) & (0.0819) & (0.1092) & (0.0903) & (0.0912) & (0.1028) & (0.0899) & (0.0914) & (0.1087) & (0.0971) & (0.0990) \\
     \\
     Case 2 & 0 & 1000 & -0.0457 & -0.3388 & -0.0353 & -0.0445 & -0.2667 & -0.0363 & 0.0380 & 0.3442 & 0.0343 & 0.0306 & 0.2717 & 0.0296 \\
     (Additive) & & & (0.0909) & (0.0866) & (0.0939) & (0.1185) & (0.1072) & (0.1071) & (0.0955) & (0.0838) & (0.0912) & (0.1167) & (0.0939) & (0.1031) \\
     & & 2000 & -0.0354 & -0.3582 & -0.0195 & -0.0350 & -0.2917 & -0.0163 & 0.0348 & 0.3552 & 0.0199 & 0.0216 & 0.2882 & 0.0159 \\
     & & & (0.0691) & (0.0581) & (0.0664) & (0.0817) & (0.0701) & (0.0730) & (0.0687) & (0.0655) & (0.0614) & (0.0841) & (0.0788) & (0.0771) \\
     & 0.5 & 1000 & -0.0373 & -0.2252 & -0.0320 & -0.0503 & -0.1929 & -0.0307 & 0.0139 & 0.2326 & 0.0283 & 0.0212 & 0.2029 & 0.0259 \\
     & & & (0.1209) & (0.1127) & (0.1167) & (0.1506) & (0.1247) & (0.1257) & (0.1232) & (0.1008) & (0.1069) & (0.1490) & (0.1098) & (0.1196) \\
     & & 2000 & -0.0343 & -0.2452 & -0.0142 & -0.0448 & -0.2157 & -0.0105 & -0.0093 & 0.2395 & 0.0194 & 0.0190 & 0.2198 & 0.0139 \\
     & & & (0.0888) & (0.0669) & (0.0775) & (0.0999) & (0.0776) & (0.0862) & (0.0902) & (0.0775) & (0.0745) & (0.1037) & (0.0895) & (0.0904) \\
     & 1 & 1000 & -0.0347 & -0.1751 & -0.0322 & -0.0520 & -0.1678 & -0.0255 & 0.0273 & 0.1820 & 0.0272 & 0.0339 & 0.1729 & 0.0281 \\
     & & & (0.1437) & (0.1300) & (0.1304) & (0.1720) & (0.1413) & (0.1454) & (0.1493) & (0.1197) & (0.1257) & (0.1636) & (0.1279) & (0.1337) \\
     & & 2000 & -0.0307 & -0.1955 & -0.0113 & -0.0401 & -0.1823 & -0.0121 & 0.0084 & 0.1869 & 0.0188 & 0.0127 & 0.1774 & 0.0164 \\
     & & & (0.1034) & (0.0771) & (0.0869) & (0.1144) & (0.0863) & (0.0942) & (0.1020) & (0.0902) & (0.0856) & (0.1159) & (0.0981) & (0.0962) \\
     \\
     Case 3 & 0 & 1000 & -0.0395 & -0.4349 & -0.2653 & -0.0474 & -0.3549 & -0.2011 & 0.0466 & 0.4310 & 0.2641 & 0.0559 & 0.3474 & 0.1990 \\
     (Deep) & & & (0.1012) & (0.0841) & (0.0849) & (0.1239) & (0.0983) & (0.1006) & (0.0982) & (0.0876) & (0.0902) & (0.1186) & (0.1033) & (0.1051) \\
     & & 2000 & -0.0322 & -0.4424 & -0.2732 & -0.0286 & -0.3672 & -0.2144 & 0.0389 & 0.4527 & 0.2867 & 0.0406 & 0.3700 & 0.2212 \\
     & & & (0.0683) & (0.0579) & (0.0614) & (0.0833) & (0.0699) & (0.0730) & (0.0720) & (0.0543) & (0.0563) & (0.0828) & (0.0669) & (0.0679) \\
     & 0.5 & 1000 & -0.0457 & -0.3267 & -0.1875 & -0.0586 & -0.2799 & -0.1483 & 0.0409 & 0.3205 & 0.1850 & 0.0382 & 0.2782 & 0.1523 \\
     & & & (0.1293) & (0.1048) & (0.1044) & (0.1577) & (0.1198) & (0.1234) & (0.1242) & (0.1097) & (0.1110) & (0.1473) & (0.1161) & (0.1173) \\
     & & 2000 & -0.0350 & -0.3347 & -0.1972 & -0.0478 & -0.2965 & -0.1698 & 0.0265 & 0.3455 & 0.2086 & 0.0244 & 0.3003 & 0.1730 \\
     & & & (0.0896) & (0.0712) & (0.0735) & (0.1022) & (0.0820) & (0.0847) & (0.0924) & (0.0681) & (0.0685) & (0.1007) & (0.0748) & (0.0851) \\
     & 1 & 1000 & -0.0570 & -0.2600 & -0.1398 & -0.0463 & -0.2444 & -0.1268 & 0.0375 & 0.2529 & 0.1411 & 0.0438 & 0.2408 & 0.1291 \\
     & & & (0.1544) & (0.1217) & (0.1226) & (0.1764) & (0.1376) & (0.1420) & (0.1450) & (0.1269) & (0.1278) & (0.1680) & (0.1304) & (0.1327) \\
     & & 2000 & -0.0344 & -0.2707 & -0.1563 & -0.0378 & -0.2592 & -0.1476 & 0.0245 & 0.2801 & 0.1666 & 0.0299 & 0.2651 & 0.1524 \\
     & & & (0.1012) & (0.0813) & (0.0831) & (0.1138) & (0.0910) & (0.0944) & (0.1028) & (0.0802) & (0.0809) & (0.1140) & (0.0863) & (0.0865) \\
     \bottomrule
\end{tabular}
\end{threeparttable}
\end{adjustbox}
}
\end{table*}

\begin{table*}
\centering
\captionsetup{justification=justified}
\caption{The empirical coverage probability of 95\% confidence intervals for $\vb_0$ for the DPLTM, LTM and PLATM methods.}
\label{t:table2}
\scalebox{0.65}{
\begin{adjustbox}{center}
\begin{threeparttable}
\begin{tabular}{ccccccccccccccc}
     \toprule
     & & & \multicolumn{6}{c}{$\beta_1$} & \multicolumn{6}{c}{$\beta_2$} \\
     \cmidrule(r){4-9}\cmidrule(r){10-15}
     & & & \multicolumn{3}{c}{40\% censoring rate} & \multicolumn{3}{c}{60\% censoring rate} & \multicolumn{3}{c}{40\% censoring rate} & \multicolumn{3}{c}{60\% censoring rate} \\
     \cmidrule(r){4-6}\cmidrule(r){7-9}\cmidrule(r){10-12}\cmidrule(r){13-15}
     & $r$ & $n$ & DPLTM & LTM & PLATM & DPLTM & LTM & PLATM & DPLTM & LTM & PLATM & DPLTM & LTM & PLATM \\    
     \midrule
     Case 1 & 0 & 1000 & 0.950 & 0.950 & 0.925 & 0.960 & 0.945 & 0.940 & 0.945 & 0.965 & 0.935 & 0.965 & 0.960 & 0.920 \\
     (Linear) & & 2000 & 0.955 & 0.930 & 0.935 & 0.950 & 0.950 & 0.935 & 0.955 & 0.960 & 0.945 & 0.950 & 0.955 & 0.930 \\
     & 0.5 & 1000 & 0.945 & 0.960 & 0.945 & 0.965 & 0.945 & 0.940 & 0.970 & 0.970 & 0.930 & 0.950 & 0.975 & 0.930 \\
     & & 2000 & 0.955 & 0.940 & 0.925 & 0.940 & 0.960 & 0.935 & 0.960 & 0.960 & 0.945 & 0.950 & 0.960 & 0.935 \\
     & 1 & 1000 & 0.950 & 0.960 & 0.935 & 0.950 & 0.960 & 0.925 & 0.945 & 0.970 & 0.930 & 0.945 & 0.970 & 0.915 \\
     & & 2000 & 0.940 & 0.935 & 0.930 & 0.960 & 0.960 & 0.950 & 0.975 & 0.955 & 0.945 & 0.945 & 0.970 & 0.930 \\
     \\
     Case 2 & 0 & 1000 & 0.935 & 0.040 & 0.940 & 0.925 & 0.030 & 0.930 & 0.950 & 0.030 & 0.935 & 0.940 & 0.315 & 0.955 \\
     (Additive) & & 2000 & 0.945 & 0.000 & 0.955 & 0.930 & 0.035 & 0.945 & 0.940 & 0.000 & 0.940 & 0.960 & 0.050 & 0.965 \\
     & 0.5 & 1000 & 0.945 & 0.445 & 0.925 & 0.930 & 0.655 & 0.920 & 0.955 & 0.420 & 0.935 & 0.945 & 0.630 & 0.935 \\
     & & 2000 & 0.930 & 0.130 & 0.945 & 0.930 & 0.310 & 0.955 & 0.945 & 0.105 & 0.930 & 0.955 & 0.335 & 0.940 \\
     & 1 & 1000 & 0.960 & 0.705 & 0.915 & 0.940 & 0.770 & 0.925 & 0.940 & 0.700 & 0.915 & 0.950 & 0.770 & 0.925 \\
     & & 2000 & 0.930 & 0.380 & 0.950 & 0.950 & 0.500 & 0.955 & 0.955 & 0.395 & 0.935 & 0.945 & 0.535 & 0.945 
     \\
     \\
     Case 3 & 0 & 1000 & 0.925 & 0.000 & 0.160 & 0.955 & 0.065 & 0.540 & 0.935 & 0.000 & 0.150 & 0.915 & 0.080 & 0.545 \\
     (Deep) & & 2000 & 0.945 & 0.000 & 0.035 & 0.920 & 0.005 & 0.205 & 0.920 & 0.000 & 0.010 & 0.935 & 0.005 & 0.135 \\
     & 0.5 & 1000 & 0.925 & 0.100 & 0.610 & 0.915 & 0.390 & 0.755 & 0.935 & 0.155 & 0.595 & 0.935 & 0.405 & 0.780 \\
     & & 2000 & 0.920 & 0.015 & 0.245 & 0.920 & 0.105 & 0.460 & 0.925 & 0.010 & 0.205 & 0.915 & 0.050 & 0.505 \\
     & 1 & 1000 & 0.930 & 0.450 & 0.785 & 0.915 & 0.575 & 0.835 & 0.955 & 0.410 & 0.800 & 0.950 & 0.565 & 0.855 \\
     & & 2000 & 0.925 & 0.140 & 0.515 & 0.925 & 0.235 & 0.625 & 0.940 & 0.105 & 0.485 & 0.955 & 0.200 & 0.650 \\
     \bottomrule
\end{tabular}

\end{threeparttable}
\end{adjustbox}
}
\end{table*}

\begin{table*}
\centering
\captionsetup{justification=justified}
\caption{The average and standard deviation of the relative error of $\widehat{g}$ for the DPLTM, LTM and PLATM methods.}
\label{t:table3}
\scalebox{0.65}{
\begin{adjustbox}{center}
\setlength{\tabcolsep}{7mm}
\begin{threeparttable}
\begin{tabular}{ccccccccc}
     \toprule
     & & & \multicolumn{3}{c}{40\% censoring rate} & \multicolumn{3}{c}{60\% censoring rate} \\
     \cmidrule(r){4-6}\cmidrule(r){7-9}
     & $r$ & $n$ & DPLTM & LTM & PLATM & DPLTM & LTM & PLATM \\    
     \midrule
     Case 1 & 0 & 1000 & 0.1302 & 0.1532 & 0.0860 & 0.1434 & 0.1001 & 0.1999 \\
     (Linear) & & & (0.0406) & (0.0357) & (0.0346) & (0.0543) & (0.0333) & (0.0421)  \\
     & & 2000 & 0.0976 & 0.0654 & 0.1037 & 0.1078 & 0.0713 & 0.1370 \\
     & & & (0.0337) & (0.0252) & (0.0226) & (0.0415) & (0.0248) & (0.0295) \\
     & 0.5 & 1000 & 0.1389 & 0.1023 & 0.1796 & 0.1557 & 0.1106 & 0.2184 \\
     & & & (0.0376) & (0.0369) & (0.0365) & (0.0477) & (0.0347) & (0.0421) \\
     & & 2000 & 0.1045 & 0.0721 & 0.1196 & 0.1172 & 0.0788 & 0.1458 \\
     & & & (0.0284) & (0.0252) & (0.0230) & (0.0340) & (0.0255) & (0.0301) \\
     & 1 & 1000 & 0.1519 & 0.1113 & 0.2001 & 0.1623 & 0.1183 & 0.2307 \\
     & & & (0.0406) & (0.0379) & (0.0377) & (0.0450) & (0.0374) & (0.0434) \\
     & & 2000 & 0.1120 & 0.0774 & 0.1319 & 0.1236 & 0.0848 & 0.1535 \\
     & & & (0.0284) & (0.0257) & (0.0240) & (0.0351) & (0.0269) & (0.0315) \\
     \\
     Case 2 & 0 & 1000 & 0.2841 & 0.7841 & 0.1532 & 0.3358 & 0.7721 & 0.1971 \\
     (Additive) & & & (0.0538) & (0.0221) & (0.0367) & (0.0741) & (0.0248) & (0.0472) \\
     & & 2000 & 0.2367 & 0.7845 & 0.1066 & 0.2617 & 0.7729 & 0.1345 \\
     & & & (0.0311) & (0.0160) & (0.0243) & (0.0476) & (0.0179) & (0.0281) \\
     & 0.5 & 1000 & 0.3223 & 0.7526 & 0.1775 & 0.3589 & 0.7592 & 0.2206 \\
     & & & (0.0444) & (0.0253) & (0.0363) & (0.0846) & (0.0267) & (0.0490) \\
     & & 2000 & 0.2618 & 0.7518 & 0.1221 & 0.2881 & 0.7575 & 0.1501 \\
     & & & (0.0336) & (0.0182) & (0.0235) & (0.0543) & (0.0193) & (0.0307) \\
     & 1 & 1000 & 0.3415 & 0.7418 & 0.1994 & 0.3652 & 0.7503 & 0.2353 \\
     & & & (0.0459) & (0.0266) & (0.0376) & (0.0782) & (0.0275) & (0.0503) \\
     & & 2000 & 0.2811 & 0.7403 & 0.1353 & 0.3079 & 0.7479 & 0.1602 \\
     & & & (0.0354) & (0.0192) & (0.0260) & (0.0597) & (0.0198) & (0.0315) \\
     \\
     Case 3 & 0 & 1000 & 0.4069 & 0.9281 & 0.7108 & 0.4287 & 0.9309 & 0.7275 \\
     (Deep) & & & (0.0549) & (0.0177) & (0.0280) & (0.0759) & (0.0186) & (0.0302) \\
     & & 2000 & 0.3421 & 0.9277 & 0.7069 & 0.3672 & 0.9301 & 0.7200 \\
     & & & (0.0416) & (0.0123) & (0.0193) & (0.0593) & (0.0133) & (0.0204) \\
     & 0.5 & 1000 & 0.4032 & 0.9214 & 0.7012 & 0.4739 & 0.9264 & 0.7217 \\
     & & & (0.0596) & (0.0199) & (0.0302) & (0.0890) & (0.0204) & (0.0314) \\
     & & 2000 & 0.3590 & 0.9203 & 0.6946 & 0.4186 & 0.9251 & 0.7110 \\
     & & & (0.0437) & (0.0140) & (0.0206) & (0.0567) & (0.0145) & (0.0212) \\
     & 1 & 1000 & 0.4516 & 0.9185 & 0.7005 & 0.4835 & 0.9234 & 0.7178 \\
     & & & (0.0624) & (0.0214) & (0.0323) & (0.0851) & (0.0216) & (0.0325) \\
     & & 2000 & 0.3788 & 0.9167 & 0.6905 & 0.4390 & 0.9217 & 0.7043 \\
     & & & (0.0487) & (0.0151) & (0.0219) & (0.0559) & (0.0151) & (0.0222) \\
     \bottomrule
\end{tabular}

\end{threeparttable}
\end{adjustbox}
}
\end{table*}

\section{Application} 
\label{s:app}

In this section, we apply the proposed DPLTM method to real-world data to demonstrate its prominent performance. We analyze lung cancer data from the Surveillance, Epidemiology, and End Results (SEER) database. We select patients who were diagnosed with lung cancer in 2015, with the age between 18 and 85 years old, the survival time longer than one month and received treatment no more than 730 days (2 years) after diagnosis. Based on previous researches \citep{anggondowati2020impact, wang2022evaluation, zhang2023prognostic}, We extract 10 important covariates, including gender, marital status, primary cancer, separate tumor nodules in ipsilateral lung, chemotherapy, age, time from diagnosis to treatment in days, CS tumor size, CS extension and CS lymph nodes. Samples with any missing covariate are discarded, which results in a dataset consisting of 28950 subjects with a censoring rate of 25.63\%. The dataset is split into a training set, a validation set and a test set with a ratio of 64:16:20. All other computational details are the same as those in simulation studies.

The main purpose of our study is to assess the predictive performance of our DPLTM method while still allowing the interpretation of some covariate effects. For the five categorial variables (gender, marital status, primary cancer, separate tumor nodules in ipsilateral lung and chemotherapy) whose effects we are mainly interested in, we denote them by $\vz$ in model (\ref{eq:model}), while the remaining five covariates are treated as $\vx$. 

The candidates for the error distribution are the same as in simulation studies, i.e. the logarithmic transformations with $r=0,0.5,1$. To obtain more accurate results, we have to select the ``optimal" one from the three transformation models. We calculate the log likelihood values on the validation data under the three fitted models for the DPLTM method, which are -6618.40, -6469.49 and -6440.13 for $r$=0, 0.5 and 1, respectively. This suggests that the model with $r=1$ (i.e. the proportional odds model) provides the best fit for this dataset and is then used for parameter estimation and prediction.

We perform a hypothesis test for each linear coefficient to explore whether the corresponding covariate has a significant effect on the survival time. Specifically, we denote the coefficient of interest by $\beta$, then the null and alternative hypotheses are $H_0:\beta=0$ and $H_1:\beta\neq0$, respectively. The test statistic is defined as $Z=\widehat{\beta}/\widehat{\sigma}$, where $\widehat{\beta}$ and $\widehat{\sigma}$ are the estimated coefficient and the estimated standard error, respectively. It can be seen from Theorem \ref{thm4} that $Z$ asymptotically follows a standard normal distribution under the null hypothesis. Thus, we can compute the asymptotic $p$-value and decide whether to reject the null hypothesis for the usual significance level $\alpha=0.05$.

Estimated coefficients (EST), estimated standard errors (ESE), test statistics and asymptotic $p$-values of the linear component for the DPLTM method with $r=1$ are given in Table~\ref{t:table5}. It is clear that all linearly modelled covariates, except the one indicating whether it is a primary cancer, are statistically significant. To be specific, females, the married, patients without separate tumor nodules in ipsilateral lung and those who received chemotherapy after diagnosis have significantly longer survival times.

In the Appendix, we also assess the predictive power of the proposed DPLTM method on this dataset with two evaluation metrics, and compare it with other models, including several machine learning models. In summary, these results reveal that our method is more effective and robust on real-world data as well.

{\linespread{1.5}
\begin{table*}
\centering
\caption{Results of the linear component for the SEER lung cancer dataset for the DPLTM method.}
\label{t:table5}
\scalebox{0.8}{
\begin{adjustbox}{center}
\begin{threeparttable}
\begin{tabular}{ccccc}
\toprule
     Covariates & EST & ESE & Test statistic & $p$-value \\    
     \midrule
     Gender (Male=1) & 0.4343 & 0.0273 & 15.9084 & $<$0.0001 \\
     Marital status (Married=1) & -0.3224 & 0.0298 & -10.8188 & $<$0.0001 \\
     Primary cancer & -0.1125 & 0.0742 & -1.5162 & 0.1295 \\
     Separate tumor nodules in ipsilateral lung & 0.4392 & 0.0330 & 13.3091 & $<$0.0001\\
     Chemotherapy & -0.4690 & 0.0309 & -15.1780 & $<$0.0001\\
     \bottomrule
\end{tabular}
\end{threeparttable}
\end{adjustbox}
}
\end{table*}
}

\section{Discussion}
\label{s:discuss}

This paper introduces a DPLTM method for right-censored survival data. It combines deep neural networks with partially linear transformation models, which encompass a number of useful models as specific cases. Our method demonstrates outstanding predictive performance while maintaining good interpretability of the parametric component. The sieve maximum likelihood estimators converge at a rate that depends only on the intrinsic dimension. We also establish the asymptotic normality and the semiparametric efficiency of the estimated coefficients, and the minimax lower bound of the deep neural network estimator. Numerical results show that DPLTM not only significantly outperforms the simple linear and additive models, but also offers major improvements over other machine learning methods.

This paper has only focused on semiparametric transformation models for right-censored survival data. It is straightforward to extend our methodology to other survival models like the cure rate model \citep{kuk1992mixture, lu2004semiparametric}, and other types of survival data such as current status data and interval-censored data. Moreover, unstructured data, such as gene sequences and histopathological images, have provided new insights into survival analysis. It is thus of great importance to combine our methodology with more advanced deep learning architectures like deep convolutional neural networks \citep{lecun1989backpropagation}, deep residual networks \citep{he2016deep} and transformers \citep{vaswani2017attention}, and develop a more general theoretical framework. Besides, a potential limitation of this study is that the sparsity constraint on the DNN is not ensured in the numerical implementation, partly because it is demanding to know certain properties of the true model (e.g. smoothness and intrinsic dimension) in practice or train a DNN with a given sparsity constraint. \citet{ohn2022nonconvex} added a clipped $L^1$ penalty to the empirical risk and showed that the sparse penalized estimator can adaptively attain minimax convergence rates for various problems. It would be beneficial to apply this technique to our methodology.

\appendix
\setcounter{table}{0}
\makeatletter
\renewcommand{\thetable}{A\arabic{table}}
\renewcommand{\thefigure}{A\arabic{figure}}

\renewcommand{\thesection}{Appendix \Alph{section}}
\renewcommand{\thesubsection}{\Alph{section}.\arabic{subsection}}

\makeatletter
\renewcommand{\@seccntformat}[1]{%
  \ifcsname the#1\endcsname
    \csname the#1\endcsname\quad 
  \fi
}
\makeatother

\section{Technical proofs}
\label{s:proofs}

\subsection{Notations}
\label{s:notation}

We denote $a_n\lesssim b_n$ as $a_n\leq Cb_n$ and $a_n\gtrsim b_n$ as $a_n\geq Cb_n$ for some constant $C>0$ and any $n\geq1$, and $a_n\asymp b_n$ implies $a_n\lesssim b_n$ and $a_n\gtrsim b_n$. For some $D>0$, we define the norm-constrained parameter spaces $\R_D^p=\set{\vb\in\R^p:\Vert\vb\Vert\leq D}$, $\setg_D=\setg(K,s,\bm{p},D)$ and 
\begin{equation*}
\Psi_D=\set{\sum_{j=1}^{q_n}\gamma_jB_j(t):-D\leq\gamma_1\leq\cdots\leq\gamma_{q_n}\leq D,\ t\in[L_T,U_T]}.
\end{equation*}
For $\ve=(\vb,H,g)$ and $\vv=(T,\Delta,\vz,\vx)$, write $\ell_{\ve}(\vv)=\Delta\log H^\prime(T)+\Delta\log\lambda_{\epsilon}(\phi_{\ve}(\vv))-\Lambda_{\epsilon}(\phi_{\ve}(\vv))$ with $\phi_{\ve}(\vv)=H(T)+\vb^\top\vz+g(\vx)$. Furthermore, we denote by $\Pn$ and $\bP$ the empirical and probability measure of $(T_i,\Delta_i,\vz_i,\vx_i)$ and $(T,\Delta,\vz,\vx)$, respectively, and let $\Gn=\sqrt{n}(\Pn-\bP)$, $\Mn(\ve)=\Pn\ell_{\ve}(\vv)=\frac{1}{n}\sum_{i=1}^n\ell_{\ve}(\vv_i)$ and $\bM(\ve)=\bP\ell_{\ve}(\vv)=\E\ell_{\ve}(\vv)$. Therefore, it is easy to see that $L_n(\ve)=n\Mn(\ve)$ and $\widehat{\ve}=\underset{\ve\in \R^p\times\Psi\times\mathcal{G}}{\argmax}L_n(\ve)=\underset{\ve\in \R^p\times\Psi\times\mathcal{G}}{\argmax}\Mn(\ve)$.

\subsection{Key lemmas and proofs}
\label{s:lemma}

\begin{lemma}
\label{lemma1}
Define $\setf=\left\{\ell_{\ve}(\vv):\ve\in\R^p_D\times\Psi_D\times\setg_D\right\}$. Suppose conditions (C1)-(C6) hold, then $\setf$ is $\bP$-Glivenko-Cantelli for any $D>0$.
\end{lemma}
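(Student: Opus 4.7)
The plan is to verify the Glivenko--Cantelli property via a standard bracketing argument: establish a uniform $L^\infty$ Lipschitz bound for $\ell_{\ve}$ in the underlying parameters, bound the metric entropy of each of the three parameter spaces, then combine into a bracketing entropy bound for $\setf$ that grows slowly enough for the usual bracketing law of large numbers (e.g.\ Theorem 2.4.1 of van der Vaart and Wellner) to apply.

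First I would show that $\phi_{\ve}(\vv)=H(T)+\vb^\top\vz+g(\vx)$ takes values in a compact interval $[-M,M]$ uniformly over $\ve\in\R^p_D\times\Psi_D\times\setg_D$. This uses Condition (C2) (boundedness of $\vx,\vz$), the norm constraint $\|\vb\|\leq D$, the constraint $\|g\|_\infty\leq D$ built into $\setg_D$, and the fact that for $H(t)=\sum_j\gamma_j B_j(t)$ with $|\gamma_j|\leq D$ one has $|H(t)|\leq D$ since the B-spline basis forms a partition of unity. Combined with Condition (C5), which gives that $\log\lambda_{\epsilon}$ and $\Lambda_{\epsilon}$ are bounded and Lipschitz on $[-M,M]$, and the fact that $H'$ (a nonnegative spline of order $l-1$) is uniformly bounded above while $\log H'$ is Lipschitz on any interval bounded away from zero, I obtain the pointwise Lipschitz inequality
\begin{equation*}
|\ell_{\ve_1}(\vv)-\ell_{\ve_2}(\vv)|\leq L\bigl(\|\vb_1-\vb_2\|+\|g_1-g_2\|_\infty+\|H_1-H_2\|_\infty+\|H_1'-H_2'\|_\infty\bigr)
\end{equation*}
with $L$ depending only on $D$ and model constants.

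Next I would bound the sup-norm covering numbers of the three component spaces. For $\R^p_D$ the standard estimate gives $N(\eta,\R^p_D,\|\cdot\|)\leq(3D/\eta)^p$. For $\Psi_D$ the spline coefficient vector lies in a bounded box of $\R^{q_n}$, and classical B-spline theory converts an $\eta$-net in coefficient sup-norm into a sup-norm net for both $H$ and $H'$ (with an additional factor in $\eta$ controlled by the knot spacing $O(n^{-\nu})$), yielding $\log N\lesssim q_n\log(n/\eta)$. For the sparse ReLU class $\setg_D$ I would quote the covering bound from Schmidt-Hieber (2020) (also used in Zhong 2022): $\log N(\eta,\setg_D,\|\cdot\|_\infty)\lesssim s K\log\bigl((K\max_k p_k)/\eta\bigr)$. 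Plugging Condition (C1) into these bounds, both $s K\log(\cdot)$ and $q_n\log(\cdot)$ are polynomial in $n$. Through the Lipschitz inequality, combining these nets produces $L^\infty$ brackets (hence $L^1(\bP)$ brackets) for $\setf$ whose log-bracketing number at precision $\eta$ is $O\bigl((n\delta_n^2+n^\nu)\log^2 n+\log(1/\eta)\bigr)$.

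Finally, since $\ell_{\ve}$ is uniformly bounded, the bracketing GC theorem applies once the bracketing number is finite for each fixed $\eta$; for a sieve growing with $n$ one needs only that this logarithmic bracketing number be $o(n)$, which Condition (C1) secures (recall $n\delta_n^2=o(n)$ and $\nu<1/2$). This yields $\sup_{\ve\in\R^p_D\times\Psi_D\times\setg_D}|(\Pn-\bP)\ell_{\ve}|\to 0$ in outer probability, which is the desired Glivenko--Cantelli conclusion. The main obstacle I anticipate is controlling the term $\Delta\log H'(T)$ near spline configurations where $H'$ can be small: to make the Lipschitz step rigorous I would either restrict $\Psi_D$ to coefficient configurations with $\min_j(\gamma_j-\gamma_{j-1})$ bounded below by a fixed $\eta>0$ (justified because the true $H_0'$ is strictly positive by Condition (C4)) or truncate the log at $-\log\eta$ and show the truncation error has negligible $\bP$-mass; the remaining arguments then follow the blueprint above.
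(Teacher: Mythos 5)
Your proposal is correct and follows essentially the same route as the paper: component-wise entropy bounds for the parametric ball, the monotone spline space and its derivative space (via Schumaker/classical spline results), and the sparse ReLU class (via Schmidt-Hieber and Zhong et al.), combined through Lipschitz continuity of $\lambda_{\epsilon}$, $\Lambda_{\epsilon}$ and the logarithm on compact sets, and concluded with a bracketing Glivenko--Cantelli theorem. Your explicit concern about $\Delta\log H'(T)$ when $H'$ can be near zero is in fact handled more carefully in your write-up than in the paper, which silently invokes Lipschitz continuity of the logarithm "on compact sets" without verifying that $H'$ stays bounded away from zero over all of $\Psi_D$.
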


\begin{proof}
Because $\R^p_D$ is a compact subset of $\R^p$, it can be covered by $\lfloor C_0(1/\varepsilon)^d\rfloor$ balls with radius $\varepsilon$, where $C_0>0$ is a constant. Hence $\log\setn(\varepsilon,\set{\vb^{\top}\vz:\vb\in\R^p_D},L^1(\bP))\lesssim d\log(1/\varepsilon)$ since $\vz$ is bounded. According to the calculation in \citet{shen1994convergence}, we have 
\begin{align*}
\log\setn(\varepsilon,\set{H(T):H\in\Psi_D},L^1(\bP))\lesssim\log\setn_{\left[\ \right]}(2\varepsilon,\set{H(T):H\in\Psi_D},L^1(\bP))\lesssim q_n\log\frac{1}{\varepsilon}.
\end{align*}
Moreover, by Theorem 4.49 of \citet{schumaker_2007}, the derivative of a spline function of order $l$ belongs to the space of polynomial splines of order $l-1$. Hence, we obtain
\begin{align*}
\log\setn(\varepsilon,\set{H^{\prime}(T):H\in\Psi_D},L^1(\bP))\lesssim\log\setn_{\left[\ \right]}(2\varepsilon,\set{H^{\prime}(T):H\in\Psi_D},L^1(\bP))\lesssim q_n\log\frac{1}{\varepsilon}.
\end{align*}
Additionally, by Lemma 6 of \citet{Zhong2022}, 
\begin{align*}
\log\setn(\varepsilon,\set{g(\vx):g\in\setg_D},L^1(\bP))\lesssim s\log\frac{L}{\varepsilon}
\end{align*} 
where $L=K\prod_{k=0}^K(p_k+1)\sum_{k=0}^Kp_kp_{k+1}$. Due to the fact that $\lambda_{\epsilon}$, $\Lambda_{\epsilon}$ and the logarithmic function are Lipschitz continuous on compact sets, the claim of the lemma follows from Lemma 9.25 in \citet{kosorok2008} and Theorem 19.13 in \citet{van2000asymptotic}.
\end{proof}

\begin{lemma}
\label{lemma2}
Suppose conditions (C2)-(C6) hold, we have
\begin{align*}
\bM(\ve)-\bM(\ve_0)\asymp-d^2(\ve,\ve_0)
\end{align*}
for all $\ve\in\set{\ve:d(\ve,\ve_0)<c_0}$ with some small $c_0>0$.
\end{lemma}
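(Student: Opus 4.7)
The plan is to expand $\bM(\ve)-\bM(\ve_0)$ to second order along the segment $\ve_s = \ve_0 + s(\ve-\ve_0)$, $s\in[0,1]$, and then show that the resulting quadratic form is comparable to $d^2(\ve,\ve_0)$. Write $h = \ve-\ve_0 = (h_\beta,h_H,h_g)$ and $h_\phi(\vv) = h_\beta^\top\vz + h_H(T) + h_g(\vx)$. Since $\ve_0$ is the population maximizer of $\bM$ and the score has mean zero at the truth, $\dot\bM(\ve_0)[h] = 0$; Taylor's theorem with integral remainder then gives
\[
\bM(\ve) - \bM(\ve_0) = \int_0^1 (1-s)\,\ddot\bM(\ve_s)[h,h]\,ds,
\]
so it suffices to prove $-\ddot\bM(\ve_s)[h,h] \asymp d^2(\ve,\ve_0)$ uniformly for $s\in[0,1]$ whenever $d(\ve,\ve_0)<c_0$.

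Differentiating $\ell_{\ve_s}(\vv) = \Delta\log H'_s(T) + \Delta\log\lambda_\epsilon(\phi_{\ve_s}) - \Lambda_\epsilon(\phi_{\ve_s})$ twice in $s$ produces
\[
-\ddot\ell_{\ve_s}[h,h] = \Delta\,\frac{h'_H(T)^2}{H'_s(T)^2} + \bigl[\lambda'_\epsilon(\phi_{\ve_s}) - \Delta(\log\lambda_\epsilon)''(\phi_{\ve_s})\bigr]\,h_\phi(\vv)^2,
\]
both summands nonnegative because $\lambda_\epsilon$ is log-concave and $\lambda'_\epsilon>0$ by (C5). Conditions (C2) and (C4) keep $(T,\vz,\vx)$, and hence $\phi_{\ve_s}$, in a compact set, so the two bracketed coefficients are bounded above and below by positive constants uniformly in $s$. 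Taking expectations and expanding $h_\phi^2$ by $(a+b+c)^2 \le 3(a^2+b^2+c^2)$ yields the upper bound
\[
-\ddot\bM(\ve_s)[h,h] \lesssim \E[\Delta h'_H(T)^2] + \Vert h_\beta\Vert^2 + \E\,h_H(T)^2 + \E\,h_g(\vx)^2 = d^2(\ve,\ve_0).
\]

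For the matching lower bound, $\E[\Delta h'_H(T)^2/H'_s(T)^2] \gtrsim \E[\Delta h'_H(T)^2]$ since $H'_s$ is uniformly bounded above. The remaining task reduces to the identifiability-of-direct-sum inequality
\[
\E\,h_\phi(\vv)^2 \gtrsim \Vert h_\beta\Vert^2 + \E\,h_H(T)^2 + \E\,h_g(\vx)^2,
\]
which asserts that the three summands in $h_\phi$, depending only on $\vz$, $T$, and $\vx$ respectively, cannot mutually cancel in $L^2(\bP)$. This is the main obstacle. The argument hinges on (C2) (joint density of $(\vz,\vx)$ bounded below on a bounded set), (C3) (centering $\E g_0(\vx)=0$, which fixes the constant ambiguity between $g$ and $H$), and the boundedness of the sub-density of $(T,\vx)$ in (C7). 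I would proceed by contradiction: a normalized violating sequence $(h_\beta^{(k)},h_H^{(k)},h_g^{(k)})$, upon extracting a weakly convergent subsequence in the Hilbert space $L^2(\bP)$, would yield a limit $(h_\beta^*,h_H^*,h_g^*)\neq 0$ satisfying $h_\beta^{*\top}\vz + h_H^*(T) + h_g^*(\vx) = 0$ almost surely, contradicting the non-degeneracy ensured by the three conditions. This is the linear independence of the three tangent subspaces in $L^2(\bP)$, a standard step in the identification theory of partially linear transformation models.

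Finally, the continuity of $H'_s$, $\phi_{\ve_s}$, $\lambda'_\epsilon$, and $(\log\lambda_\epsilon)''$ in $\ve$ on the relevant compact set implies that for $c_0$ small enough, the coefficients of $-\ddot\bM(\ve_s)$ differ from those at $s=0$ by a uniform factor $1+o(1)$ for all $s\in[0,1]$. Hence $-\ddot\bM(\ve_s)[h,h] \asymp -\ddot\bM(\ve_0)[h,h] \asymp d^2(\ve,\ve_0)$ uniformly in $s$, and integrating against $(1-s)$ in the Taylor remainder delivers the claim $\bM(\ve) - \bM(\ve_0) \asymp -d^2(\ve,\ve_0)$.
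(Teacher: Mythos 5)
Your overall strategy coincides with the paper's: expand $\bM(\ve)-\bM(\ve_0)$ to second order along the segment joining $\ve_0$ to $\ve$, kill the first-order term (the paper does this by noting $\bM(\ve)-\bM(\ve_0)=-KL(P_0,P_1)\le 0$, so the map $u\mapsto\bM(\ve_0+u(\ve-\ve_0))$ is maximized at $u=0$; this is equivalent to your mean-zero-score observation), and then sandwich the second derivative between constant multiples of $-d^2(\ve,\ve_0)$ using the positivity and boundedness of $H_0^{\prime}$, $\lambda_{\epsilon}^{\prime}$ and $-(\log\lambda_{\epsilon})^{\prime\prime}$ guaranteed by (C4)--(C5). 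Your computed second derivative matches the paper's $\Omega^{\prime\prime}(u)$, your bound $(a+b+c)^2\le 3(a^2+b^2+c^2)$ is the paper's Cauchy--Schwarz step, and your ``identifiability-of-direct-sum'' inequality is exactly the step the paper disposes of by citing Lemma 25.86 of van der Vaart (2000).

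That last step is where your argument has a genuine gap. You propose to prove
\[
\E\,h_\phi(\vv)^2 \gtrsim \Vert h_\beta\Vert^2 + \E\,h_H(T)^2 + \E\,h_g(\vx)^2
\]
by normalizing a violating sequence, extracting a weakly convergent subsequence in $L^2(\bP)$, and deriving a contradiction from a nonzero limit satisfying $h_\beta^{*\top}\vz+h_H^{*}(T)+h_g^{*}(\vx)=0$ almost surely. But a weak limit of a unit-norm sequence can be zero (any orthonormal sequence converges weakly to $0$), so the limit you extract need not be nonzero and no contradiction follows; moreover, the hypothesis $\E\,(h_\phi^{(k)})^2\to 0$ does not transfer to the limit in the form you need without strong convergence of at least two of the three components. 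The correct tool is precisely the closed-sumspace (positive-angle) lemma that the paper invokes: if the closed subspaces of $L^2(\bP)$ consisting of functions of $\vz$, of $T$, and of $\vx$ have trivial pairwise intersections after the centering imposed by (C3), and their sum is closed, then the norm of the sum dominates the sum of the norms up to a constant. Either cite that result, as the paper does, or repair the compactness argument (the $\vz$-component lives in a finite-dimensional space and so converges strongly, but the $T$- and $\vx$-components require the subspace-angle machinery). Apart from this, your proposal reproduces the paper's proof.
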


\begin{proof}
Write $\ve^*=\ve-\ve_0$ and define $\Omega(u)=\bM(\ve_0+u\ve^*)$, thus $\bM(\ve)-\bM(\ve_0)=\Omega(1)-\Omega(0)$. By Taylor expansion, there exists some $\overline{u}\in[0,1],$ such that 
\begin{equation}
\label{eq:equation1}
\Omega(1)-\Omega(0)=\Omega^{\prime}(0)+\frac12\Omega^{\prime\prime}(\overline{u}).
\end{equation}

Let $P_0$ and $P_1$ be the probability distribution of $\vv=(T,\Delta,\vz,\vx)$ with respect to $\ve_0=(\vb_0,H_0,g_0)$ and $\ve=(\vb,H,g)$, respectively, that is
\begin{align*}
P_0=&\left\{H^{\prime}_0(T)\lambda_{\epsilon}(\phi_{\ve_0}(\vv))\right\}^{\Delta}\exp\left\{-\Lambda_{\epsilon}(\phi_{\ve_0}(\vv))\right\}q(\Delta,\vz,\vx),\\
P_1=&\left\{H^{\prime}(T)\lambda_{\epsilon}(\phi_{\ve}(\vv))\right\}^{\Delta}\exp\left\{-\Lambda_{\epsilon}(\phi_{\ve}(\vv))\right\}q(\Delta,\vz,\vx).
\end{align*}
Therefore, we have $\bM(\ve)-\bM(\ve_0)=\E_{P_0}\log(P_1/P_0)=-KL(P_0,P_1)\leq 0$, where $\E_{P_0}$ is the expectation under the distribution $P_0$ and $KL(P_0,P_1)$ denotes the Kullback-Leibler distance between $P_0$ and $P_1$. This suggests that $\Omega$ attains its maximum at $u=0$, and it follows that $\Omega^{\prime}(0)=0$.
Meanwhile, direct calculation gives that
\begin{align*}
\Omega^{\prime\prime}(u)=\E\Bigg\{&-\Delta\frac{\left\{H^{\prime}(T)-H^{\prime}_0(T)\right\}^2}{\left\{H^{\prime}(u;T)\right\}^2}+\left\{\phi_{\ve}(\vv)-\phi_{\ve_0}(\vv)\right\}^2\\
&\quad\times\left[\Delta\frac{\lambda_{\epsilon}(\phi_{\ve}(u;\vv))\lambda_{\epsilon}^{\prime\prime}(\phi_{\ve}(u;\vv))-\left\{\lambda^{\prime}_{\epsilon}(\phi_{\ve}(u;\vv))\right\}^2}{\left\{\lambda_{\epsilon}(\phi_{\ve}(u;\vv))\right\}^2}-
\lambda^{\prime}_{\epsilon}(\phi_{\ve}(u;\vv))\right]\Bigg\},
\end{align*}
where $H^{\prime}(u;T)=H_0^{\prime}(T)+u\left\{H^{\prime}(T)-H^{\prime}_0(T)\right\}$ and $\phi_{\ve}(u;\vv)=\phi_{\ve_0}(\vv)+u\left\{\phi_{\ve}(\vv)-\phi_{\ve_0}(\vv)\right\}$. Conditions (C4) and (C5) imply that $H^{\prime}_0\geq C_1>0$, $\lambda^{\prime}_{\epsilon}\geq C_2>0$ and $(\log \lambda_{\epsilon})^{\prime\prime}=\{\lambda_{\epsilon}\lambda_{\epsilon}^{\prime\prime}-(\lambda_{\epsilon}^{\prime})^2\}/\lambda_{\epsilon}^{2}<0$. Consequently, it holds that 
\begin{equation}
\label{eq:equation2}
\begin{aligned}
\Omega^{\prime\prime}(\overline{u})&\lesssim-\E\left[\Delta\left\{H^{\prime}(T)-H^{\prime}_0(T)\right\}^2\right]-\E\left\{\phi_{\ve}(\vv)-\phi_{\ve_0}(\vv)\right\}^2\\
&\lesssim-\E\Big[\left\{(\vb-\vb_0)^\top \vz\right\}^2+\left\{g(\vx)-g_0(\vx)\right\}^2+\left\{H(T)-H_0(T)\right\}^2+\Delta\left\{H^{\prime}(T)-H^{\prime}_0(T)\right\}^2\Big]\\
&\lesssim-\set{\Vert\vb-\vb_0\Vert^2+\Vert g-g_0\Vert^2_{L^2([0,1]^d)}+\Vert H-H_0\Vert^2_{\Psi}}=-d^2(\ve,\ve_0),
\end{aligned}
\end{equation}
where the second inequality comes from Lemma 25.86 of \citet{van2000asymptotic}. On the other hand, by the Cauchy-Schwarz inequality, we can show that
\begin{equation}
\label{eq:equation3}
\begin{aligned}
\Omega^{\prime\prime}(\overline{u})&\gtrsim-\E\left[\Delta\left\{H^{\prime}(T)-H_0^{\prime}(T)\right\}^2\right]-\E\left\{\phi_{\ve}(\vv)-\phi_{\ve_0}(\vv)\right\}^2\\
&\gtrsim-\E\Big[\left\{(\vb-\vb_0)^\top \vz\right\}^2+\left\{g(\vx)-g_0(\vx)\right\}^2+\left\{H(T)-H_0(T)\right\}^2+\Delta\left\{H^{\prime}(T)-H^{\prime}_0(T)\right\}^2\Big]\\
&\gtrsim-\set{\Vert\vb-\vb_0\Vert^2+\Vert g-g_0\Vert^2_{L^2([0,1]^d)}+\Vert H-H_0\Vert^2_{\Psi}}=-d^2(\ve,\ve_0),
\end{aligned}
\end{equation}
Hence, combining (\ref{eq:equation1}), (\ref{eq:equation2}) and (\ref{eq:equation3}), we conclude that $\bM(\ve)-\bM(\ve_0)\asymp-d^2(\ve,\ve_0)$.
\end{proof}

\begin{lemma}
\label{lemma3}
Suppose conditions (C1)-(C6) hold. Let $\mathcal{B}_{\delta}=\set{\eta\in\R^p_D\times\Psi_D\times\setg_D:d(\ve,\ve_0)\leq\delta}$ for some $D>0$, then we have
\begin{align*}
\E^*\underset{\eta\in\mathcal{B}_{\delta}}{\sup}\abs{\Gn\left\{\ell_{\ve}(\vv)-\ell_{\ve_0}(\vv)\right\}}=O\left(\delta\sqrt{s\log\frac{L}{\delta}}+\frac{s}{\sqrt{n}}\log\frac{L}{\delta}\right), 
\end{align*}
where $\E^*$ is the outer measure and $L=K\prod_{k=0}^K(p_k+1)\sum_{k=0}^Kp_kp_{k+1}$.
\end{lemma}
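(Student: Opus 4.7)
The plan is to apply a standard maximal inequality for empirical processes (for example, Theorem 3.4.1 of van der Vaart and Wellner (1996), or Lemma 19.36 of van der Vaart (2000)) to the class
\begin{align*}
\setf_\delta=\set{\ell_{\ve}(\vv)-\ell_{\ve_0}(\vv):\ve\in\mathcal{B}_\delta},
\end{align*}
which requires three ingredients: a uniform envelope bound, an $L^2(\bP)$ variance bound of order $\delta^2$, and a uniform covering/bracketing entropy bound. The desired rate $\delta\sqrt{s\log(L/\delta)}+(s/\sqrt n)\log(L/\delta)$ has the typical ``variance term $+$ boundedness term'' structure one gets from such inequalities with entropy of order $s\log(L/\varepsilon)$.

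First, I would verify the envelope and variance bounds. Conditions (C2)--(C5) together with the definition of $\R^p_D\times\Psi_D\times\setg_D$ imply that $\phi_{\ve}(\vv)$ lies in a bounded interval uniformly in $\ve\in\mathcal{B}_\delta$, and $H'$ is bounded above and below. Since $\lambda_{\epsilon}$, $\Lambda_{\epsilon}$, $\log\lambda_{\epsilon}$ and $\log$ are all Lipschitz on compacta, $\ell_{\ve}$ is uniformly bounded, so $\setf_\delta$ admits an $O(1)$ envelope $F$. Moreover, the same Lipschitz property combined with the triangle inequality gives
\begin{align*}
|\ell_{\ve}(\vv)-\ell_{\ve_0}(\vv)|\lesssim |H'(T)-H'_0(T)|\cdot\Delta+|\phi_{\ve}(\vv)-\phi_{\ve_0}(\vv)|,
\end{align*}
so that $\E(\ell_{\ve}-\ell_{\ve_0})^2\lesssim d^2(\ve,\ve_0)\le\delta^2$, using the same inequalities that led to (\ref{eq:equation2})--(\ref{eq:equation3}) in Lemma~\ref{lemma2}.

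Second, I would control the covering number of $\setf_\delta$. Using the Lipschitz property above, an $\varepsilon$-cover of $\setf_\delta$ in $L^2(\bP)$ is obtained by combining $\varepsilon$-covers of the component classes $\{\vb^\top\vz:\vb\in\R^p_D\}$, $\{H(T):H\in\Psi_D\}$, $\{H'(T)\Delta:H\in\Psi_D\}$ and $\{g(\vx):g\in\setg_D\}$. The bounds already assembled in the proof of Lemma~\ref{lemma1} give $\log\setn(\varepsilon,\cdot,L^2(\bP))\lesssim p\log(1/\varepsilon)$, $q_n\log(1/\varepsilon)$, $q_n\log(1/\varepsilon)$ and $s\log(L/\varepsilon)$ respectively. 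Under (C1), $s$ dominates $p$ and $q_n$ asymptotically, so
\begin{align*}
\log\setn(\varepsilon,\setf_\delta,L^2(\bP))\lesssim s\log\frac{L}{\varepsilon}.
\end{align*}

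Third, I would plug these into the maximal inequality. The uniform entropy integral satisfies
\begin{align*}
J(\delta,\setf_\delta):=\int_0^\delta\sqrt{1+\log\setn(\varepsilon,\setf_\delta,L^2(\bP))}\,d\varepsilon\lesssim\int_0^\delta\sqrt{s\log(L/\varepsilon)}\,d\varepsilon\lesssim\delta\sqrt{s\log(L/\delta)}.
\end{align*}
Applied with envelope $\|F\|_\infty=O(1)$ and variance $\le\delta^2$, the maximal inequality yields
\begin{align*}
\E^*\sup_{\ve\in\mathcal{B}_\delta}\abs{\Gn(\ell_{\ve}-\ell_{\ve_0})}\lesssim J(\delta,\setf_\delta)\parenthese{1+\frac{J(\delta,\setf_\delta)\|F\|_\infty}{\delta^2\sqrt n}}\lesssim\delta\sqrt{s\log\frac{L}{\delta}}+\frac{s}{\sqrt n}\log\frac{L}{\delta},
\end{align*}
which is exactly the stated bound.

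The main technical obstacle is the third step, namely verifying that the Lipschitz composition with $\log\lambda_{\epsilon}$ and $\Lambda_{\epsilon}$ preserves the $L^2(\bP)$ entropy rate, and in particular that the DNN covering bound $s\log(L/\varepsilon)$ survives intact after being combined with the bounded linear and spline components and after the nonlinear transformation through $\ell_{\ve}$. Once Lipschitz constants are justified on compacta via (C4)--(C5) and (C2), the remaining computations are routine.
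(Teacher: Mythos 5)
Your proposal is correct and follows essentially the same route as the paper: bound the second moment of $\ell_{\ve}-\ell_{\ve_0}$ by $d^2(\ve,\ve_0)\le\delta^2$ via the Lipschitz properties from (C2), (C4), (C5), assemble the component entropy bounds into $\log\setn_{[\ ]}(\varepsilon,\setf_\delta,L^2(\bP))\lesssim s\log(L/\varepsilon)$, integrate to get $J_{[\ ]}\asymp\delta\sqrt{s\log(L/\delta)}$, and apply the bracketing maximal inequality (the paper uses Lemma 3.4.2 of van der Vaart and Wellner, which is the precise form of the bound you wrote). The only cosmetic difference is that you phrase the entropy control mainly through covering numbers while the cited maximal inequality requires bracketing entropy, but the component bounds you invoke from Lemma~\ref{lemma1} are available in bracketing form, so nothing is lost.
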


\begin{proof}
Define $\setf_{\delta}=\set{\ell_{\ve}(\vv)-\ell_{\ve_0}(\vv):\ve\in\mathcal{B}_{\delta}}$ and $\Vert\Gn\Vert_{\setf_{\delta}}=\sup_{f\in\setf_{\delta}}\abs{\Gn f}=\sup_{\ve\in\mathcal{B}_{\delta}}\vert\Gn
\\ \{\ell_{\ve}(\vv)-\ell_{\ve_0}(\vv)\}\vert$. Conditions (C2), (C4) and (C5) yield
\begin{align*}
&\E\left\{\ell_{\ve}(\vv)-\ell_{\ve_0}(\vv)\right\}^2\\
\lesssim\ &\E\left[\Delta\left\{\log H^{\prime}(T)-\log H_0^{\prime}(T)\right\}^2\right]+\E\left[\Delta\left\{\log\lambda_{\epsilon}(\phi_{\ve}(\vv))-\log\lambda_{\epsilon}(\phi_{\ve_0}(\vv))\right\}^2\right]\\
&\quad+\E\left\{\Lambda_{\epsilon}(\phi_{\ve}(\vv))-\Lambda_{\epsilon}(\phi_{\ve_0}(\vv))\right\}^2 \\
\lesssim\ &\E\left[\Delta\left\{H^{\prime}(T)-H_0^{\prime}(T)\right\}^2\right]+\E\set{\phi_{\ve}(\vv)-\phi_{\ve_0}(\vv)}^2\\
\lesssim\ &\E\Big[\left\{(\vb-\vb_0)^\top \vz\right\}^2+\left\{g(\vx)-g_0(\vx)\right\}^2+\left\{H(T)-H_0(T)\right\}^2+\Delta\left\{H^{\prime}(T)-H^{\prime}_0(T)\right\}^2\Big]\\
\lesssim\ &\Vert\vb-\vb_0\Vert^2+\Vert g-g_0\Vert^2_{L^2([0,1]^d)}+\Vert H-H_0\Vert^2_{\Psi}=d^2(\ve,\ve_0).
\end{align*}
Besides, following the argument in the proof of Lemma~\ref{lemma1}, it is easy to verify that
\begin{align*}
&\log\setn_{\left[\ \right]}(\varepsilon,\set{\vb^{\top}\vz:\vb\in\R^p_D,\Vert\vb-\vb_0\Vert\leq\delta},L^2(\bP))\lesssim d\log\frac{\delta}{\varepsilon},\\
&\log\setn_{\left[\ \right]}(\varepsilon,\set{g(\vx):g\in\setg_D,\Vert g-g_0\Vert_{L^2([0,1]^d)}\leq\delta},L^2(\bP))\lesssim s\log\frac{L}{\varepsilon},\\
&\log\setn_{\left[\ \right]}(\varepsilon,\set{H(T):H\in\Psi_D,\Vert H-H_0\Vert_{\Psi}\leq\delta},L^2(\bP))\lesssim q_n\log\frac{\delta}{\varepsilon},\\
&\log\setn_{\left[\ \right]}(\varepsilon,\set{H^{\prime}(T):H\in\Psi_D,\Vert H-H_0\Vert_{\Psi}\leq\delta},L^2(\bP))\lesssim q_n\log\frac{\delta}{\varepsilon}.
\end{align*}
Thus, with $d\leq s$, $q_n\leq s$ and $\delta\leq L$, we can get
\begin{align*}
\log\setn_{\left[\ \right]}(\varepsilon,\setf_{\delta},L^2(\bP))\lesssim d\log\frac{\delta}{\varepsilon}+2q_n\log\frac{\delta}{\varepsilon}+s\log\frac{L}{\varepsilon}\lesssim s\log\frac{L}{\varepsilon}.
\end{align*}
Consequently, we can derive the bracketing integral of $\setf_{\delta}$, 
\begin{align*}
J_{\left[\ \right]}(\varepsilon,\mathcal{F}_{\delta},L^2(\bP))& =\int_0^\delta\sqrt{1+\log\mathcal{N}_{\left[\ \right]}(\varepsilon,\mathcal{F}_\delta,L^2(\bP))}d\varepsilon   \\
&\lesssim\int_0^\delta\sqrt{1+s\log\frac L\varepsilon}d\varepsilon  \\
&\begin{aligned}&=\frac{2L}{s}e^{\frac
1s}\int_{\sqrt{1+s\log\frac{L}{\delta}}}^{\infty}y^2e^{-\frac{y^2}{s}}dy\end{aligned} \\
&\asymp\delta\sqrt{s\log\frac L\delta}.
\end{align*}
This, in conjunction with Lemma 3.4.2 in \citet{van1996weak}, leads to
\begin{align*}
\mathbb{E}^*\|\mathbb{G}_n\|_{\mathcal{F}_\delta}& \lesssim J_{\left[\ \right]}(\varepsilon,\mathcal{F}_\delta,L^2(\bP))\left\{1+\frac{J_{\left[\ \right]}(\varepsilon,\mathcal{F}_\delta,L^2(\bP))}{\delta^2\sqrt{n}}\right\}  \\
&\lesssim\delta\sqrt{s\log\frac L\delta}+\frac s{\sqrt{n}}\log\frac L\delta,
\end{align*}
which completes the proof.\\
\end{proof}

\subsection{Proof of Theorem 1}
We consider the following norm-constrained estimator:
\begin{align}
\label{eq:equation4}
\widehat{\ve}_D=(\widehat{\vb}_D,\widehat{H}_D,\widehat{g}_D)=\underset{(\vb,H,g)\in \R^p_D\times\Psi_D\times\mathcal{G}_D}{\argmax}\Mn(\vb,H,g).
\end{align}
It is easy to see that $\bP\set{d(\widehat{\ve},\ve_0)<\infty}=1$ since $\widehat{\ve}$ maximizes $\Mn(\ve)$, thus it suffices to show that $d(\widehat{\ve}_D,\ve_0)=O_p(\delta_n\log^2 n+n^{-w\nu})$ for some sufficiently large constant $D$. 

First, we show that $d(\widehat{\ve}_D,\ve_0)\overset{p}{\rightarrow}0$ by applying Theorem 5.7 of \citet{van2000asymptotic}. It follows directly from Lemma~\ref{lemma1} that
\begin{align}
\label{eq:equation5}
\underset{\ve\in\R^p_D\times\Psi_D\times\setg_D}{\sup}\abs{\Mn(\ve)-\bM(\eta)}\overset{p}{\rightarrow}0,
\end{align}
and Lemma~\ref{lemma2} indicates that
\begin{align}
\label{eq:equation6}
\underset{d(\ve,\ve_0)\geq c_0}{\sup}\bM(\ve)<\bM(\ve_0)
\end{align}
for some small constant $c_0>0$. Furthermore, we define
\begin{align}
\label{eq:equation7}
\widetilde{g}=\underset{g\in\setg(K,s,\bm{p},D)}{\argmin}\left\Vert g-g_0\right\Vert_{L^2([0,1]^d)}.
\end{align}
By the proof of Theorem 1 in \citet{SchmidtHieber2020}, we have $\Vert\widetilde{g}-g_0\Vert_{L^2([0,1]^d)}=O_p(\delta_n\log^2n)$. Besides, Lemma A1 of \citet{lu2007estimation} implies that there exists some $\widetilde{h}\in\Psi_D^{(1)}=\{H^{\prime}:H\in\Psi_D\}$, such that
\begin{align}
\label{eq:equation8}
\Vert\widetilde{h}-H^{\prime}_0\Vert_{\infty}=O_p(n^{-w\nu}). 
\end{align}
We then define 
\begin{align}
\label{eq:equation9}
\widetilde{H}(t)=H_0(L_T)+\int_{L_T}^{t}\widetilde{h}(s)ds,\ L_T\leq t\leq U_T,
\end{align}
and now we can use $\widetilde{H}^{\prime}$ in place of $\widetilde{h}$ in the subsequent parts of the proof. It is clear that
\begin{align}
\label{eq:equation10}
\Vert\widetilde{H}-H_0\Vert_{\infty}=\underset{t\in[L_T,U_T]}{\sup}\left\vert\int_{L_T}^{t}\left\{\widetilde{H}^{\prime}(s)-H^{\prime}_0(s)\right\}ds\right\vert=O_p(n^{-w\nu}).
\end{align}
(\ref{eq:equation8}) and (\ref{eq:equation10}) further give that 
\begin{equation}
\begin{aligned}
\label{eq:equation11}
\Vert \widetilde{H}-H_0\Vert_{\Psi}=\E\left[\left\{\widetilde{H}(T)-H_0(T)\right\}^2+\Delta\left\{\widetilde{H}^{\prime}(T)-H^{\prime}_0(T)\right\}^2\right]^{1/2}=O_p(n^{-w\nu}).
\end{aligned}
\end{equation}
Thus, combining (\ref{eq:equation5}), Lemma 2 and the law of large numbers, we obtain 
\begin{equation}
\label{eq:equation12}
\begin{aligned}
&\big\vert\Mn(\vb_0,\widetilde{H},\widetilde{g})-\Mn(\vb_0,H_0,g_0)\big\vert\\
\leq\ &\big\vert\Mn(\vb_0,\widetilde{H},\widetilde{g})-\bM(\vb_0,\widetilde{H},\widetilde{g})\big\vert+\big\vert\bM(\vb_0,\widetilde{H},\widetilde{g})-\bM(\vb_0,H_0,g_0)\big\vert\\
&+\big\vert\bM(\vb_0,H_0,g_0)-\Mn(\vb_0,H_0,g_0)\big\vert\\
=\ &o_p(1).
\end{aligned}
\end{equation}
By the definition of $\widehat{\ve}_D=(\widehat{\vb}_D,\widehat{H}_D,\widehat{g}_D)$, we get
\begin{align}
\label{eq:equation13}
\Mn(\widehat{\vb}_D,\widehat{H}_D,\widehat{g}_D)\geq\Mn(\vb_0,\widetilde{H},\widetilde{g})=\Mn(\vb_0,H_0,g_0)-o_p(1).
\end{align}
Hence, we prove the consistency by verifying the conditions with (\ref{eq:equation5}), (\ref{eq:equation6}) and (\ref{eq:equation13}).

Next, we employ Theorem 3.4.2 of \citet{van1996weak} to derive that $d(\widehat{\ve},\ve_0)=O_p(\delta_n\log^2 n+n^{-w\nu}).$ Define $\mathcal{A}_{\delta}=\left\{\ve\in\R^p_D\times\Psi_D\times\setg_D:\delta/2\leq d(\ve,\ve_0)\leq\delta\right\}$, Lemma~\ref{lemma2} yields that
\begin{align}
\label{eq:equation14}
\underset{\ve\in\mathcal{A}_{\delta}}{\sup}\left\{\bM(\ve)-\bM(\ve_0)\right\}\lesssim-\delta^2.
\end{align}
Define $\varphi_n(\delta)=\delta\sqrt{s\log\frac{L}{\delta}}+\frac{s}{\sqrt{n}}\log\frac{L}{\delta}+\sqrt{n}(\delta_n\log^2n+n^{-w\nu})^2$ and $\theta_n=\delta_n\log^2n+n^{-w\nu}$. It follows from Lemma~\ref{lemma3} that
\begin{align}
\label{eq:equation15}
\E^*\underset{\ve\in\mathcal{A}_{\delta}}{\sup}\sqrt{n}\set{(\Mn-\bM)(\ve)-(\Mn-\bM)(\ve_0)}\lesssim\varphi_n(\delta).
\end{align}
Moreover, condition (C1) leads to 
\begin{align}
\label{eq:equation16}
\theta_n^{-2}\varphi_n(\theta_n)\leq\sqrt{n}.
\end{align}
With $\widetilde{g}$ and $\widetilde{H}$ defined in (\ref{eq:equation7}) and (\ref{eq:equation9}) respectively, by analogy to (\ref{eq:equation12}), it holds that
\begin{equation}
\label{eq:equation17}
\begin{aligned}
&\big\vert\Mn(\vb_0,\widetilde{H},\widetilde{g})-\Mn(\vb_0,H_0,g_0)\big\vert\\
\leq\ &\big\vert(\Mn-\bM)(\vb_0,\widetilde{H},\widetilde{g})-(\Mn-\bM)(\vb_0,H_0,g_0)\big\vert+\big\vert\bM(\vb_0,\widetilde{H},\widetilde{g})-\bM(\vb_0,H_0,g_0)\big\vert
\\
\lesssim\ &O_p(n^{-1/2}\varphi_n(\theta_n))+\Vert\widetilde{H}-H_0\Vert^2_{\Psi}+\Vert\widetilde{g}-g_0\Vert^2_{L^2([0,1]^d)}\\
\lesssim\ &O_p(\theta_n^2).
\end{aligned}
\end{equation}
Since $\widehat{\ve}_D=(\widehat{\vb}_D,\widehat{H}_D,\widehat{g}_D)$ is the norm-constrained maximizer of the log likelihood function,
\begin{align}
\label{eq:equation18}
\Mn(\widehat{\vb}_D,\widehat{H}_D,\widehat{g}_D)\geq\Mn(\vb_0,\widetilde{H},\widetilde{g})=\Mn(\vb_0,H_0,g_0)-O_p(\theta_n^2).
\end{align}
Consequently, combining (\ref{eq:equation14}), (\ref{eq:equation15}), (\ref{eq:equation16}) and (\ref{eq:equation18}), we have 
\begin{align*}
d(\widehat{\ve}_D,\ve_0)=O_p(\delta_n\log^2 n+n^{-w\nu}).
\end{align*}
and it follows that $d(\widehat{\ve},\ve_0)=O_p(\delta_n\log^2 n+n^{-w\nu})$. Therefore, the proof is completed.

\subsection{Proof of Theorem 2}
Let $P_{(\vb_0,H_0,g_0)}$ be the probability distribution with respect to the parameter $\vb_0$, the transformation function $H_0$ and the nonparametric smooth function $g_0$. Then we define
\begin{align*}
\setp_0&=\{P_{(\vb_0,H_0,g_0)}:\vb_0\in\R_M^p,H_0\in\Psi\text{ and }g_0\in\mathcal{H}_0\},\\
\setp_1&=\{P_{(\vb_0,H_0,g_0)}:\vb_0\in\R_M^p,H_0\in\Psi_1\text{ and }g_0\in\mathcal{H}_1\},
\end{align*}
where $M>0$ is a constant, $\Psi_1=\left\{\sum_{j=1}^{q_n}\gamma_jB_j(t):0=\gamma_1\leq\cdots\leq\gamma_{q_n}<\infty,\ t\in[L_T,U_T]\right\}$, and $\mathcal{H}_1=\mathcal{H}(q,\bm{\alpha},\bm{d},\widetilde{\bm{d}}, M/2)$.

For any $(\vb,H_1,g_1)\in\R^p_M\times\Psi_1\times\mathcal{H}_1$, it is easy to see that $P_{(\vb,H_1,g_1)}\overset{d}{=}P_{(\vb,H_1+c^{\prime},g_1-c^{\prime})}$ with $c^{\prime}=\E\set{g_1(\vx)}$. Note that $\sum_{j=1}^{q_n}B_j(t)\equiv 1$ by Theorem 4.20 of \citet{schumaker_2007}, it follows that $H_1+c^{\prime}$ is an element of  $\left\{\sum_{j=1}^{q_n}\gamma_jB_j(t):c^{\prime}=\gamma_1\leq\cdots\leq\gamma_{q_n}<\infty,\ t\in[L_T,U_T]\right\}$, which is a subset of $\Psi$. Thus $P_{(\vb,H_1+c^{\prime},g_1-c^{\prime})}\in\setp_0$, which further implies that $\setp_1$ is a subset of $\setp_0$.

Suppose that $\widehat{g}_1$ is an estimator of $g_1\in\mathcal{H}_1$ from the observations $\{\vv_i=(T_i,\Delta_i,\vz_i,\vx_i),\ i=1,\cdots,n\}$ under some model $P_{(\vb,H_1,g_1)}\in\setp_1$, then $\widehat{g}_0:=\widehat{g}_1-c^{\prime}$ with $c^{\prime}=\E\set{g_1(\vx)}$ is also an estimator of $g_0:=g_1-c^{\prime}$ based on the same observations under $P_{(\vb,H_1+c^{\prime},g_1-c^{\prime})}\in\setp_0$. By the fact that $\widehat{g}_1-g_1=\widehat{g}_0-g_0$, we have
\begin{equation}
\label{eq:equation19}
\begin{aligned}
&\inf_{\widehat{g}_0}\sup_{(\vb_0,H_0,g_0)\in\mathbb{R}_M^p\times\Psi\times\mathcal{H}_0}\mathbb{E}_{P_{(\vb_0,H_0,g_0)}}\{\widehat{g}_0(\vx)-g_0(\vx)\}^2\\
\geq&\inf_{\widehat{g}_1}\sup_{(\vb_1,H_1,g_1)\in\mathbb{R}_M^p\times \Psi_1\times\mathcal{H}_1}\mathbb{E}_{P_{(\vb_1,H_1,g_1)}}\{\widehat{g}_1(\vx)-g_1(\vx)\}^2.
\end{aligned}
\end{equation}
Therefore, it suffices to find a lower bound for the right hand side of (\ref{eq:equation19}) to obtain that for the left hand side of (\ref{eq:equation19}).

Let $(\vb_0,H_0)\in\R^p_M\times\Psi_1$ and $g^{(0)},g^{(1)}\in\mathcal{H}_1$, we denote by $P_0$ and $P_1$ the joint distribution of $\{\vv_i=(T_i,\Delta_i,\vz_i,\vx_i),\ i=1,\cdots,n\}$ under $P_{(\vb_0,H_0,g^{(0)})}$ and $P_{(\vb_0,H_0,g^{(1)})}$, respectively. By analogy to the proof of Lemma 2, there exists constants $a_1,a_2>0$, such that
\begin{equation}
\label{eq:equation20}
\begin{aligned}
KL(P_1,P_0)&\leq a_1d^2_{P_1}\left\{(\vb_0,H_0,g^{(1)}),(\vb_0,H_0,g^{(0)})\right\}\\
&=a_1\sum_{i=1}^n\E_{P_1}\left\{g^{(1)}(\vx_i)-g^{(0)}(\vx_i)\right\}^2\leq a_2n\Vert g^{(1)}-g^{(0)}\Vert^2_{L^2([0,1]^d)},
\end{aligned}
\end{equation}
where
\begin{align*}
d^2_{P_1}(\ve_1,\ve_2)=\sum_{i=1}^n\E_{P_1}\big[&\left\{(\vb_1-\vb_2)^\top \vz_i\right\}^2+\left\{g_1(\vx_i)-g_2(\vx_i)\right\}^2+\left\{H_1(T_i)-H_2(T_i)\right\}^2\\
&+\Delta\left\{H^{\prime}_1(T_i)-H^{\prime}_2(T_i)\right\}^2\big]
\end{align*}
for any $\ve_1=(\vb_1,H_1,g_1)$ and $\ve_2=(\vb_2,H_2,g_2)$. According to the proof of Theorem 3 in \citet{SchmidtHieber2020}, there exist $g^{(0)},\cdots,g^{(N)}\in\mathcal{H}_1$ and constants $b_1,b_2>0$, such that
\begin{equation}
\label{eq:equation21}
\begin{aligned}
&\Vert g^{(k)}-g^{(l)}\Vert_{L^2([0,1]^d)}\geq2b_1\delta_n>0\text{ for any }1\leq k,l\leq N\\
&\text{ and  }\quad\frac{a_2n}N\sum_{k=1}^N\|g^{(k)}-g^{(0)}\|_{L^2([0,1]^d)}^2\leq b_2\log N.
\end{aligned}
\end{equation}
Therefore, combining (\ref{eq:equation20}) and (\ref{eq:equation21}), by Theorem 2.5 of \citet{tsybakov2009nonparametric}, we can show that
\begin{align*}
\inf_{\widehat{g}_1}\sup_{g_1\in\mathcal{H}_1}\mathbb{P}(\|\widehat{g}_1-g_1\|_{L^2([0,1]^d)}\geq b_1\delta_n)\geq\frac{\sqrt{N}}{1+\sqrt{N}}\left(1-2b_2-\sqrt{\frac{2b_2}{\log N}}\right),
\end{align*}
which gives that 
\begin{align*}
\underset{\widehat{g}_1}{\inf}\underset{(\vb_1,H_1,g_1)\in\R^p_M\times\Psi_1\times\mathcal{H}_1}{\sup}\mathbb{E}_{P_{(\vb_1,H_1,g_1)}}\left\{\widehat{g}_1(\vx)-g_1(\vx)\right\}^2\geq c\delta_n^2,
\end{align*}
for some constant $c>0$. This completes the proof.

\subsection{Proof of Theorem 3}
We first describe the function spaces $\overline{\mathbb{T}}_{H_0}$ and $\overline{\mathbb{T}}_{g_0}$. Let $\Psi_{H_0}$ be the collection of all subfamilies $\left\{H_{s_1}\in L^2([L_T,U_T])\cap C^1([L_T,U_T]):H_{s_1}\text{ is strictly increasing, } s_1\in(-1,1)\right\}$ such that $\lim_{s_1\rightarrow 0}\Vert s_1^{-1}(H_{s_1}-H_0)-a\Vert_{L^2([L_T,U_T])}=0$, where $a\in L^2([L_T,U_T])\cap C^1([L_T,U_T])$, and then define
\begin{align*}
\mathbb{T}_{H_0}=\Big\{a\in L^2([L_T,U_T])\cap C^1([L_T,U_T]):&\lim_{s_1\rightarrow 0}\Vert s_1^{-1}(H_{s_1}-H_0)-a\Vert_{L^2([L_T,U_T])}=0\\
&\text{ for some subfamily }
\left\{H_{s_1}:s_1\in(-1,1)\right\}\in\Psi_{H_0}\Big\},
\end{align*}
Similarly, let $\mathcal{H}_{g_0}$ denote the collection of all subfamilies $\left\{g_{s_2}\in L^2([0,1]^d):s_2\in(-1,1)\right\}\subset\mathcal{H}_0$ such that $\lim_{s_2\rightarrow 0}\Vert s_2^{-1}(g_{s_2}-g_0)-b\Vert_{L^2([0,1]^d)}=0$ with $b\in L^2([0,1]^d)$, and then define
\begin{align*}
\mathbb{T}_{g_0}=\Big\{b\in L^2([0,1]^d):\lim_{s_2\rightarrow 0}\Vert s_2^{-1}(g_{s_2}-g_0)-&b\Vert_{L^2([0,1]^d)}=0\\
&\text{ for some subfamily }\left\{g_{s_2}:s_2\in(-1,1)\right\}\in\mathcal{H}_{g_0}\Big\}.
\end{align*}
Let $\overline{\mathbb{T}}_{H_0}$ and $\overline{\mathbb{T}}_{g_0}$ be the closed linear spans of $\mathbb{T}_{H_0}$ and $\mathbb{T}_{g_0}$, respectively.

We consider a parametric submodel $\{(\vb,H_{s_1},g_{s_2}):s_1,s_2\in(-1,1)\}$,  where $\{H_{s_1}:s_1\in(-1,1)\}\in\Psi_{H_0}$, $H_{s_1}\vert_{s_1=0}=H_0$ and $\{g_{s_2}:s_2\in(-1,1)\}\in\mathcal{H}_{g_0}$, $g_{s_2}\vert_{s_2=0}=g_0$. By definitions of the subfamilies $\Psi_{H_0}$ and $\mathcal{H}_{g_0}$, there exist $a\in\overline{T}_{H_0}$ and $b\in\overline{T}_{g_0}$ such that
\begin{align*}
\frac{\partial H_{s_1}}{\partial {s_1}}\bigg\vert_{s_1=0}=a,\quad\frac{\partial  H^{\prime}_{s_1}}{\partial s_1}\bigg\vert_{s_1=0}=a^{\prime}\ \text{ and }\ \frac{\partial g_{s_2}}{\partial s_2}\bigg\vert_{s_2=0}=b.
\end{align*}
Thus, by differentiating the log likelihood function with respect to $\vb$, $s_1$ and $s_2$ at $\vb=\vb_0$, $s_1=0$ and $s_2=0$, we get the score function for $\vb_0$ and the score operators for $H_0$ and $g_0$, which are respectively defined as
\begin{align*}
&\dot{\ell}_{\vb}(\vv;\ve_0)=\frac{\partial}{\partial\vb}\ell_{(\vb,H_0,g_0)}(\vv)\bigg\vert_{\vb=\vb_0}=\vz\Phi_{\ve_0}(\vv),\\
&\dot{\ell}_{H}(\vv;\ve_0)[a]=\frac{\partial}{\partial s_1}\ell_{(\vb_0,H_{s_1},g_0)}(\vv)\bigg\vert_{s_1=0}=a(T)\Phi_{\ve_0}(\vv)+\Delta \frac{a^{\prime}(T)}{H^{\prime}(T)},\\
&\dot{\ell}_{g}(\vv;\ve_0)[b]=\frac{\partial}{\partial s_2}\ell_{(\vb_0,H_0,g_{s_2})}(\vv)\bigg\vert_{s_2=0}=b(\vx)\Phi_{\ve_0}(\vv).
\end{align*}
By chapter 3 of \citet{kosorok2008}, the efficient score function for $\vb_0$ is given by 
\begin{align*}
\ell^*_{\vb}(\vv;\ve_0)=\dot{\ell}_{\vb}(\vv;\ve_0)-\Pi_{H_0,g_0}[\dot{\ell}_{\vb}(\vv;\ve_0)\vert\dot{\mathbf{P}}_1+\dot{\mathbf{P}}_2]
\end{align*}
where $\Pi_{H_0,g_0}[\dot{\ell}_{\vb}(\vv;\ve_0)\vert\dot{\mathbf{P}}_1+\dot{\mathbf{P}}_2]$ is the projection of $\dot{\ell}_{\vb}(\vv;\ve_0)$ onto the sumspace $\dot{\mathbf{P}}_1+\dot{\mathbf{P}}_2$, with $\dot{\mathbf{P}}_1=\{\dot{\ell}_{H}(\vv;\ve_0)[a]:a\in\overline{\mathbb{T}}_{H_0}\}$ and $\dot{\mathbf{P}}_2=\{\dot{\ell}_{g}(\vv;\ve_0)[b]:b\in\overline{\mathbb{T}}_{g_0}\}$. Furthermore, $\Pi_{H_0,g_0}[\dot{\ell}_{\vb}(\vv;\ve_0)\vert\dot{\mathbf{P}}_1+\dot{\mathbf{P}}_2]$ can be obtained by deriving the least favorable direction $(\bm{a}_{*}^{\top},\bm{b}_{*}^{\top})^\top\in\overline{\mathbb{T}}_{H_0}^p\times\overline{\mathbb{T}}_{g_0}^p$, which satisfies   
\begin{align*}
&\E\bigg[\Big\{\dot{\ell}_{\vb}(\vv;\ve_0)-\dot{\ell}_{H}(\vv;\ve_0)[\bm{a}_{*}]-\dot{\ell}_{g}(\vv;\ve_0)[\bm{b}_{*}]\Big\}\dot{\ell}_{H}(\vv;\ve_0)[a]\bigg]=0,\text{ for all }a\in\overline{\mathbb{T}}_{H_0},\\
&\E\bigg[\Big\{\dot{\ell}_{\vb}(\vv;\ve_0)-\dot{\ell}_{H}(\vv;\ve_0)[\bm{a}_{*}]-\dot{\ell}_{g}(\vv;\ve_0)[\bm{b}_{*}]\Big\}\dot{\ell}_{g}(\vv;\ve_0)[b]\bigg]=0,\text{ for all }b\in\overline{\mathbb{T}}_{g_0}.
\end{align*}
This leads to the conclusion that $(\bm{a}_{*}^{\top},\bm{b}_{*}^{\top})^\top$ is the minimizer of
\begin{align*}
&\E\left\{\left\Vert\dot{\ell}_{\vb}(\vv;\ve_0)-\dot{\ell}_{H}(\vv;\ve_0)[\bm{a}]-\dot{\ell}_{g}(\vv;\ve_0)[\bm{b}]\right\Vert_c^2\right\}\\
=\ &\E\left\{\left\Vert\left\{\vz-\bm{a}(T)-\bm{b}(\vx)\right\}\Phi_{\ve_0}(\vv)-\Delta\frac{\bm{a}^{\prime}(T)}{H^{\prime}_0(T)}\right\Vert^2_c\right\}.
\end{align*}
By conditions (C2)-(C7), Lemma 1 of \citet{stone1985additive}, and Appendix A.4 in \citet{bickel1993efficient}, the minimizer $(\bm{a}_{*}^{\top},\bm{b}_{*}^{\top})^\top$ is well defined. Hence, the efficient score is
\begin{align*}
\ell^{*}_{\vb}(\vv;\ve_0)&=\dot{\ell}_{\vb}(\vv;\ve_0)-\dot{\ell}_{H}(\vv;\ve_0)[\bm{a}_{*}]-\dot{\ell}_{g}(\vv;\ve_0)[\bm{b}_{*}]\\
&=\left\{\vz-\bm{a}_{*}(T)-\bm{b}_*(\vx)\right\}\Phi_{\ve_0}(\vv)-\Delta\frac{\bm{a}_{*}^{\prime}(T)}{H^{\prime}(T)},    
\end{align*}
and the information matrix is 
\begin{align*}
I(\vb_0)=\mathbb{E}\left\{\ell^{*}_{\vb}(\vv;\ve_0)\right\}^{\bigotimes 2}.
\end{align*}

\subsection{Proof of Theorem 4}
Using the mean value theorem and the Cauchy-Schwarz inequality, we have
\begin{align*}
&\bP\left\{\ell^{*}_{\vb}(\vv;\widehat{\ve})-\ell^{*}_{\vb}(\vv;\ve_0)\right\}^2\\
=\ &\bP\left\{\ell^{*}_{\vb}(\vv;\ve_0+\rho(\widehat{\ve}-\ve_0))\big\vert_{\rho=1}-\ell^{*}_{\vb}(\vv;\ve_0+\rho(\widehat{\ve}-\ve_0))\big\vert_{\rho=0}\right\}^2\\
=\ &\bP\left\{\frac{d}{d\rho}\ell^{*}_{\vb}(\vv;\ve_0+\rho(\widehat{\ve}-\ve_0))\bigg\vert_{\rho=\overline{\rho}}\right\}^2\\
=\ &\bP\Bigg\{\frac{d}{d\left[\left\{\vb_0+\rho(\widehat{\vb}-\vb_0)\right\}^{\top}\vz\right]}\ell^{*}_{\vb}(\vv;\ve_0+\rho(\widehat{\ve}-\ve_0))\bigg\vert_{\rho=\overline{\rho}}\left\{(\widehat{\vb}-\vb_0)^\top\vz\right\}\\
&\quad+\frac{d}{d\left[g_0(\vx)+\rho\left\{\widehat{g}(\vx)-g_0(\vx)\right\}\right]}\ell^{*}_{\vb}(\vv;\ve_0+\rho(\widehat{\ve}-\ve_0))\bigg\vert_{\rho=\overline{\rho}}\left\{\widehat{g}(\vx)-g_0(\vx)\right\}\\
&\quad+\frac{d}{d\left[H_0(T)+\rho\left\{\widehat{H}(T)-H_0(T)\right\}\right]}\ell^{*}_{\vb}(\vv;\ve_0+\rho(\widehat{\ve}-\ve_0))\bigg\vert_{\rho=\overline{\rho}}\left\{\widehat{H}(T)-H_0(T)\right\}\\
&\quad+\frac{d}{d\left[H^{\prime}_0(T)+\rho\left\{\widehat{H}^{\prime}(T)-H^{\prime}_0(T)\right\}\right]}\ell^{*}_{\vb}(\vv;\ve_0+\rho(\widehat{\ve}-\ve_0))\bigg\vert_{\rho=\overline{\rho}}\left\{\widehat{H}^{\prime}(T)-H^{\prime}_0(T)\right\}\Bigg\}^2\\
\lesssim\ &\bP\Big[\left\{(\widehat{\vb}-\vb_0)^\top \vz\right\}^2+\left\{\widehat{g}(\vx)-g_0(\vx)\right\}^2+\left\{\widehat{H}(T)-H_0(T)\right\}^2+\Delta\left\{\widehat{H}^{\prime}(T)-H^{\prime}_0(T)\right\}^2\Big]\\
\lesssim\ &\Vert\widehat{\vb}-\vb_0\Vert^2+\Vert \widehat{g}-g_0\Vert^2_{L^2([0,1]^d)}+\Vert \widehat{H}-H_0\Vert^2_{\Psi}=d^2(\widehat{\ve},\ve_0)\overset{p}{\rightarrow}0,
\end{align*}
where $\overline{\rho}\in[0,1]$. Since $\lambda_{\epsilon},\Lambda_{\epsilon}$ and the logarithmic function are Lipschitz continuous on compact sets, with conditions (C2), (C4) and (C5), it follows from Theorem 2.10.6 of \citet{van1996weak} that $\{\ell^{*}_{\vb}(\vv;\ve):d(\ve,\ve_0)\leq\delta\}$ is a $\bP$-Donsker class, and $\ell^{*}_{\vb}(\vv;\widehat{\ve})$ belongs to this class for sufficiently large $n$ as a consequence of Theorem 1. Then Theorem 19.24 of \citet{van2000asymptotic} yields
\begin{align}
\label{eq:equation22}
(\Pn-\bP)\left\{\ell^{*}_{\vb}(\vv;\widehat{\ve})-\ell^{*}_{\vb}(\vv;\ve_0)\right\}=o_p(n^{-1/2}).
\end{align}
For any $\bm{a}\in\Psi^p$ and $\bm{b}\in\mathcal{G}^p$, define the function
\begin{align*}
\Gamma(\bm{\mu};\vv)=\Pn\left[\Delta\log \left\{\widehat{H}^\prime(T)-\bm{\mu}^{\top}\bm{a}^{\prime}(T)\right\}+\Delta\log\lambda_{\epsilon}(\zeta_{\widehat{\ve}}(\bm{\mu};\vv))-\Lambda_{\epsilon}(\zeta_{\widehat{\ve}}(\bm{\mu};\vv))\right],
\end{align*}
where $\zeta_{\widehat{\ve}}(\bm{\mu};\vv)=\left\{\widehat{H}(T)-\bm{\mu}^{\top}\bm{a}(T)\right\}+(\widehat{\vb}+\bm{\mu})^{\top}\vz+\left\{\widehat{g}(\vx)-\bm{\mu}^{\top}\bm{b}(\vx)\right\}$. By differentiating $\Gamma$ at $\bm{\mu}=0$ and the definition of $\widehat{\ve}$, we get 
\begin{align*}
\Pn\left\{\dot{\ell}_{\vb}(\vv;\widehat{\ve})-\dot{\ell}_{H}(\vv;\widehat{\ve})\left[\bm{a}\right]-\dot{\ell}_{g}(\vv;\widehat{\ve})\left[\bm{b}\right]\right\}=0.
\end{align*}
From \citet{lu2007estimation}, there exists $\bm{a}_n=(a_{n,1},\cdots,a_{n,p})^{\top}\in\Psi^p$ such that $\Vert a_{*,m}-a_{n,m}\Vert_{\infty}=O(n^{-w\nu})$ and $\Vert a^{\prime}_{*,m}-a^{\prime}_{n,m}\Vert_{\infty}=O(n^{-w\nu})$, $1\leq m\leq p$, thus $\Vert a_{*,m}-a_{n,m}\Vert_{\Psi}=O(n^{-w\nu})$. Note that $\bP \dot{\ell}_H(\vv;\ve_0)[a_{*,m}-a_{n,m}]=0$ because of Lemma \ref{lemma2}, we can write $\Pn\dot{\ell}_H(\vv;\widehat{\ve})[a_{*,m}-a_{n,m}]=J_{n,m}^{(1)}+J_{n,m}^{(2)}$, where $J_{n,m}^{(1)}=(\Pn-\bP)\{\dot{\ell}_H(\vv;\widehat{\ve})[a_{*,m}-a_{n,m}]\}$ and $J_{n,m}^{(2)}=\bP\{\dot{\ell}_H(\vv;\widehat{\ve})[a_{*,m}-a_{n,m}]-\dot{\ell}_H(\vv;\ve_0)[a_{*,m}-a_{n,m}]\}$. By analogy to the proof of (\ref{eq:equation22}), we can show that $J_{n,m}^{(1)}=o_p(n^{-1/2})$ and $J_{n,m}^{(2)}\le[\bP\{\dot{\ell}_H(\vv;\widehat{\ve})[a_{*,m}-a_{n,m}]-\dot{\ell}_H(\vv;\ve_0)[a_{*,m}-a_{n,m}]\}^2]^{1/2}\lesssim d(\widehat{\ve},\ve_0)\Vert a_{*,m}-a_{n,m}\Vert_{\Psi}=o_p(n^{-1/2})$, $1\leq m\leq p$ under conditions $(2w+1)^{-1}<\nu<(2w)^{-1}$ for some $w\geq 1$ and $n\delta_n^4\rightarrow 0$, which implies that
\begin{align*}
\Pn\dot{\ell}_H(\vv;\widehat{\ve})[\bm{a}_{*}-\bm{a}_n]=o_p(n^{-1/2}).
\end{align*}
From \citet{SchmidtHieber2020}, there exists $\bm{b}_n=(b_{n,1},\cdots,b_{n,p})^{\top}\in\mathcal{G}^p$ such that $\Vert b_{*,m}-b_{n,m}\Vert_{L^2([0,1]^d)}=O(\delta_n\log^2n)$, $1\leq m\leq p$. Similarly, we have
\begin{align*}
\Pn\dot{\ell}_g(\vv;\widehat{\ve})[\bm{b}_{*}-\bm{b}_n]=o_p(n^{-1/2}).
\end{align*}
Then it holds that
\begin{equation}
\begin{aligned}
\label{eq:equation23}
&\ \Pn\left\{\ell^{*}_{\vb}(\vv;\widehat{\ve})\right\}=\Pn\left\{\dot{\ell}_{\vb}(\vv;\widehat{\ve})-\dot{\ell}_{H}(\vv;\widehat{\ve})\left[\bm{a}_*\right]-\dot{\ell}_{g}(\vv;\widehat{\ve})\left[\bm{b}_*\right]\right\}\\
=&\ \Pn\left[\dot{\ell}_{\vb}(\vv;\widehat{\ve})-\left\{\dot{\ell}_{H}(\vv;\widehat{\ve})[\bm{a}_n]+\dot{\ell}_H(\vv;\widehat{\ve})[\bm{a}_{*}-\bm{a}_n]\right\}-\left\{\dot{\ell}_{g}(\vv;\widehat{\ve})[\bm{b}_n]+\dot{\ell}_g(\vv;\widehat{\ve})[\bm{b}_{*}-\bm{b}_n]\right\}\right]\\
=&\ o_p(n^{-1/2}).
\end{aligned}
\end{equation}
Additionally, the Taylor expansion gives that
\begin{equation*}
\begin{aligned}
\bP&\left\{\ell^*_{\vb}(\vv;\widehat{\ve})-\ell^*_{\vb}(\vv;\ve_0)\right\}=-\bP\left\{\ell^*_{\vb}(\vv;\ve_0)\dot{\ell}_{\vb}(\vv;\ve_0)^{\top}(\widehat{\vb}-\vb_0)\right\}\\
&\quad\quad-\bP\left[\ell^*_{\vb}(\vv;\ve_0)\left\{\dot{\ell}_{H}(\vv;\ve_0)[\widehat{H}-H_0]+\dot{\ell}_{g}(\vv;\ve_0)[\widehat{g}-g_0]\right\}\right]+O_p(d^2(\widehat{\ve},\ve_0)).
\end{aligned}
\end{equation*}
According to the proof of Theorem 3, we know that the efficient score $\ell^*_{\vb}(\vv;\ve_0)$ is orthogonal to $\dot{\mathbf{P}}_1+\dot{\mathbf{P}}_2$, which is the tangent sumspace generated by the scores $\dot{\ell}_{H}(\vv;\ve_0)[a]$ and $\dot{\ell}_{g}(\vv;\ve_0)[b]$. We then obtain that 
\begin{equation}
\label{eq:equation24}
\begin{aligned}
\bP\left\{\ell^*_{\vb}(\vv;\widehat{\ve})-\ell^*_{\vb}(\vv;\ve_0)\right\}&=-\bP\left\{\ell^*_{\vb}(\vv;\ve_0)\dot{\ell}_{\vb}(\vv;\ve_0)^{\top}(\widehat{\vb}-\vb_0)\right\}+O_p(d^2(\widehat{\ve},\ve_0))\\
&=-\bP\left\{\ell^*_{\vb}(\vv;\ve_0)\ell^*_{\vb}(\vv;\ve_0)^{\top}(\widehat{\vb}-\vb_0)\right\}+O_p(d^2(\widehat{\ve},\ve_0))\\
&=-I(\vb_0)(\widehat{\vb}-\vb_0)+o_p(n^{-1/2})
\end{aligned}
\end{equation}
with $(2w+1)^{-1}<\nu<(2w)^{-1}$ for some $w\geq 1$ and $n\delta_n^4\rightarrow 0$. Hence, combining (\ref{eq:equation22}), (\ref{eq:equation23}) and (\ref{eq:equation24}), we conclude by the central limit theorem that
\begin{align*}
\sqrt{n}(\widehat{\vb}-\vb_0)&=\sqrt{n}I(\vb_0)^{-1}\left\{I(\vb_0)(\widehat{\vb}-\vb_0)\right\}\\
&=\sqrt{n}I(\vb_0)^{-1}\left[-\bP\left\{\ell^*_{\vb}(\vv;\widehat{\ve})-\ell^*_{\vb}(\vv;\ve_0)\right\}+o_p(n^{-1/2})\right]\\
&=\sqrt{n}I(\vb_0)^{-1}\left[-\Pn\left\{\ell^*_{\vb}(\vv;\widehat{\ve})-\ell^*_{\vb}(\vv;\ve_0)\right\}+o_p(n^{-1/2})\right]\\
&=\sqrt{n}I(\vb_0)^{-1}\left[\Pn\left\{\ell^*_{\vb}(\vv;\ve_0)\right\}+o_p(n^{-1/2})\right]\\
&=n^{-1/2}I(\vb_0)^{-1}\sum_{i=1}^n\ell^{*}_{\vb}(\vv_i;\ve_0)+o_p(1)\overset{d}{\rightarrow} N(0, I(\vb_0)^{-1}).
\end{align*}
Therefore, the proof is completed.

\section{Computational details}
\label{s:details}
Here we provide some computational details for the numerical experiments. The DPLTM method is implemented by PyTorch \citep{NEURIPS2019_9015}. The model is fitted by maximizing the log likelihood function with respect to the parameters $\vb$, $\widetilde{\gamma}_j$'s, $W_k$'s and $v_k$'s, all contained in one framework and simultaneously updated through the back-propagation algorithm in each epoch. The Adam optimizer \citep{adam} is employed due to its efficiency and reliability. All components of $\vb$ and all $\widetilde{\gamma}_j$'s are initialized to 0 and -1, respectively, while PyTorch’s default random initialization algorithm is applied to $W_k$'s and $v_k$'s.

The hyperparameters, including the number of hidden layers, the number of neurons in each hidden layer, the number of epochs, the learning rate \citep{goodfellow2016deep}, the dropout rate \citep{srivastava2014dropout} and the number of B-spline basis functions are tuned based on the log likelihood on the validation data via a grid search. We set the number of neurons in each hidden layer to be the same for convenience. We evenly partition the support set $[L_T,U_T]$ and use cubic splines (i.e. $l$=4) to estimate $H$ to achieve sufficient smoothness, with the number of interior knots $K_n$ chosen in the range of $\lfloor n^{1/3}\rfloor$ to $2\lfloor n^{1/3}\rfloor$, and then the number of basis functions $q_n=K_n+l$ can be determined. Candidates for other hyperparameters are summarized in Table~\ref{t:webtable1}. It is worth noting that the optimal combination of hyperparameters can vary from case to case (e.g., different error distributions or censoring rates) and thus should be selected out separately under each setting. 

\begin{table}
\centering
\caption{Candidate values of hyperparameters.}
\label{t:webtable1}
\begin{tabular}{cc}
\toprule
Hyperparameter & Candidate set \\
\midrule
Number of layers & $\left\{\text{1, 2, 3, 4, 5}\right\}$ \\
\midrule
Number of layers & $\left\{\text{5, 10, 15, 20, 50}\right\}$ \\
\midrule
Number of epochs & $\left\{\text{100, 200, 500}\right\}$ \\
\midrule
Learning rate & $\left\{\text{1e-3, 2e-3, 5e-3, 1e-2}\right\}$ \\
\midrule
Dropout rate & $\left\{\text{0, 0.1, 0.2, 0.3}\right\}$ \\
\bottomrule
\end{tabular}
\end{table}

To avoid overfitting, we use the strategy of early stopping \citep{goodfellow2016deep}. To be specific, if the validation loss (i.e. the negative log likelihood on the validation data) stops decreasing for a predetermined number of consecutive epochs, which is an indication of overfitting, we then terminate the training process and obtain the estimates.

For the estimation of the information bound, a cubic spline function is employed to approach $\bm{a}_*$ with the same number of basis functions as in the estimation of $H$, and the DNN utilized to approximate $\bm{b}_*$ has 2 hidden layers with 10 neurons in each. The number of epochs, the learning rate and the dropout rate used to minimize the objective function are 100, 2e-3 and 0, respectively. Therefore, the computational burden is relatively mild. Specifically, the time spent estimating the asymptotic variances is roughly 4 seconds in each simulation run when the sample size $n=1000$, and is approximately doubled when $n$ increases to 2000.

\section{Additional numerical results}
\label{s:furres}
\subsection{Results on the transformation function}
\label{s:H}
Better estimation of the transformation function $H$ brings on more reliable prediction of the survival probability. To measure the estimation accuracy of $\widehat{H}$, we compute the weighted integrated squared error (WISE) defined as
\begin{align*}
\text{WISE}(\widehat{H})=\frac{1}{T_{\text{max}}}\int_0^{T_{\text{max}}}\left\{\widehat{H}(t)-H_0(t)\right\}^2dt,
\end{align*}
where $T_{\text{max}}=\underset{1\le i\le n}{\max}\ T_{i}$ is the maximum  observed event time. Because the interval over which we take the integral varies from case to case, we introduce the weight function $w(t)=1/T_{\text{max}}$ to conveniently compare the results across various configurations. In practice, the integration is carried out numerically using the trapezoidal rule.

Table~\ref{t:webtable2} demonstrates the performance in estimating $H$, where we display the weighted integrated squared error averaged over 200 simulation runs along with its standard deviation. DPLTM leads to only marginally larger WISE than LTM under Case 1 and PLATM under Case 1 and Case 2, but produces considerably more accurate results than the two methods under the more complex setting of Case 3. It can also be observed that low censoring rates generally yield better estimates when the simulation setting meets the model assumption.

\begin{table}
\centering
\caption{The average and standard deviation of the weighted integrated squared error of $\widehat{H}(t)$ for the DPLTM, LTM and PLATM methods.}
\label{t:webtable2}
\scalebox{0.65}{
\begin{adjustbox}{center}
\setlength{\tabcolsep}{7mm}
\begin{tabular}{ccccccccc}
     \toprule
     
     & & & \multicolumn{3}{c}{40\% censoring rate} & \multicolumn{3}{c}{60\% censoring rate} \\
     \cmidrule(r){4-6}\cmidrule(r){7-9}
     & $r$ & $n$ & DPLTM & LTM & PLATM & DPLTM & LTM & PLATM \\    
     \midrule
     Case 1 & 0 & 1000 & 0.0266 & 0.0180 & 0.0209 & 0.0271 & 0.0201 & 0.0216 \\
     (Linear) & & & (0.0213) & (0.0141) & (0.0154) & (0.0195) & (0.0165) & (0.0143) \\
     & & 2000 & 0.0164 & 0.0054 & 0.0102 & 0.0205 & 0.0129 & 0.0157 \\
     & & & (0.0106) & (0.0063) & (0.0069) & (0.0122) & (0.0070) & (0.0083) \\
     & 0.5 & 1000 & 0.0362 & 0.0256 & 0.0279 & 0.0408 & 0.0252 & 0.0289 \\
     & & & (0.0233) & (0.0164) & (0.0185) & (0.0257) & (0.0172) & (0.0156) \\
     & & 2000 & 0.0210 & 0.0116 & 0.0130 & 0.0231 & 0.0125 & 0.0127 \\
     & & & (0.0167) & (0.0084) & (0.0086) & (0.0151) & (0.0105) & (0.0105) \\
     & 1 & 1000 & 0.0488 & 0.0244 & 0.0276 & 0.0511 & 0.0284 & 0.0316 \\
     & & & (0.0355) & (0.0167) & (0.0164) & (0.0327) & (0.0193) & (0.0188) \\
     & & 2000 & 0.0307 & 0.0158 & 0.0145 & 0.0253 & 0.0137 & 0.0148 \\
     & & & (0.0238) & (0.0114) & (0.0107) & (0.0186) & (0.0122) & (0.0128) \\
     \\
     Case 2 & 0 & 1000 & 0.0334 & 0.1321 & 0.0203 & 0.0373 & 0.1333 & 0.0272 \\
     (Additive) & & & (0.0187) & (0.0381) & (0.0151) & (0.0215) & (0.0547) & (0.0190) \\
     & & 2000 & 0.0239 & 0.1288 & 0.0102 & 0.0255 & 0.1369 & 0.0190 \\
     & & & (0.0096) & (0.0239) & (0.0072) & (0.0146) & (0.0394) & (0.0114) \\
     & 0.5 & 1000 & 0.0329 & 0.1158 & 0.0282 & 0.0356 & 0.1013 & 0.0331 \\
     & & & (0.0189) & (0.0484) & (0.0173) & (0.0217) & (0.0533) & (0.0200) \\
     & & 2000 & 0.0228 & 0.1097 & 0.0135 & 0.0255 & 0.1016 & 0.0149 \\
     & & & (0.0147) & (0.0295) & (0.0094) & (0.0171) & (0.0382) & (0.0113) \\
     & 1 & 1000 & 0.0502 & 0.1128 & 0.0351 & 0.0547 & 0.0828 & 0.0366 \\
     & & & (0.0279) & (0.0526) & (0.0220) & (0.0341) & (0.0488) & (0.0265) \\
     & & 2000 & 0.0329 & 0.1016 & 0.0178 & 0.0364 & 0.783 & 0.0173 \\
     & & & (0.0186) & (0.0301) & (0.0142) & (0.0199) & (0.0321) & (0.0136) \\
     \\
     Case 3 & 0 & 1000 & 0.0508 & 0.1890 & 0.0868 & 0.0542 & 0.2260 & 0.0979 \\
     (Deep) & & & (0.0328) & (0.0284) & (0.0235) & (0.0335) & (0.0710) & (0.0524) \\
     & & 2000 & 0.0356 & 0.1920 & 0.0902 & 0.0362 & 0.2203 & 0.0942 \\
     & & & (0.0190) & (0.0215) & (0.0194) & (0.0216) & (0.0433) & (0.0335) \\
     & 0.5 & 1000 & 0.0501 & 0.1974 & 0.0831 & 0.0576 & 0.1827 & 0.0785 \\
     & & & (0.0378) & (0.0429) & (0.0319) & (0.0447) & (0.0720) & (0.0508) \\
     & & 2000 & 0.0382 & 0.2010 & 0.0839 & 0.0364 & 0.1768 & 0.0745 \\
     & & & (0.0245) & (0.0322) & (0.0252) & (0.0301) & (0.0435) & (0.0318) \\
     & 1 & 1000 & 0.0558 & 0.2021 & 0.0865 & 0.0578 & 0.1472 & 0.0755 \\
     & & & (0.0392) & (0.0590) & (0.0395) & (0.0434) & (0.0653) & (0.0459) \\
     & & 2000 & 0.0375 & 0.2004 & 0.0829 & 0.0459 & 0.1388 & 0.0689 \\
     & & & (0.0267) & (0.0408) & (0.0323) & (0.0291) & (0.0380) & (0.0294) \\
     \bottomrule
\end{tabular}
\end{adjustbox}
}
\end{table}

\subsection{Results on prediction}
\label{s:ICI}
We utilize both discrimination and calibration metrics to assess the predictive performance of the three methods. Discrimination means the ability to distinguish subjects with the event of interest from those without, while calibration refers to the agreement between observed and estimated probabilities of the outcome.

The discrimination metric we adopt is the concordance index (C-index) by \citet{harrell1982evaluating}. The C-index is one of the most commonly used metrics to evaluate the predictive power of models in survival analysis. It measures the probability that the predicted survival times preserve the ranks of true survival times, which is defined as
\begin{align*}
\text{C}=\mathbb{P}(\widehat{T}_i<\widehat{T}_j\vert T_i<T_j,\Delta_i=1),
\end{align*}
where $\widehat{T}_i$ denotes the predicted survival time of the $i$-th individual. Larger C-index values indicate better predictive performance. For the semiparametric transformation model, the C-index can be empirically calculated as
\begin{align*}
\widehat{\text{C}}=\frac{\sum_{i=1}^{n_{\text{test}}}\sum_{j=1}^{n_{\text{test}}}\Delta_i1(T_i\leq T_j)1(\widehat{\vb}\vz_i+\widehat{g}(\vx_i)\geq\widehat{\vb}\vz_j+\widehat{g}(\vx_j))}{\sum_{i=1}^{n_{\text{test}}}\sum_{j=1}^{n_{\text{test}}}\Delta_i1(T_i\leq T_j)}.
\end{align*}

The calibration metric we choose is the integrated calibration index (ICI) by \citet{austin2020graphical}. It quantifies the consistency between observed and estimated probabilities of the time-to-event outcome prior to a specified time $t_0$. It is given by
\begin{align*}
\text{ICI}(t_0)=\frac{1}{n_{\text{test}}}\sum_{i=1}^{n_{\text{test}}}\left\vert \widetilde{P}_i^{t_0}-\widehat{P}_i^{t_0}\right\vert,
\end{align*}
where $\widehat{P}_i^{t_0}=F_{\epsilon}(\widehat{H}(t_0)+\widehat{\vb}^{\top}\vz_i+\widehat{g}(\vx_i))$ is the predicted probability of the outcome prior to $t_0$ for the $i$-th individual, and $\widetilde{P}_i^{t_0}$ is an estimate of the observed probability given the predicted probability. Specifically, we fit the hazard regression model \citep{kooperberg1995hazard}:
\begin{align*}
\log(h(t))=\psi(\log(-\log(1-\widehat{P}^{t_0})),t),  
\end{align*}
where $h(t)$ is the hazard function of the outcome and $\psi$ is a nonparametric function to be estimated. Then $\widetilde{P}_i^{t_0}=1-\exp\left\{-\int_0^{t_0} \widehat{h}_i(s)ds\right\}$, with $\widehat{h}_i(t)=\exp\left\{\widehat{\psi}(\log(-\log(1-\widehat{P}_i^{t_0})),t)\right\}$. Smaller ICI values imply greater predictive ability. In practice, we compute the ICI at the 25th ($t_{25}$), 50th ($t_{50}$) and 75th ($t_{75}$) percentiles of observed event times to assess calibration.

\begin{table}
\centering
\caption{The average and standard deviation of the C-index for the DPLTM, LTM and PLATM methods.}
\label{t:webtable3}
\scalebox{0.65}{
\begin{adjustbox}{center}
\setlength{\tabcolsep}{7mm}
\begin{tabular}{ccccccccc}
     \toprule
     & & & \multicolumn{3}{c}{40\% censoring rate} & \multicolumn{3}{c}{60\% censoring rate} \\
     \cmidrule(r){4-6}\cmidrule(r){7-9}
     & $r$ & $n$ & DPLTM & LTM & PLATM & DPLTM & LTM & PLATM \\    
     \midrule
     Case 1 & 0 & 1000 & 0.8374 & 0.8379 & 0.8298 & 0.8474 & 0.8475 & 0.8402 \\
     (Linear) & & & (0.0171) & (0.0167) & (0.0172) & (0.0208) & (0.0201) & (0.0209)  \\
     & & 2000 & 0.8358 & 0.8375 & 0.8334 & 0.8461 & 0.8484 & 0.8448 \\
     & & & (0.0121) & (0.0112) & (0.0113) & (0.0140) & (0.0134) & (0.0137) \\
     & 0.5 & 1000 & 0.8153 & 0.8162 & 0.8064 & 0.8281 & 0.8292 & 0.8196 \\
     & & & (0.0195) & (0.0184) & (0.0189) & (0.0229) & (0.0217) & (0.0225) \\
     & & 2000 & 0.8155 & 0.8148 & 0.8098 & 0.8221 & 0.8299 & 0.8246 \\
     & & & (0.0139) & (0.0123) & (0.0126) & (0.0152) & (0.0143) & (0.0146) \\
     & 1 & 1000 & 0.8067 & 0.8042 & 0.8106 & 0.8058 & 0.8110 & 0.8198 \\
     & & & (0.0192) & (0.0199) & (0.0200) & (0.0228) & (0.0233) & (0.0239) \\
     & & 2000 & 0.8161 & 0.8020 & 0.8062 & 0.8063 & 0.8105 & 0.8154 \\
     & & & (0.0140) & (0.0129) & (0.0130) & (0.0153) & (0.0151) & (0.0154) \\
     \\
     Case 2 & 0 & 1000 & 0.8161 & 0.7265 & 0.8251 & 0.8203 & 0.7462 & 0.8307 \\
     (Additive) & & & (0.0183) & (0.0207) & (0.0167) & (0.0224) & (0.0248) & (0.0190) \\
     & & 2000 & 0.8192 & 0.7269 & 0.8261 & 0.8255 & 0.7467 & 0.8329 \\
     & & & (0.0123) & (0.0163) & (0.0126) & (0.0146) & (0.0194) & (0.0151) \\
     & 0.5 & 1000 & 0.7896 & 0.7192 & 0.8016 & 0.7988 & 0.7360 & 0.8114 \\
     & & & (0.0218) & (0.0221) & (0.0176) & (0.0249) & (0.0262) & (0.0203) \\
     & & 2000 & 0.7945 & 0.7188 & 0.8030 & 0.8055 & 0.7358 & 0.8141 \\
     & & & (0.0137) & (0.0170) & (0.0137) & (0.0152) & (0.0202) & (0.0162) \\
     & 1 & 1000 & 0.7667 & 0.6981 & 0.7803 & 0.7792 & 0.7183 & 0.7931 \\
     & & & (0.0214) & (0.0214) & (0.0186) & (0.0250) & (0.0253) & (0.0213) \\
     & & 2000 & 0.7728 & 0.6975 & 0.7820 & 0.7860 & 0.7184 & 0.7961 \\
     & & & (0.0139) & (0.0160) & (0.0146) & (0.0162) & (0.0197) & (0.0170) \\
     \\
     Case 3 & 0 & 1000 & 0.8020 & 0.6600 & 0.7452 & 0.8023 & 0.6729 & 0.7543 \\
     (Deep) & & & (0.0235) & (0.0246) & (0.0244) & (0.0304) & (0.0284) & (0.0271) \\
     & & 2000 & 0.8096 & 0.6602 & 0.7460 & 0.8122 & 0.6737 & 0.7569 \\
     & & & (0.0147) & (0.0168) & (0.0165) & (0.0170) & (0.0198) & (0.0183) \\
     & 0.5 & 1000 & 0.7793 & 0.6516 & 0.7295 & 0.7785 & 0.6636 & 0.7398 \\
     & & & (0.0237) & (0.0258) & (0.0246) & (0.0280) & (0.0294) & (0.0282) \\
     & & 2000 & 0.7878 & 0.6528 & 0.7316 & 0.7928 & 0.6647 & 0.7434 \\
     & & & (0.0171) & (0.0180) & (0.0169) & (0.0201) & (0.0205) & (0.0192) \\
     & 1 & 1000 & 0.7547 & 0.6430 & 0.7136 & 0.7586 & 0.6540 & 0.7257 \\
     & & & (0.0236) & (0.0235) & (0.0252) & (0.0294) & (0.0293) & (0.0285) \\
     & & 2000 & 0.7657 & 0.6448 & 0.7165 & 0.7741 & 0.6553 & 0.7295 \\
     & & & (0.0166) & (0.0169) & (0.0171) & (0.0197) & (0.0201) & (0.0193) \\
     \bottomrule
\end{tabular}
\end{adjustbox}
}
\end{table}

Table~\ref{t:webtable3} exhibits the average and standard deviation of the C-index on the test data based on 200 simulation runs. Unsurprisingly, predictions obtained by the DPLTM method are comparable to or only a little worse than those by LTM and PLATM in simple settings, but DPLTM shows great superiority over the other two models under the more complex Case 3 as it produces much more accurate estimates for $\vb$ and $g$.

Tables~\ref{t:webtable4},~\ref{t:webtable5} and~\ref{t:webtable6} display the average and standard deviation of the ICI at $t_{25}$, $t_{50}$ and $t_{75}$ on the test data over 200 simulation runs. Similarly, DPLTM markedly outperforms LTM and PLATM when the true nonparametric function is highly nonlinear, and still maintains robust competitiveness compared to correctly specified models under simpler cases. Furthermore, the metric as well as its variability generally tends to increase as the time at which the calibration of models is assessed increases.

\begin{table}
\centering
\caption{The average and standard deviation of the ICI at $t_{25}$ for the DPLTM, LTM and PLATM methods.}
\label{t:webtable4}
\scalebox{0.65}{
\begin{adjustbox}{center}
\setlength{\tabcolsep}{7mm}
\begin{tabular}{ccccccccc}
     \toprule
     & & & \multicolumn{3}{c}{40\% censoring rate} & \multicolumn{3}{c}{60\% censoring rate} \\
     \cmidrule(r){4-6}\cmidrule(r){7-9}
     & $r$ & $n$ & DPLTM & LTM & PLATM & DPLTM & LTM & PLATM \\    
     \midrule
     Case 1 & 0 & 1000 & 0.0193 & 0.0178 & 0.0188 & 0.0204 & 0.0176 & 0.0191 \\
     (Linear) & & & (0.0124) & (0.0110) & (0.0111) & (0.0109) & (0.0102) & (0.0110)  \\
     & & 2000 & 0.0127 & 0.0123 & 0.0124 & 0.0129 & 0.0121 & 0.0121 \\
     & & & (0.0084) & (0.0078) & (0.0077) & (0.0082) & (0.0078) & (0.0079) \\
     & 0.5 & 1000 & 0.0314 & 0.0315 & 0.0303 & 0.0254 & 0.0238 & 0.0260 \\
     & & & (0.0142) & (0.0158) & (0.0154) & (0.0128) & (0.0120) & (0.0121) \\
     & & 2000 & 0.0262 & 0.0246 & 0.0264 & 0.0208 & 0.0191 & 0.0198 \\
     & & & (0.0093) & (0.0109) & (0.0097) & (0.0096) & (0.0096) & (0.0087) \\
     & 1 & 1000 & 0.0358 & 0.0407 & 0.0362 & 0.0320 & 0.0306 & 0.0321 \\
     & & & (0.0196) & (0.0242) & (0.0189) & (0.0138) & (0.0134) & (0.0140) \\
     & & 2000 & 0.0239 & 0.0231 & 0.0303 & 0.0231 & 0.0214 & 0.0217 \\
     & & & (0.0133) & (0.0150) & (0.0136) & (0.0101) & (0.0106) & (0.0103) \\
     \\
     Case 2 & 0 & 1000 & 0.0199 & 0.0397 & 0.0189 & 0.0208 & 0.0388 & 0.0180 \\
     (Additive) & & & (0.0133) & (0.0187) & (0.0110) & (0.0109) & (0.0123) & (0.0108) \\
     & & 2000 & 0.0127 & 0.0366 & 0.0113 & 0.0127 & 0.0248 & 0.0112 \\
     & & & (0.0085) & (0.0123) & (0.0077) & (0.0078) & (0.0125) & (0.0069) \\
     & 0.5 & 1000 & 0.0343 & 0.0471 & 0.0288 & 0.0284 & 0.0351 & 0.0240 \\
     & & & (0.0192) & (0.0217) & (0.0129) & (0.0151) & (0.0183) & (0.0128) \\
     & & 2000 & 0.0237 & 0.0290 & 0.0220 & 0.0199 & 0.0253 & 0.0186 \\
     & & & (0.0119) & (0.0127) & (0.0095) & (0.0100) & (0.0131) & (0.0091) \\
     & 1 & 1000 & 0.0349 & 0.0420 & 0.0341 & 0.0339 & 0.0422 & 0.0310 \\
     & & & (0.0172) & (0.0189) & (0.0144) & (0.0150) & (0.0233) & (0.0135) \\
     & & 2000 & 0.0228 & 0.0290 & 0.0221 & 0.0223 & 0.0301 & 0.0222 \\
     & & & (0.0117) & (0.0145) & (0.0094) & (0.0103) & (0.0166) & (0.0101) \\
     \\
     Case 3 & 0 & 1000 & 0.0210 & 0.0430 & 0.0409 & 0.0206 & 0.0415 & 0.0362 \\
     (Deep) & & & (0.0136) & (0.0236) & (0.0229) & (0.0127) & (0.0218) & (0.0190) \\
     & & 2000 & 0.0139 & 0.0409 & 0.0369 & 0.0143 & 0.0342 & 0.0307 \\
     & & & (0.0091) & (0.0192) & (0.0182) & (0.0084) & (0.0181) & (0.0149) \\
     & 0.5 & 1000 & 0.0334 & 0.0407 & 0.0394 & 0.0266 & 0.0354 & 0.0403 \\
     & & & (0.0152) & (0.0212) & (0.0187) & (0.0149) & (0.0184) & (0.0215) \\
     & & 2000 & 0.0267 & 0.0335 & 0.0296 & 0.0229 & 0.0321 & 0.0318 \\
     & & & (0.0135) & (0.0162) & (0.0147) & (0.0112) & (0.0131) & (0.0147) \\
     & 1 & 1000 & 0.0326 & 0.0411 & 0.0425 & 0.0336 & 0.0373 & 0.0410 \\
     & & & (0.0165) & (0.0200) & (0.0216) & (0.0160) & (0.0248) & (0.0247) \\
     & & 2000 & 0.0215 & 0.0316 & 0.0302 & 0.0251 & 0.0299 & 0.0328 \\
     & & & (0.0123) & (0.0159) & (0.0157) & (0.0124) & (0.0208) & (0.0176) \\
     \bottomrule
\end{tabular}
\end{adjustbox}
}
\end{table}

\begin{table}
\centering
\caption{The average and standard deviation of the ICI at $t_{50}$ for the DPLTM, LTM and PLATM methods.}
\label{t:webtable5}
\scalebox{0.65}{
\begin{adjustbox}{center}
\setlength{\tabcolsep}{7mm}
\begin{tabular}{ccccccccc}
     \toprule
     & & & \multicolumn{3}{c}{40\% censoring rate} & \multicolumn{3}{c}{60\% censoring rate} \\
     \cmidrule(r){4-6}\cmidrule(r){7-9}
     & $r$ & $n$ & DPLTM & LTM & PLATM & DPLTM & LTM & PLATM \\    
     \midrule
     Case 1 & 0 & 1000 & 0.0249 & 0.0220 & 0.0257 & 0.0238 & 0.0244 & 0.0257 \\
     (Linear) & & & (0.0167) & (0.0131) & (0.0134) & (0.0147) & (0.0149) & (0.0148)  \\
     & & 2000 & 0.0163 & 0.0154 & 0.0156 & 0.0169 & 0.0160 & 0.0158 \\
     & & & (0.0105) & (0.0096) & (0.0098) & (0.0109) & (0.0098) & (0.0100) \\
     & 0.5 & 1000 & 0.0349 & 0.0334 & 0.0385 & 0.0315 & 0.0310 & 0.0324 \\
     & & & (0.0201) & (0.0199) & (0.0204) & (0.0168) & (0.0161) & (0.0167) \\
     & & 2000 & 0.0286 & 0.0238 & 0.0275 & 0.0224 & 0.0214 & 0.0209 \\
     & & & (0.0146) & (0.0108) & (0.0155) & (0.0118) & (0.0111) & (0.0112) \\
     & 1 & 1000 & 0.0408 & 0.0399 & 0.0419 & 0.0356 & 0.0338 & 0.0360 \\
     & & & (0.0187) & (0.0242) & (0.0181) & (0.0169) & (0.0179) & (0.0184) \\
     & & 2000 & 0.0250 & 0.0303 & 0.0269 & 0.0240 & 0.0233 & 0.0248 \\
     & & & (0.0136) & (0.0199) & (0.0132) & (0.0102) & (0.0121) & (0.0112) \\
     \\
     Case 2 & 0 & 1000 & 0.0274 & 0.0457 & 0.0241 & 0.0275 & 0.0436 & 0.0244 \\
     (Additive) & & & (0.0149) & (0.0237) & (0.0140) & (0.0129) & (0.0166) & (0.0150) \\
     & & 2000 & 0.0172 & 0.0343 & 0.0145 & 0.0173 & 0.0302 & 0.0151 \\
     & & & (0.0103) & (0.0163) & (0.0093) & (0.0106) & (0.0162) & (0.0104) \\
     & 0.5 & 1000 & 0.0402 & 0.0515 & 0.0392 & 0.0354 & 0.0477 & 0.0302 \\
     & & & (0.0234) & (0.0247) & (0.0246) & (0.0177) & (0.0245) & (0.0167) \\
     & & 2000 & 0.0283 & 0.0358 & 0.0297 & 0.0229 & 0.0309 & 0.0208 \\
     & & & (0.0169) & (0.0136) & (0.0166) & (0.0117) & (0.0178) & (0.0112) \\
     & 1 & 1000 & 0.0425 & 0.0489 & 0.0400 & 0.0344 & 0.0502 & 0.0343 \\
     & & & (0.0182) & (0.0235) & (0.0209) & (0.0197) & (0.0257) & (0.0164) \\
     & & 2000 & 0.0266 & 0.0411 & 0.0310 & 0.0292 & 0.0361 & 0.0223 \\
     & & & (0.0106) & (0.0227) & (0.0156) & (0.0141) & (0.0182) & (0.0121) \\
     \\
     Case 3 & 0 & 1000 & 0.0274 & 0.0549 & 0.0503 & 0.0276 & 0.0553 & 0.0501 \\
     (Deep) & & & (0.0185) & (0.0252) & (0.0240) & (0.0163) & (0.0265) & (0.0265) \\
     & & 2000 & 0.0193 & 0.0481 & 0.0357 & 0.0182 & 0.0462 & 0.0333 \\
     & & & (0.0128) & (0.0185) & (0.0175) & (0.0116) & (0.0221) & (0.0186) \\
     & 0.5 & 1000 & 0.0425 & 0.0484 & 0.0510 & 0.0342 & 0.0543 & 0.0474 \\
     & & & (0.0190) & (0.0264) & (0.0272) & (0.0184) & (0.0224) & (0.0230) \\
     & & 2000 & 0.0292 & 0.0375 & 0.0345 & 0.0247 & 0.0404 & 0.0306 \\
     & & & (0.0125) & (0.0200) & (0.0219) & (0.0133) & (0.0168) & (0.0180) \\
     & 1 & 1000 & 0.0424 & 0.0528 & 0.0491 & 0.0399 & 0.0518 & 0.0500 \\
     & & & (0.0231) & (0.0271) & (0.0225) & (0.0213) & (0.0264) & (0.0273) \\
     & & 2000 & 0.0293 & 0.0361 & 0.0351 & 0.0295 & 0.0432 & 0.0339 \\
     & & & (0.0130) & (0.0182) & (0.0165) & (0.0154) & (0.0219) & (0.0181) \\
     \bottomrule
\end{tabular}
\end{adjustbox}
}
\end{table}

\begin{table}
\centering
\caption{The average and standard deviation of the ICI at $t_{75}$ for the DPLTM, LTM and PLATM methods.}
\label{t:webtable6}
\scalebox{0.65}{
\begin{adjustbox}{center}
\setlength{\tabcolsep}{7mm}
\begin{tabular}{ccccccccc}
     \toprule
     & & & \multicolumn{3}{c}{40\% censoring rate} & \multicolumn{3}{c}{60\% censoring rate} \\
     \cmidrule(r){4-6}\cmidrule(r){7-9}
     & $r$ & $n$ & DPLTM & LTM & PLATM & DPLTM & LTM & PLATM \\    
     \midrule
     Case 1 & 0 & 1000 & 0.0289 & 0.0258 & 0.0293 & 0.0296 & 0.0290 & 0.0314 \\
     (Linear) & & & (0.0169) & (0.0156) & (0.0163) & (0.0172) & (0.0178) & (0.0188)  \\
     & & 2000 & 0.0192 & 0.0186 & 0.0188 & 0.0213 & 0.0197 & 0.0193 \\
     & & & (0.0113) & (0.0114) & (0.0118) & (0.0135) & (0.0125) & (0.0126) \\
     & 0.5 & 1000 & 0.0364 & 0.0324 & 0.0403 & 0.0343 & 0.0381 & 0.0369 \\
     & & & (0.0226) & (0.0169) & (0.0221) & (0.0189) & (0.0197) & (0.0194) \\
     & & 2000 & 0.0248 & 0.0293 & 0.0288 & 0.0272 & 0.0261 & 0.0259 \\
     & & & (0.0114) & (0.0097) & (0.0170) & (0.0122) & (0.0136) & (0.0133) \\
     & 1 & 1000 & 0.0420 & 0.0494 & 0.0488 & 0.0405 & 0.0426 & 0.0415 \\
     & & & (0.0215) & (0.0264) & (0.0248) & (0.0207) & (0.0224) & (0.0214) \\
     & & 2000 & 0.0267 & 0.0276 & 0.0307 & 0.0257 & 0.0263 & 0.0290 \\
     & & & (0.0149) & (0.0167) & (0.0152) & (0.0136) & (0.0147) & (0.0143) \\
     \\
     Case 2 & 0 & 1000 & 0.0270 & 0.0472 & 0.0287 & 0.0336 & 0.0466 & 0.0277 \\
     (Additive) & & & (0.0104) & (0.0287) & (0.0160) & (0.0141) & (0.0267) & (0.0184) \\
     & & 2000 & 0.0216 & 0.0471 & 0.0187 & 0.0244 & 0.0357 & 0.0188 \\
     & & & (0.0082) & (0.0208) & (0.0100) & (0.0117) & (0.0173) & (0.0116) \\
     & 0.5 & 1000 & 0.0291 & 0.0530 & 0.0424 & 0.0293 & 0.0506 & 0.0361 \\
     & & & (0.0142) & (0.0259) & (0.0229) & (0.0136) & (0.0301) & (0.0206) \\
     & & 2000 & 0.0230 & 0.0395 & 0.0325 & 0.0268 & 0.0389 & 0.0266 \\
     & & & (0.0073) & (0.0163) & (0.0171) & (0.0096) & (0.0232) & (0.0140) \\
     & 1 & 1000 & 0.0414 & 0.0510 & 0.0456 & 0.0401 & 0.0589 & 0.0397 \\
     & & & (0.0279) & (0.0336) & (0.0267) & (0.0228) & (0.0299) & (0.0198) \\
     & & 2000 & 0.0245 & 0.0362 & 0.0359 & 0.0287 & 0.0410 & 0.0299 \\
     & & & (0.0158) & (0.0217) & (0.0182) & (0.0139) & (0.0234) & (0.0156) \\
     \\
     Case 3 & 0 & 1000 & 0.0312 & 0.0550 & 0.0505 & 0.0332 & 0.0587 & 0.0534 \\
     (Deep) & & & (0.0189) & (0.0275) & (0.0259) & (0.0191) & (0.0320) & (0.0277) \\
     & & 2000 & 0.0226 & 0.0517 & 0.0391 & 0.0248 & 0.0550 & 0.0364 \\
     & & & (0.0128) & (0.0236) & (0.0192) & (0.0147) & (0.0223) & (0.0195) \\
     & 0.5 & 1000 & 0.0451 & 0.0488 & 0.0485 & 0.0440 & 0.0601 & 0.0530 \\
     & & & (0.0216) & (0.0294) & (0.0288) & (0.0203) & (0.0256) & (0.0243) \\
     & & 2000 & 0.0326 & 0.0403 & 0.0433 & 0.0291 & 0.0446 & 0.0365 \\
     & & & (0.0155) & (0.0246) & (0.0240) & (0.0138) & (0.0177) & (0.0184) \\
     & 1 & 1000 & 0.0423 & 0.0530 & 0.0517 & 0.0451 & 0.0585 & 0.0565 \\
     & & & (0.0240) & (0.0263) & (0.0264) & (0.0228) & (0.0326) & (0.0284) \\
     & & 2000 & 0.0271 & 0.0346 & 0.0360 & 0.0303 & 0.0446 & 0.0334 \\
     & & & (0.0161) & (0.0196) & (0.0212) & (0.0169) & (0.0245) & (0.0189) \\
     \bottomrule
\end{tabular}
\end{adjustbox}
}
\end{table}

\subsection{Comparison between DPLTM and DPLCM}
\label{s:comparison}

We make a comprehensive comparison between our DPLTM method and the DPLCM method proposed by \citet{Zhong2022} in both estimation and prediction. The partially linear Cox model can be represented by its conditional hazard function with the form of
\begin{align}
\label{eq:plcm}
\lambda(u\vert\vz,\vx)=\lambda_0(u)\exp\left\{\vb^{\top}\vz+g(\vx)\right\},
\end{align}
where $\lambda_0$ is an unknown baseline hazard function. Given $\{\vv_i=(T_i,\Delta_i,\vz_i,\vx_i),\ i=1,\cdots,n\}$, the parameter vector $\vb$ and the nonparametric function $g$ can be estimated by maximizing the log partial likelihood \citep{cox1975partial}
\begin{align*}
(\widehat{\vb},\widehat{g})=\underset{(\vb,g)\in\R^p\times\mathcal{G}}{\arg\max}\mathscr{L}_n(\vb,g),
\end{align*}
where $\mathscr{L}_n(\vb,g)=\sum_{i=1}^{n}\Delta_{i}\left[\vb^{\top}\vz_i+g(\vx_i)-\log\sum_{j:T_{j}\geq T_{i}}\exp\left\{\vb^{\top}\vz_j+g(\vx_j)\right\}\right]$. Moreover, the estimate of the cumulative baseline hazard function $\Lambda_0(t)=\int_0^t\lambda_0(s)ds$ is further given by the Breslow estimator \citep{breslow1972discussion} as
\begin{align*}
\widehat{\Lambda}_0(t)=\sum_{i=1}^n\frac{\Delta_iI(T_i\le t)}{\sum_{j:T_j\ge T_i}\exp\left\{\widehat{\vb}^{\top}\vz_j+\widehat{g}(\vx_j)\right\}}.
\end{align*}
Then the predicted probability of the outcome prior to $t_0$ can be calculated as $\widehat{P}_i^{t_0}=1-\exp\left\{-\widehat{\Lambda}_0(t_0)\exp\left\{\widehat{\vb}^{\top}\vz_i+\widehat{g}(\vx_i)\right\}\right\}$. On the other hand, the Cox proportional hazards model can be seen as a particular case of the class of semiparametric transformation models. In fact, (\ref{eq:plcm}) can be restated as
\begin{align*}
\log\Lambda_0(U)=-\vb^{\top}\vz-g(\vx)+\epsilon,
\end{align*}
where the error term $\epsilon$ follows the extreme value distribution. It is easy to see that the term $\log \Lambda_0(U)$ in the Cox model serves the role of $H(U)$ in the class of transformation models. Therefore, we can compute all the evaluation metrics that have been mentioned previously for the DPLTM and DPLCM methods, and then assess their estimation accuracy and predictive power across various configurations. We only carry out simulations for Case 3 of $g_0$ since we are comparing two DNN-based models.

Table~\ref{t:webtable7} presents a summary of the estimation accuracy of DPLTM and DPLCM. It is not surprising that DPLCM does slightly better than DPLTM with regard to all evaluation metrics when $r=0$, i.e. the true model is exactly the Cox proportional hazards model. But DPLTM substantially outperforms DPLCM in the case of $r=0.5$ or 1, and the performance gap becomes broader when $r$ increases from 0.5 to 1.

\begin{table}
\centering
\caption{Comparison of estimation accuracy between DPLTM and DPLCM.}
\label{t:webtable7}
\scalebox{0.6}{
\begin{adjustbox}{center}
\setlength{\tabcolsep}{7mm}
\begin{tabular}{ccccccccc}
     \toprule
     & & & \multicolumn{2}{c}{$r=0$} & \multicolumn{2}{c}{$r=0.5$} & \multicolumn{2}{c}{$r=1$}\\
     \cmidrule(r){4-5}\cmidrule(r){6-7}\cmidrule(r){8-9}
     & Censoring rate & $n$ & DPLTM & DPLCM & DPLTM & DPLCM & DPLTM & DPLCM \\    
     \midrule
     The bias and standard & 40\% & 1000 & -0.0395 & -0.0306 & -0.0457 & -0.1975 & -0.0570 & -0.3033 \\
     deviation of $\widehat{\beta}_1$ & & & (0.1012) & (0.1057) & (0.1293) & (0.1108) & (0.1544) & (0.1109)  \\
     & & 2000 & -0.0322 & -0.0275 & -0.0350 & -0.2186 & -0.0344 & -0.3339 \\
     & & & (0.0683) & (0.0733) & (0.0896) & (0.0770) & (0.1012) & (0.0779) \\
     & 60\% & 1000 & -0.0474 & -0.0460 & -0.0586 & -0.1449 & -0.0463 & -0.2399 \\
     & & & (0.1239) & (0.1393) & (0.1577) & (0.1430) & (0.1764) & (0.1402) \\
     & & 2000 & -0.0286 & -0.0314 & -0.0478 & -0.1708 & -0.0378 & -0.2698 \\
     & & & (0.0833) & (0.0920) & (0.1022) & (0.0940) & (0.1138) & (0.0948) \\
     \\
     The bias and standard & 40\% & 1000 & 0.0466 & 0.0340 & 0.0409 & 0.1952 & 0.0375 & 0.3037 \\
     deviation of $\widehat{\beta}_2$ & & & (0.0982) & (0.1067) & (0.1242) & (0.11057) & (0.1450) & (0.1075) \\
     & & 2000 & 0.0389 & 0.0267 & 0.0265 & 0.2206 & 0.0245 & 0.3360 \\
     & & & (0.0720) & (0.0749) & (0.0924) & (0.0743) & (0.1028) & (0.0761) \\
     & 60\% & 1000 & 0.0559 & 0.0374 & 0.0382 & 0.1431 & 0.0438 & 0.2418 \\
     & & & (0.1186) & (0.1291) & (0.1473) & (0.1309) & (0.1680) & (0.1344) \\
     & & 2000 & 0.0406 & 0.0280 & 0.0244 & 0.1612 & 0.0299 & 0.2645 \\
     & & & (0.0828) & (0.0888) & (0.1007) & (0.0907) & (0.1140) & (0.0918) \\
     \\
     The empirical coverage & 40\% & 1000 & 0.925 & 0.945 & 0.925 & 0.470 & 0.930 & 0.160 \\
     probability of 95\% & & 2000 & 0.945 & 0.940 & 0.920 & 0.145 & 0.925 & 0.010 \\
     confidence intervals for $\beta_{01}$ & 60\% & 1000 & 0.955 & 0.925 & 0.915 & 0.745 & 0.915 & 0.470 \\
     & & 2000 & 0.920 & 0.950 & 0.920 & 0.450 & 0.925 & 0.145 \\
     \\
     The empirical coverage & 40\% & 1000 & 0.935 & 0.920 & 0.935 & 0.465 & 0.955 & 0.150 \\
     probability of 95\% & & 2000 & 0.920 & 0.940 & 0.925 & 0.125 & 0.940 & 0.010 \\
     confidence intervals for $\beta_{02}$ & 60\% & 1000 & 0.915 & 0.955 & 0.935 & 0.770 & 0.950 & 0.455 \\
     & & 2000 & 0.935 & 0.950 & 0.915 & 0.485 & 0.955 & 0.125 \\
     \\
     The average and & 40\% & 1000 & 0.4069 & 0.3382 & 0.4032 & 0.5705 & 0.4516 & 0.7333 \\
     standard deviation of& & & (0.0549) & (0.0434) & (0.0696) & (0.0563) & (0.0624) & (0.0842) \\
     the relative error of $\widehat{g}$ & & 2000 & 0.3421 & 0.2796 & 0.3590 & 0.5130 & 0.3788 & 0.7080 \\
     & & & (0.0416) & (0.0305) & (0.0437) & (0.0439) & (0.0487) & (0.0510) \\
     & 60\% & 1000 & 0.4287 & 0.4027 & 0.4739 & 0.5944 & 0.4835 & 0.7678 \\
     & & & (0.0759) & (0.0633) & (0.0890) & (0.0712) & (0.0851) & (0.0954) \\
     & & 2000 & 0.3672 & 0.3043 & 0.4186 & 0.5478 & 0.4390 & 0.7485 \\
     & & & (0.0593) & (0.0457) & (0.0567) & (0.0482) & (0.0559) & (0.0664) \\
     \\
     The average and & 40\% & 1000 & 0.0508 & 0.0416 & 0.0501 & 0.1881 & 0.0558 & 0.2187 \\
     standard deviation of the & & & (0.0328) & (0.0287) & (0.0378) & (0.0516) & (0.0392) & (0.0628)\\
     WISE of $\widehat{H}(t)$ or $\log \widehat{\Lambda}_0(t)$ & & 2000 & 0.0356 & 0.0265 & 0.0382 & 0.1584 & 0.0375 & 0.2065 \\
     & & & (0.0190) & (0.0183) & (0.0245) & (0.0297) & (0.0267) & (0.0401)\\
     & 60\% & 1000 & 0.0542 & 0.0511 & 0.0576 & 0.1407 & 0.0578 & 0.1918 \\
     & & & (0.0335) & (0.0376) & (0.0447) & (0.0492) & (0.0434) & (0.0763)\\
     & & 2000 & 0.0362 & 0.0312 & 0.0364 & 0.1351 & 0.0459 & 0.1942 \\
     & & & (0.0216) & (0.0248) & (0.0301) & (0.0271) & (0.0291) & (0.0508)\\
     \bottomrule
\end{tabular}
\end{adjustbox}
}
\end{table}

\begin{table}
\centering
\caption{Comparison of predictive power between DPLTM and DPLCM.}
\label{t:webtable8}
\scalebox{0.6}{
\begin{adjustbox}{center}
\setlength{\tabcolsep}{7mm}
\begin{tabular}{ccccccccc}
     \toprule
     & & & \multicolumn{2}{c}{$r=0$} & \multicolumn{2}{c}{$r=0.5$} & \multicolumn{2}{c}{$r=1$}\\
     \cmidrule(r){4-5}\cmidrule(r){6-7}\cmidrule(r){8-9}
     & Censoring rate & $n$ & DPLTM & DPLCM & DPLTM & DPLCM & DPLTM & DPLCM \\    
     \midrule
     The average and & 40\% & 1000 & 0.8020 & 0.8045 & 0.7793 & 0.7786 & 0.7547 & 0.7542 \\
     standard deviation & & & (0.0235) & (0.0208) & (0.0237) & (0.0222) & (0.0236) & (0.0244)  \\
     of the C-index & & 2000 & 0.8096 & 0.8104 & 0.7878 & 0.7870 & 0.7657 & 0.7672 \\
     & & & (0.0147) & (0.0126) & (0.0171) & (0.0141) & (0.0166) & (0.0158) \\
     & 60\% & 1000 & 0.8023 & 0.8035 & 0.7785 & 0.7811 & 0.7586 & 0.7623 \\
     & & & (0.0304) & (0.0234) & (0.0280) & (0.0262) & (0.0294) & (0.0283) \\
     & & 2000 & 0.8122 & 0.8137 & 0.7928 & 0.7942 & 0.7741 & 0.7735 \\
     & & & (0.0170) & (0.0162) & (0.0201) & (0.0170) & (0.0197) & (0.0173) \\
     \\
     The average and & 40\% & 1000 & 0.0210 & 0.0193 & 0.0326 & 0.0411 & 0.0334 & 0.0440 \\
     standard deviation & & & (0.0136) & (0.0107) & (0.0152) & (0.0203) & (0.0165) & (0.0235) \\
     of the ICI at $t_{25}$ & & 2000 & 0.0139 & 0.0130 & 0.0267 & 0.0320 & 0.0215 & 0.0282 \\
     & & & (0.0091) & (0.0070) & (0.0135) & (0.0168) & (0.0123) & (0.0137) \\
     & 60\% & 1000 & 0.0206 & 0.0168 & 0.0266 & 0.0354 & 0.0336 & 0.0428 \\
     & & & (0.0127) & (0.0102) & (0.0149) & (0.0161) & (0.0160) & (0.0194) \\
     & & 2000 & 0.0143 & 0.0147 & 0.0229 & 0.0281 & 0.0251 & 0.0357 \\
     & & & (0.0084) & (0.0071) & (0.0112) & (0.0127) & (0.0124) & (0.0175) \\
     \\
     The average and & 40\% & 1000 & 0.0274 & 0.0241 & 0.0425 & 0.0489 & 0.0424 & 0.0503 \\
     standard deviation & & & (0.0185) & (0.0113) & (0.0190) & (0.0292) & (0.0231) & (0.0256) \\
     of the ICI at $t_{50}$ & & 2000 & 0.0193 & 0.0161 & 0.0292 & 0.0342 & 0.0293 & 0.0366 \\
     & & & (0.0108) & (0.0083) & (0.0125) & (0.0162) & (0.0130) & (0.0205) \\
     & 60\% & 1000 & 0.0276 & 0.0219 & 0.0342 & 0.0418 & 0.0399 & 0.0515 \\
     & & & (0.0163) & (0.0117) & (0.0184) & (0.0227) & (0.0213) & (0.0279) \\
     & & 2000 & 0.0182 & 0.0168 & 0.0247 & 0.0345 & 0.0295 & 0.0402 \\
     & & & (0.0116) & (0.0087) & (0.0133) & (0.0174) & (0.0154) & (0.0228) \\
     \\
     The average and & 40\% & 1000 & 0.0312 & 0.0265 & 0.0451 & 0.0507 & 0.0423 & 0.0521 \\
     standard deviation & & & (0.0189) & (0.0157) & (0.0216) & (0.0296) & (0.0240) & (0.0283) \\
     of the ICI at $t_{75}$ & & 2000 & 0.0226 & 0.0196 & 0.0326 & 0.0384 & 0.0271 & 0.0356 \\
     & & & (0.0128) & (0.0119) & (0.0155) & (0.0218) & (0.0161) & (0.0192) \\
     & 60\% & 1000 & 0.0332 & 0.0253 & 0.0440 & 0.0485 & 0.0451 & 0.0530 \\
     & & & (0.0191) & (0.0140) & (0.0203) & (0.0264) & (0.0228) & (0.0308) \\
     & & 2000 & 0.0248 & 0.0211 & 0.0291 & 0.0377 & 0.0303 & 0.0417 \\
     & & & (0.0147) & (0.0114) & (0.0138) & (0.0196) & (0.0169) & (0.0243) \\
     \bottomrule
\end{tabular}
\end{adjustbox}
}
\end{table} 

Table~\ref{t:webtable8} exhibits the prediction power of the two methods. The C-index values for DPLCM are comparable to those for DPLTM in all simulation settings. However, in terms of the calibration metric ICI, DPLCM is incapable of competing with DPLTM when the proportional hazards assumption is not satisfied for the underlying model, which implies that DPLTM generally enables more reliable predictions.

\subsection{Prediction results for the SEER lung cancer dataset}
\label{s:predapp}
We further validate the predictive ability of the DPLTM method by comparing it with other methods, including traditional methods LTM and PLATM, machine learning methods random survival forest (RSF) and survival support vector machine (SSVM), and the DNN-based method DPLCM on the SEER lung cancer dataset using the C-index and the ICI as evaluation metrics. Our method results in a C-index value of 0.7028, outperforming all other methods (LTM: 0.6582, PLATM: 0.6775, RSF: 0.6927, SSVM: 0.6699, DPLCM: 0.6974). 

For the time-dependent calibration metric ICI, it is computed at the $k$-th month post admission, $1\le k\le 80$, since the maximum of all observed event times is 83 months, and roughly 95\% of the times are no more than 80 months. The SSVM method is omitted from the comparison in terms of ICI, as it can only predict a risk score instead of a survival function for each individual, making it difficult to assess calibration. Web Figure~\ref{f:ICI} plots the ICI values across 80 months for all methods except SSVM. The results indicate that DPLTM provides the most accurate predictions for this dataset most of the time.

\begin{figure}[!h]
\centering
\includegraphics[width=0.8\textwidth]{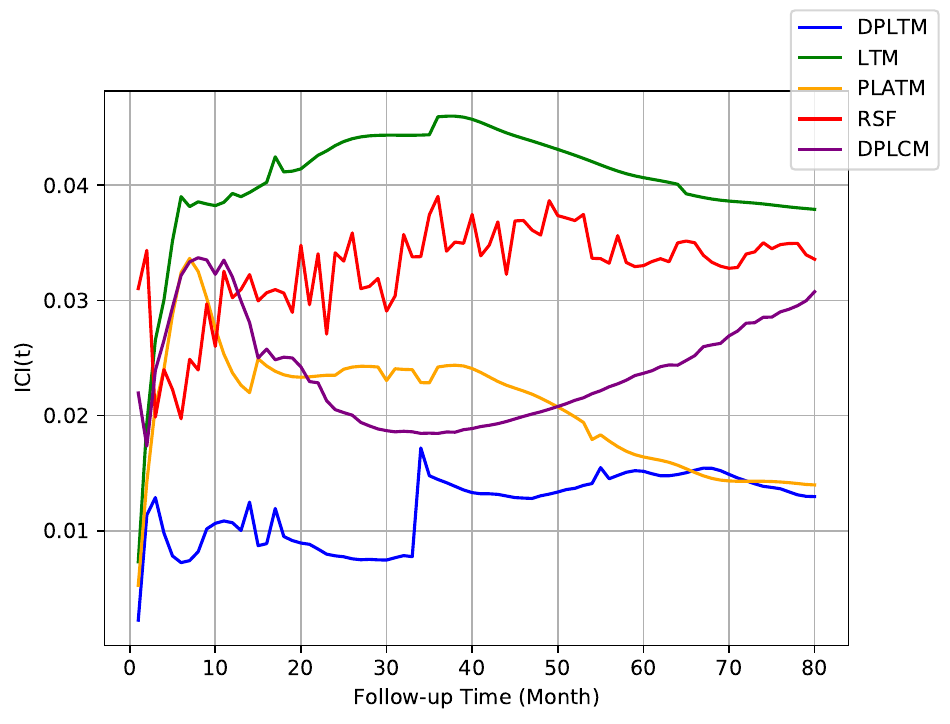}
\caption{The ICI values across 80 months on the SEER lung cancer dataset for all methods except SSVM.}
\label{f:ICI}
\end{figure}

\section{Further simulation studies}
\label{s:addexp}

\subsection{Hypothesis testing}

As in the real data application, we carry out a hypothesis test in simulation studies to investigate whether the linearly modelled covariates are significantly associated with the survival time, and how well the three methods can detect such relationships under finite sample situations. For simplicity, we only test the significance of $\beta_1$, i.e. the first component of the parameter vector. We consider the following testing problem:
\begin{align*}
H_0:\beta_1=0\quad\text{vs.}\quad H_1:\beta_1\neq0.
\end{align*}
The test statistic and the criterion for rejecting the null hypothesis $H_0$ are the same as in Section 5 of the main article.

The simulation setups are all identical to those in Section 4 of the main article, except that the true value of $\beta_1$, denoted by $\beta_{01}$, is set to be 0, 0.1, 0.3 and 1, respectively. The nominal significance level $\alpha$ is chosen as 0.05 standardly. When $\beta_{01}$ takes the value 0, we obtain the size of the test empirically as the proportion of the simulation runs where we falsely reject the null hypothesis. Otherwise, we calculate the empirical power of the test in a similar way. For convenience, we again only consider Case 3 of $g_0$.

Table \ref{t:webtable9} reports the empirically estimated size and power for the three methods. When data are generated according to $H_0$, i.e. $\beta_{01}$=0, the DPLTM method yields empirical sizes that are generally close to 0.05, and performs moderately better than LTM and PLATM. When $\beta_{01}$=0.1 or 0.3, the estimated power values for the DPLTM method are substantially higher than those for the other two methods, suggesting the effectiveness of our method in identifying the relationship. When $\beta_{01}$=1, all three methods lead to a rejection rate of 100\% in all situations considered, which is expected because the estimation bias is markedly outweighed by the large deviation from the null hypothesis.

\begin{table}
\centering
\caption{The empirical size and power of the hypothesis test for the DPLTM, LTM and PLATM methods.}
\label{t:webtable9}
\scalebox{0.6}{
\begin{adjustbox}{center}
\setlength{\tabcolsep}{7mm}
\begin{tabular}{ccccccccc}
     \toprule
     & & & \multicolumn{3}{c}{40\% censoring rate} & \multicolumn{3}{c}{60\% censoring rate} \\
     \cmidrule(r){4-6}\cmidrule(r){7-9}
     $\beta_{01}$ & $r$ & $n$ & DPLTM & LTM & PLATM & DPLTM & LTM & PLATM \\    
     \midrule
     0 & 0 & 1000 & 0.030 & 0.045 & 0.045 & 0.040 & 0.060 & 0.055 \\
     & & 2000 & 0.035 & 0.060 & 0.085 & 0.055 & 0.070 & 0.090 \\
     & 0.5 & 1000 & 0.045 & 0.050 & 0.055 & 0.035 & 0.040 & 0.060 \\
     & & 2000 & 0.045 & 0.070 & 0.080 & 0.050 & 0.075 & 0.075 \\
     & 1 & 1000 & 0.055 & 0.045 & 0.070 & 0.045 & 0.050 & 0.055 \\
     & & 2000 & 0.045 & 0.080 & 0.085 & 0.060 & 0.065 & 0.075 \\
     \\
     0.1 & 0 & 1000 & 0.190 & 0.115 & 0.115 & 0.140 & 0.115 & 0.125 \\
     & & 2000 & 0.305 & 0.160 & 0.140 & 0.205 & 0.160 & 0.165 \\
     & 0.5 & 1000 & 0.180 & 0.125 & 0.115 & 0.100 & 0.090 & 0.095 \\
     & & 2000 & 0.205 & 0.140 & 0.135 & 0.175 & 0.115 & 0.125 \\
     & 1 & 1000 & 0.140 & 0.120 & 0.110 & 0.130 & 0.100 & 0.125 \\
     & & 2000 & 0.150 & 0.115 & 0.120 & 0.145 & 0.115 & 0.120 \\
     \\
     0.3 & 0 & 1000 & 0.875 & 0.520 & 0.570 & 0.710 & 0.470 & 0.545 \\
     & & 2000 & 1.000 & 0.830 & 0.835 & 0.915 & 0.745 & 0.735 \\
     & 0.5 & 1000 & 0.740 & 0.520 & 0.525 & 0.550 & 0.425 & 0.450 \\
     & & 2000 & 0.970 & 0.790 & 0.800 & 0.865 & 0.695 & 0.695 \\
     & 1 & 1000 & 0.625 & 0.470 & 0.465 & 0.495 & 0.390 & 0.445 \\
     & & 2000 & 0.870 & 0.740 & 0.745 & 0.780 & 0.640 & 0.665 \\
     \\
     1 & 0 & 1000 & 1.000 & 1.000 & 1.000 & 1.000 & 1.000 & 1.000 \\
     & & 2000 & 1.000 & 1.000 & 1.000 & 1.000 & 1.000 & 1.000 \\
     & 0.5 & 1000 & 1.000 & 1.000 & 1.000 & 1.000 & 1.000 & 1.000 \\
     & & 2000 & 1.000 & 1.000 & 1.000 & 1.000 & 1.000 & 1.000 \\
     & 1 & 1000 & 1.000 & 1.000 & 1.000 & 1.000 & 1.000 & 1.000 \\
     & & 2000 & 1.000 & 1.000 & 1.000 & 1.000 & 1.000 & 1.000 \\
     \bottomrule
\end{tabular}
\end{adjustbox}
}
\end{table}

\subsection{Sensitivity analysis}

\begin{table}
\centering
\caption{The bias and standard deviation of $\widehat{\beta}_1$, and the average and standard deviation of the C-index in all three scenarios considered in the sensitivity analysis.}
\label{t:webtable10}
\scalebox{0.6}{
\begin{adjustbox}{center}
\setlength{\tabcolsep}{7mm}
\begin{tabular}{ccccccccc}
     \toprule
     & & & \multicolumn{3}{c}{40\% censoring rate} & \multicolumn{3}{c}{60\% censoring rate}\\
     \cmidrule(r){4-6}\cmidrule(r){7-9}
     & $r$ & $n$ & Scenario 1 & Scenario 2 & Scenario 3 & Scenario 1 & Scenario 2 & Scenario 3 \\    
     \midrule
     The bias and standard & 0 & 1000 & -0.0395 & -0.1420 & -0.3245 & -0.0474 & -0.1548 & -0.2769 \\
     deviation of $\widehat{\beta}_1$ & & & (0.1012) & (0.1020) & (0.0954) & (0.1239) & (0.1236) & (0.1232)  \\
     & & 2000 & -0.0322 & -0.1259 & -0.3332 & -0.0286 & -0.1387 & -0.2877 \\
     & & & (0.0683) & (0.0722) & (0.0701) & (0.0833) & (0.0867) & (0.0902) \\
     & 0.5 & 1000 & -0.0457 & -0.1272 & -0.2288 & -0.0586 & -0.1427 & -0.2016 \\
     & & & (0.1293) & (0.1284) & (0.1186) & (0.1577) & (0.1582) & (0.1435) \\
     & & 2000 & -0.0350 & -0.1175 & -0.2369 & -0.0478 & -0.1297 & -0.2169 \\
     & & & (0.0896) & (0.0884) & (0.0879) & (0.1022) & (0.1046) & (0.1053) \\
     & 1 & 1000 & -0.0570 & -0.1093 & -0.1834 & -0.0463 & -0.1326 & -0.1753 \\
     & & & (0.1544) & (0.1555) & (0.1417) & (0.1764) & (0.1746) & (0.1588) \\
     & & 2000 & -0.0344 & -0.0988 & -0.1921 & -0.0378 & -0.1174 & -0.1897 \\
     & & & (0.1012) & (0.0997) & (0.1001) & (0.1138) & (0.1164) & (0.1161) \\
     \\

     The average and & 0 & 1000 & 0.8020 & 0.7825 & 0.7251 & 0.8023 & 0.7809 & 0.7358 \\
     standard deviation of& & & (0.0235) & (0.0221) & (0.0222) & (0.0304) & (0.0257) & (0.0267) \\
     the C-index & & 2000 & 0.8096 & 0.7913 & 0.7298 & 0.8122 & 0.7932 & 0.7422 \\
     & & & (0.0147) & (0.0135) & (0.0161) & (0.0170) & (0.0179) & (0.0187) \\
     & 0.5 & 1000 & 0.7793 & 0.7613 & 0.7081 & 0.7785 & 0.7593 & 0.7199 \\
     & & & (0.0237) & (0.0223) & (0.0246) & (0.0280) & (0.0284) & (0.0278) \\
     & & 2000 & 0.7878 & 0.7711 & 0.7150 & 0.7928 & 0.7758 & 0.7269 \\
     & & & (0.0171) & (0.0154) & (0.0161) & (0.0201) & (0.0179) & (0.0196) \\
     & 1 & 1000 & 0.7547 & 0.7393 & 0.6926 & 0.7586 & 0.7420 & 0.7051 \\
     & & & (0.0236) & (0.0242) & (0.0255) & (0.0294) & (0.0286) & (0.0294) \\
     & & 2000 & 0.7657 & 0.7512 & 0.7002 & 0.7741 & 0.7746 & 0.7123 \\
     & & & (0.0166) & (0.0163) & (0.0171) & (0.0197) & (0.0187) & (0.0205) \\
     \bottomrule
\end{tabular}
\end{adjustbox}
}
\end{table}

We perform a sensitivity analysis on the effect of misspecifying the partially linear structure on model performance. The aim of the study is to explore the importance of properly determining the linear and nonlinear parts of the model. We consider the following three scenarios, with all other simulation setups kept unchanged:

\begin{itemize}
\item \textbf{Scenario 1}: $\vz$ is linearly modelled and $\vx$ is nonparametrically modelled,
\item \textbf{Scenario 2}: $Z_1$ is linearly modelled, while $Z_2$ and $\vx$ are nonparametrically modelled,
\item \textbf{Scenario 3}: $\vz$ and $X_1$ are linearly modelled, while the remaining four components of $\vx$ are nonparametrically modelled.
\end{itemize}
Scenario 1 represents the correctly specified model. In Scenario 2, one of the covariates with linear effects is nonlinearly modelled, while the exact opposite happens in Scenario 3. In all scenarios, we obtain the bias and standard deviation of $\widehat{\beta}_1$, and the average and standard deviation of the C-index over 200 simulation runs to evaluate the estimation accuracy and the predictive power, respectively. Analogously, only Case 3 of $g_0$ is involved, and the deep neural network is employed for nonparametric modelling.

It can be inferred from Table \ref{t:webtable10} which summarizes the results that, the model performance under Scenario 1 is merely higher than that under Scenario 2, and is much superior to that under Scenario 3. This points to the conclusion that the correct specification is always supposed to be given the first priority, and in case it is uncertain which covariates linearly affect the response (i.e. the survival time), we can consider inputting all covariates into the deep neural network to achieve relatively better performance.

\bibliographystyle{chicago}
\bibliography{bib}

\end{document}